\newtheorem{dummy}{}[section]
\newtheorem{thm}[dummy]{Theorem}
\newtheorem{prop}[dummy]{Proposition}
\newtheorem{pr}[dummy]{Proposition}
\newtheorem{lemma}[dummy]{Lemma}
\newtheorem{cor}[dummy]{Corollary}
\newtheorem{construction}[dummy]{Construction/Notation}
\theoremstyle{definition}
\newtheorem{definition}[dummy]{Definition}
\newtheorem{nn}[dummy]{Notation}
\theoremstyle{remark}
\newtheorem{rmk}[dummy]{Remark}
\newtheorem{obs}[dummy]{Observation}
\newcommand{\CM}{{\mathcal{M}}}
\newcommand{\CMm}{\mathcal{M}^{\text{main}}}
\newcommand{\oCM}{{\overline{\mathcal{M}}}}
\newcommand{\oCMm}{\overline{\mathcal{M}}^{\text{main}}}
\newcommand{\oCMr}{{{\overline{\mathcal{M}}}^{\frac{1}{r}}_{0,k,\{a_1\ldots,a_l\}}}}
\newcommand{\CMr}{{{{\mathcal{M}}}^{\frac{1}{r}}_{0,k,\{a_1\ldots,a_l\}}}}
\newcommand{\Gammar}{{\Gamma_{0,k,\{a_1\ldots,a_l\}}}}
\newcommand{\CL}{{\mathbb{L}}}
\newcommand{\alt}{{\text{alt}}}
\newcommand{\detach}{{\text{detach} }}
\newcommand{\CB}{{\mathcal{B}}}
\newcommand{\cS}{\mathcal{S}}
\newcommand{\cJ}{\mathcal{J}}
\newcommand{\cW}{\mathcal{W}}
\newcommand{\tw}{\text{tw}}
\newcommand{\twE}{\text{tw}}
\newcommand{\NNN}{{n}}
\newcommand{\N}{\mathbb{N}}
\newcommand{\TTT}{{\mathcal{T}}}
\newcommand{\RRR}{{\mathcal{R}}}
\newcommand{\Conn}{{{\text{Conn}}}}
\newcommand{\Z}{\ensuremath{\mathbb{Z}}}
\newcommand{\C}{\ensuremath{\mathbb{C}}}
\newcommand{\R}{\ensuremath{\mathbb{R}}}
\newcommand{\M}{\ensuremath{\overline{\mathcal{M}}}}
\renewcommand{\O}{\ensuremath{\mathcal{O}}}
\renewcommand{\d}{\ensuremath{\partial}}
\renewcommand{\subset}{\ensuremath{\subseteq}}
\newcommand{\<}{\left<}
\renewcommand{\>}{\right>}
\DeclareMathOperator{\ior}{int}
\numberwithin{equation}{section}
\begin{document}

\title{Open $r$-spin theory I: Foundations}

\author{Alexandr Buryak}
\address{A.~Buryak:\newline School of Mathematics, University of Leeds, Leeds, LS2 9JT, United Kingdom}\email{a.buryak@leeds.ac.uk}

\author{Emily Clader}
\address{E.~Clader:\newline San Francisco State University, San Francisco, CA 94132-1722, USA}
\email{eclader@sfsu.edu}

\author{Ran J. Tessler}
\address{R.~J.~Tessler, Incumbent of the Lilian and George Lyttle Career Development Chair:\newline Department of Mathematics, Weizmann Institute of Science, POB 26, Rehovot 7610001, Israel}
\email{ran.tessler@weizmann.ac.il}

\begin{abstract}
We lay the foundation for a version of $r$-spin theory in genus zero for Riemann surfaces with boundary. In particular, we define the notion of $r$-spin disks, their moduli space, and the Witten bundle, we show that the moduli space is a compact smooth orientable orbifold with corners, and we prove that the Witten bundle is canonically relatively oriented relative to the moduli space.  In the sequel to this paper, we use these constructions to define open $r$-spin intersection theory and relate it to the Gelfand--Dickey hierarchy, thus providing an analogue of Witten's $r$-spin conjecture in the open setting.
\end{abstract}

\maketitle

\setcounter{tocdepth}{1}
%\tableofcontents
\section{Introduction}
On a smooth marked curve $(C;z_1, \ldots, z_n)$, an $r$-spin structure is a line bundle $S$ together with an isomorphism
\[S^{\otimes r} \cong \omega_{C}\left(-\sum_{i=1}^n a_i[z_i]\right),\]
where $a_i \in \{0,1,\ldots, r-1\}$.  There is a natural compactification $\M_{g,\{a_1, \ldots, a_n\}}^{1/r}$ of the moduli space of $r$-spin structures on smooth curves, and this space admits a virtual fundamental class $c_W$ known as Witten's class. In genus zero, Witten's class is defined by
\begin{gather}\label{eq:Witten's class}
c_W:= e((R^1\pi_*\mathcal{S})^{\vee}),
\end{gather}
where $\pi: \mathcal{C} \rightarrow \M_{0,\{a_1, \ldots, a_n\}}^{1/r}$ is the universal curve and $\mathcal{S}$ the universal $r$-spin structure.  In higher genus, on the other hand, $R^1\pi_*\mathcal{S}$ may not be a vector bundle, and there are several (all much more intricate) versions of the definition of Witten's class \cite{PV,ChiodoWitten,Moc06,FJR,CLL}.

Given any of these definitions, one defines the {\it closed $r$-spin intersection numbers} by
\begin{gather}\label{eq:closed r-spin}
\<\tau^{a_1}_{d_1}\cdots\tau^{a_n}_{d_n}\>^{\frac{1}{r},c}_g:=r^{1-g}\int_{\M^{1/r}_{g,\{a_1, \ldots, a_n\}}} \hspace{-1cm} c_W \cap \psi_1^{d_1} \cdots \psi_n^{d_n},
\end{gather}
where $\psi_1, \ldots, \psi_n \in H^2(\M^{1/r}_{g,\{a_1, \ldots, a_n\}})$ are the first Chern classes of the cotangent line bundles at the $n$ marked points.  This theory has received a great deal of attention in recent years; for example, it led to a proof of a conjecturally complete set of tautological relations on $\M_{g,n}$ \cite{PPZ}, and it is a special case of Fan--Jarvis--Ruan--Witten theory \cite{FJR} as well as the gauged linear sigma model \cite{FJRGLSM}.  For our purposes, perhaps the most interesting feature of $r$-spin theory was proven by Faber--Shadrin--Zvonkine \cite{FSZ10}: after a simple change of variables, the generating function of the closed $r$-spin intersection numbers becomes a tau-function of the $r$th Gelfand--Dickey hierarchy.  This statement generalizes Witten's celebrated conjecture (proven by Kontsevich) regarding the generating function of $\psi$-integrals on $\M_{g,n}$.

A different direction in which the intersection theory of $\M_{g,n}$ can be generalized is the consideration of Riemann surfaces with boundary.  This work was undertaken by Pandharipande, Solomon, and the third author in \cite{PST14}, in which a moduli space $\M_{0,k,l}$ was constructed that parameterizes tuples $(\Sigma; x_1, \ldots, x_k; z_1, \ldots, z_l)$ where $\Sigma$ is a stable disk, $x_i \in \d\Sigma$ are boundary marked points, and $z_j \in \Sigma \setminus \d\Sigma$ are internal marked points.  Furthermore, intersection numbers on $\M_{0,k,l}$ were constructed, which can be viewed as integrals of $\psi$-classes at the internal marked points. This construction was extended in \cite{ST1} to all genera, yielding a generating function $F^o$ of open intersection numbers. In order to define the extension, Solomon and the third author introduced graded $2-$spin structures and proved that the moduli of $2-$spin surfaces with boundary is canonically oriented. The open analogue of Witten's conjecture was proved by the first and third authors in \cite{BT17,Bur16,Tes15}, relating $\exp(F^o)$ to the wave function of the KdV hierarchy.

Combining $r$-spin theory with open theory, one might hope to define {\it open $r$-spin theory} and generalize Witten's conjecture to this setting.  In order to do so even in genus zero, though, one first must define an appropriate open $r$-spin moduli space~$\M_{0,k,\{a_1, \ldots, a_l\}}^{1/r}$ and an open analogue of Witten's bundle $(R^1\pi_*\mathcal{S})^{\vee}$.  We carry out the construction of such a moduli space in this paper, referring to its objects as graded $r$-spin disks.

\subsection{The moduli space of graded $r$-spin disks}

Graded $r$-spin disks are defined, roughly, as follows.  Let $C$ be an orbifold curve equipped with an involution  $\phi: C \rightarrow C$ that realizes the coarse underlying curve $|C|$ as a union of two Riemann surfaces $\Sigma$ and $\overline{\Sigma}$ (where $\overline{\Sigma}$ is obtained from $\Sigma$ by reversing the complex structure) glued along their common boundary:
\[|C| = \Sigma \cup_{\d\Sigma} \overline{\Sigma}.\]
Let $z_1, \ldots, z_l \in \Sigma \setminus \d\Sigma$ be a collection of internal marked points, let $\overline{z}_i:= \phi(z_i) \in \overline{\Sigma}$ be their conjugates, and let $x_1, \ldots, x_k \in \d\Sigma$ be a collection of boundary marked points.  On such a curve, a {\it graded $r$-spin structure} with {\it twists} $a_1, \ldots, a_l$ is an orbifold line bundle $S$ on $C$ together with an isomorphism
\[|S|^{\otimes r} \cong \omega_{|C|} \otimes \O\left(-\sum_{i=1}^l a_i[z_i] - \sum_{i=1}^l a_i[\overline{z}_i] - \sum_{j=1}^k (r-2)[x_j]\right)\]
on the coarse underlying curve $|C|$, an involution $\widetilde{\phi}: S \rightarrow S$ lifting $\phi$, and a certain orientation of $\left(S|_{\d\Sigma \setminus \{x_j\}}\right)^{\widetilde\phi}$ that we refer to as a {\it grading}.  In what follows, we prove that there exists a moduli space $\M_{0,k,\{a_1, \ldots, a_l\}}^{1/r}$ of graded $r$-spin disks with twists $a_1, \ldots, a_l$, and that this moduli space is a compact, orientable, effective, smooth orbifold with corners.

On $\M_{0,k,\{a_1, \ldots, a_l\}}^{1/r}$, there is an {\it open Witten bundle}, defined roughly as
\[
\mathcal{W}: = (R^0\pi_*(\mathcal{S}^{\vee} \otimes \omega_{\pi}))_+ = (R^1\pi_*\mathcal{S})^{\vee}_-,
\]
where ``$+$" denotes the space of $\widetilde{\phi}$-invariant sections and ``$-$" the space of $\widetilde{\phi}$-anti-invariant sections.
There are also cotangent line bundles $\mathbb{L}_1 \ldots, \mathbb{L}_l$ at the internal marked points.  We define these bundles carefully below and explore their behavior under forgetful morphisms and restriction to boundary strata.

It is straightforward to show that the cotangent line bundles have canonical complex orientations. The open Witten bundle, on the other hand, is a real vector bundle and hence it is not clear that it is orientable at all. One of the main results of this paper is that not only is $\mathcal{W}$ orientable, but it carries a \emph{canonical relative orientation} relative to the moduli space; the grading plays a central role in the construction of this canonical orientation. %\footnote{In higher genus, the grading is even more essential for orientability; see the proof of orientability of the moduli space of graded $2$-spin surfaces in \cite{Tes15}, from which one can see that working with spin surfaces without grading yields a non-orientable moduli space.  Furthermore, in the sequel to this work, we use the grading in an essential way when constructing the canonical boundary conditions needed to define open $r$-spin intersection numbers.}  
We also analyze the behavior of the canonical orientation under restriction to boundary strata.

\subsection{Companion works}

This paper lays the foundations for the sequel \cite{BCT2}, in which we use the construction of $\M_{0,k,\{a_1, \ldots, a_l\}}^{1/r}$ and its associated bundles to define genus-zero open $r$-spin intersection numbers and prove the open $r$-spin version of Witten's conjecture.  In particular, in \cite{BCT2}, we calculate all open $r$-spin numbers and prove an explicit relationship between their generating function and the genus-zero part of the Gelfand--Dickey wave function.  In addition to verifying the generalization of Witten's conjecture in genus zero, this leads to a conjecture for the higher-genus intersection numbers.

The content of \cite{BCT2} also illuminates an intriguing connection between open $r$-spin theory and an extension of closed $r$-spin theory, in which one allows a single marked point with twist $-1$.  We define this ``closed extended $r$-spin theory" carefully in the companion paper \cite{BCT_Closed_Extended} to this work, and in \cite{BCT2}, we make the correspondence between the two theories precise.

\subsection{Plan of the paper}

The structure of the current paper is as follows.  In Section \ref{sec:objects}, we define graded $r$-spin disks, and in Section \ref{sec:moduli}, we describe their moduli space, its orbifold structure, and its orientation.  Section \ref{sec:bundles} contains the definition of the cotangent line bundles $\CL_i$ and the open Witten bundle $\mathcal{W}$, as well as an investigation of the behavior of these bundles under certain key morphisms.  Finally, in Section~\ref{sec:or}, we establish the canonical relative orientation of $\cW$ and analyze its behavior under the relevant morphisms.

\subsection{Acknowledgements}

The authors would like to thank R. Pandharipande, D. Ross, J. Solomon, and E. Witten for interesting discussions related to open $r$-spin theory. The authors are grateful to J. Gu\'er\'e for enlightening discussions concerning the Ramond sector, and to A. Netser-Zernik for his enriching explanations about orbifolds with corners.

A.~B. received funding from the European Union's Horizon 2020 research and innovation programme under the Marie Sk\l odowska-Curie grant agreement No 797635 and was also supported by grant ERC-2012-AdG-320368-MCSK in the group of Rahul Pandharipande at ETH Zurich and grant RFFI-16-01-00409. E.~C. was supported by NSF DMS grant 1810969.  R.T. was supported by a research grant from the Center for New Scientists of Weizmann Institute, by Dr. Max R\"ossler, the Walter Haefner Foundation, and the ETH Z\"urich Foundation, and by the ISF (grant No. 335/19).

\section{Graded $r$-spin disks}
\label{sec:objects}
We denote by $[n]$ the set $\{1,2, \ldots, n\}$. Write $2^\N_{*}=2^{\N}\setminus\{\emptyset\}$, where $\N=\Z_{\ge 1}$. Throughout what follows, a {\it marking} of a set $A$ is a function
\[m: A \rightarrow \{0\}\cup2^\N_{*}\]
such that, for all distinct $a,a' \notin m^{-1}(0)$, we have $m(a) \cap m(a') = \emptyset$.  A marking is \emph{strict} if $0$ is not in its image.

Given a marking, we identify elements of $A\setminus m^{-1}(0)$ with their images in $2^\N_{*}$, and if the image is a singleton, we identify it with an element of $\N$.  Such functions are used in what follows to label the marked points on a curve; the possibility of marking some points by $0$ or with a set is desired to handle marked points that arise via normalization of a nodal curve.

\subsection{Smooth $r$-spin surfaces}
\label{subsec:smooth}

Recall that an {\it orbifold Riemann surface} is a smooth, proper, possibly disconnected, one-dimensional Deligne--Mumford stack over $\C$.  We sometimes refer to such a surface as {\it closed}, to distinguish it from the Riemann surfaces considered below that may have boundary.

A (smooth) {\it marked orbifold Riemann surface with boundary} is a tuple
\[(C, \phi, \Sigma, \{z_i\}_{i \in I}, \{x_j\}_{j \in B}, m^I, m^B),\]
in which:
\begin{enumerate}[(i)]
\item $C$ is a (closed) orbifold Riemann surface;
\item $\phi: C \rightarrow C$ is an anti-holomorphic involution ({\it conjugation}) that realizes the coarse underlying Riemann surface $|C|$ topologically as two Riemann surfaces $\Sigma$ and $\bar{\Sigma}$ (where $\bar{\Sigma}$ is obtained from $\Sigma$ by reversing the complex structure) glued along their common boundary $\d\Sigma=\d\bar\Sigma= \text{Fix}(|\phi|)$:
\[|C| = \Sigma \cup_{\d \Sigma} \overline{\Sigma};\]
\item $z_i \in C$ are a collection of distinct points (the {\it internal marked points}) labeled by the set $I$, whose images in $|C|$ lie in $\Sigma \setminus \d\Sigma$, with {\it conjugate marked points} $\bar{z}_i:= \phi(z_i)$;
\item $x_j \in \text{Fix}(\phi)$ are a collection of distinct points (the {\it boundary marked points}) labeled by the set $B$, whose images in $|C|$ lie in $\d\Sigma$;
\item the only nontrivial isotropy of $C$ occurs at the (internal, conjugate, and boundary) marked points;
\item $m^I:I\to\{0\}\cup 2^\N_{*}$ and $m^B:B\to\{0\}\cup 2^\N_{*}$ are maps such that, for any connected component $C'$ of $C$ with marked points labeled by $I'\subseteq I$ and $B'\subseteq B$, the restrictions $m^I|_{I'}$ and $m^B|_{B'}$ are markings, and whenever $C'\cap\phi(C')\neq\emptyset$, the marking $m^I|_{I'}$ is strict.
\end{enumerate}
A marked orbifold Riemann surface $C$ is \emph{stable} if each genus-zero connected component has at least three marked points (including conjugate marked points) and each genus-one connected component has at least one marked point.

We observe that the choice of a preferred half $\Sigma \subset |C|$, which is part of the data of an orbifold Riemann surface with boundary, endows $\d\Sigma$ with a canonical orientation.  In what follows, we typically suppress $\phi$ from the notation and write $\overline{x}$ for $\phi(x)$ when $x$ lies in the preimage of $\Sigma$ in $C$.

An {\it isomorphism} of marked orbifold Riemann surfaces with boundary
\[(C_1, \phi_1, \Sigma_1, \{z_{1,i}\}_{i \in I}, \{x_{1,j}\}_{j \in B}, m_1^{I_1}, m_1^{B_1}) \cong (C_2, \phi_2, \Sigma_2, \{z_{2,i}\}_{i \in I_2}, \{x_{2,j}\}_{j \in B_2}, m_2^{I_2}, m_2^{B_2})\]
consists of an isomorphism $s: C_1\rightarrow C_2$ and bijections $f^I: I_1 \rightarrow I_2,~f^B: B_1 \rightarrow B_2$ such that
\begin{enumerate}[(i)]
\item $s \circ \phi_1 = \phi_2 \circ s$,
\item $s(\Sigma_1) = \Sigma_2$,
\item $s(z_{1,j}) = z_{2,f^B(j)}$ for all $j \in B_1$ and $s(x_{1,i}) = x_{2,f^I(i)}$ for all $i \in I_1$,
\item $m_1^{I_1} = m_2^{I_2} \circ f^I$ and $m_1^{B_1} = m_2^{B_2} \circ f^B$.
\end{enumerate}

Let $C$ be a marked orbifold Riemann surface with boundary in which every marked point has isotropy group $\Z/r\Z$.  We denote
\begin{equation}
\label{eq:omegalog}
\omega_{C, \log} := \rho^*\omega_{|C|,\log} =  \omega_{C} \left(\sum_{i \in I} [z_i] + \sum_{i \in I} [\overline{z_i}] +  \sum_{j \in B} [x_j] \right),
\end{equation}
where $\rho: C \rightarrow |C|$ is the morphism to the coarse underlying Riemann surface and $[z_i]$,  $[\overline{z}_i]$, and $[x_j]$ denote the degree-$1/r$ orbifold divisors of the marked points.  An {\it $r$-spin structure} on $C$ is an orbifold line bundle $L$ together with an isomorphism
\begin{equation}
\label{Lr}
\tau: L^{\otimes r} \cong \omega_{C, \log}
\end{equation}
and an involution $\widetilde{\phi}: L \rightarrow L$ lifting $\phi$ such that $\widetilde{\phi}^{\otimes r}$ agrees under $\tau$ with the involution on $\omega_{C, \log}$ induced by $\phi$.  We denote by $\text{mult}_p(L)$ the {\it multiplicity} of $L$ at a point $p$, which is defined as the integer $m \in \{0,1,\ldots, r-1\}$ such that the local structure of the total space of $L$ near $p$ is $[\C^2/(\Z/r\Z)]$ with action
\[\zeta\cdot(x,v) = (\zeta x,\zeta^mv)\]
by the canonical generator $\zeta$ of the isotropy group.

Associated to an $r$-spin structure $L$ on $C$, there is a unique {\it twisted $r$-spin structure} $S$ on $C$, defined as the complex line bundle
\begin{equation}
\label{eq:S}
S := L \otimes \O\left(-\sum_{i \;|\;\text{mult}_{z_i}(L) = 0} r[z_i] - \sum_{i \;|\; \text{mult}_{\overline{z}_i}(L) = 0} r[\overline{z_i}] - \sum_{j \;|\; \text{mult}_{x_j}(L) = 0} r[x_j]\right).
\end{equation}
This bundle satisfies
\begin{equation}\label{eq:Snew}
S^{\otimes r} \cong \rho^*\left(\omega_{|C|, \log}\otimes \O\left(-\sum_{i \;|\;\text{mult}_{z_i}(L) = 0} r[z_i] - \sum_{i \;|\; \text{mult}_{\overline{z}_i}(L) = 0} r[\overline{z_i}] - \sum_{j \;| \; \text{mult}_{x_j}(L) = 0} r[x_j]\right)\right),
\end{equation}
in which, on the right-hand side, $[z_i]$, $[\overline{z}_i]$, and $[x_j]$ now denote the corresponding divisors on $|C|$.  It follows that the coarse underlying bundle $|S|:= \rho_*S$ satisfies
\begin{equation}
\label{Sr}
|S|^{\otimes r} \cong \omega_{|C|} \otimes \O\left(-\sum_{i \in I} a_i [z_i] -\sum_{i \in I} a_i [\overline{z_i}] - \sum_{j \in B} b_j [x_j]\right)
\end{equation}
with $a_i, b_j \in \{0,1,\ldots, r-1\}$
%\[a_i, b_j \in \{0,1,\ldots, r-1\}\]
defined by the requirement that
\[a_i \equiv \text{mult}_{z_i}(L) -1\mod r, \;\;\;\;\;\; b_j \equiv \text{mult}_{x_j}(L)-1\mod r.\]
We refer to the numbers $a_i$ and $b_j$ as internal and boundary {\it twists}.  In fact, pushforward under $\rho$ defines an equivalence of categories between bundles $S$ satisfying \eqref{eq:Snew} and bundles $|S|$ satisfying \eqref{Sr} for some choice of $a_i$ and $b_j$; hence, in particular, the data of $|S|$ and its twists is equivalent to the data of an $r$-spin structure on $C$.  See the appendix of \cite{CR} for a more detailed discussion of this equivalence.

Let
\begin{equation}
\label{eq:J}
J:= S^{\vee} \otimes \omega_{C},
\end{equation}
which inherits an involution that we also denote by $\widetilde{\phi}$.  Using the fact that
\[\omega_C = \rho^*\omega_{|C|} \otimes \O\left(\sum_{i \in I} (r-1)[z_i]  + \sum_{i \in I} (r-1)[\overline{z_i}] + \sum_{j \in B} (r-1)[x_j]\right),\]
which follows directly from \eqref{eq:omegalog}, one shows easily that
\[J = \rho^*\bigg(|S|^{\vee}\otimes \omega_{|C|}\bigg) \otimes \O\left(\sum_{i \in I} \text{mult}_{z_i}(J)[z_i] + \sum_{i \in I} \text{mult}_{\overline{z_i}}(J)[\overline{z_i}] + \sum_{j \in B} \text{mult}_{x_j}(J)[x_j]\right),\]
and hence
\begin{equation}
\label{eq:|J|}
|J| = |S|^{\vee} \otimes \omega_{|C|}.
\end{equation}
In particular, since $|J|$ is a bundle on a non-orbifold curve and hence has integral degree, the following observation is immediate.

\begin{obs}
\label{obs:open_rank1}
The twists $a_i, b_j$ for a twisted $r$-spin structure on a smooth marked orbifold Riemann surface with boundary satisfy the following congruence condition:
\begin{equation}\label{eq:open_rank1}
e:=\frac{2\sum a_i + \sum b_j+(g-1)(r-2)}{r}\in \Z,
\end{equation}
where $g$ is the genus of $C$.  (In case $C$ is disconnected, we define $g=\sum_i g(C_i)-l+1$, where $C_1, \ldots, C_l$ are the connected components of $C$.)
\end{obs}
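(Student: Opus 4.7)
The plan is to exploit the hint already given in the paragraph preceding the observation: the bundle $|J|=|S|^{\vee}\otimes\omega_{|C|}$ lives on the honest (non-orbifold) nodal-free curve $|C|$, so its degree is an integer, and this integrality translates directly into the congruence \eqref{eq:open_rank1}. Concretely, I would compute $\deg|J|$ two different ways and compare.

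First I would take degrees in \eqref{Sr}. Writing $d:=\deg|S|$, and using $\deg\omega_{|C|}=2g-2$ (which continues to hold for a disconnected $|C|$ with the genus convention $g=\sum g(C_i)-l+1$ given in the statement), one gets
\[
rd \;=\; 2g-2 \;-\; 2\sum_{i\in I} a_i \;-\; \sum_{j\in B} b_j.
\]
Next, from \eqref{eq:|J|} I obtain $\deg|J|=2g-2-d$, so
\[
r\deg|J| \;=\; r(2g-2)-rd \;=\; (r-1)(2g-2)\;+\;2\sum_{i\in I} a_i\;+\;\sum_{j\in B} b_j.
\]

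Now I would simply rearrange. Using the identity $(r-1)(2g-2)-(r-2)(g-1)=r(g-1)$, the previous line becomes
\[
r\bigl(\deg|J|-(g-1)\bigr) \;=\; 2\sum_{i\in I} a_i\;+\;\sum_{j\in B} b_j\;+\;(g-1)(r-2),
\]
so that $e=\deg|J|-(g-1)\in\Z$, as claimed.

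There is no real obstacle here: the only point to be careful about is the disconnected case, where one must check that $\deg\omega_{|C|}=2g-2$ still holds under the convention $g=\sum g(C_i)-l+1$ (indeed $\sum(2g(C_i)-2)=2\sum g(C_i)-2l=2g-2$), and that the degrees of line bundles on a possibly disconnected proper curve are integers. Both are standard, and the rest of the argument is a one-line manipulation of the defining relation \eqref{Sr} using \eqref{eq:|J|}.
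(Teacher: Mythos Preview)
Your argument is correct and is exactly the approach the paper indicates: the text immediately before the observation notes that $|J|$ is a line bundle on the non-orbifold curve $|C|$ and hence has integral degree, and you have simply spelled out the degree computation showing $e=\deg|J|-(g-1)$. The only addition you make is the check that $\deg\omega_{|C|}=2g-2$ persists in the disconnected case under the stated genus convention, which is a helpful clarification.
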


All of the above also works in the more familiar setting of closed marked orbifold Riemann surfaces $(C, \{z_i\}_{i \in I}, m^I)$.  However, in this case, it is important in what follows to allow the possibility of limited $-1$ twists; see Observation \ref{obs:twist-1}.  Thus, we define a {\it closed twisted $r$-spin structure} as a closed marked orbifold Riemann surface equipped with an orbifold line bundle $S$ satisfying
\[S^{\otimes r} \cong \rho^*\left(\omega_{|C|, \log}\otimes \O\left(-\sum_{i \in I_0} r[z_i]\right)\right),\]
where $I_0\subseteq I$ is a subset of the marked points such that $\text{mult}_{z_i}(S) = 0$ for all $i \in I_0$ and each connected component of $C$ contains at most one marked such point $z_i$ with $i \notin I_0$.  In this case, one has
\[|S|^{\otimes r} \cong \omega_{|C|} \otimes \O\left(-\sum_{i \in I} a_i[z_i]\right)\]
with $a_i \in \{-1,0, \ldots, r-1\}$ and $a_i = -1$ for at most one marked point $z_i$ in each connected component of $C$.  Analogously to Observation~\ref{obs:open_rank1}, we have:

\begin{obs}\label{obs:closed}
The twists $a_i$ for a closed twisted $r$-spin structure satisfy
\begin{equation}\label{eq:close_rank1}
\frac{\sum a_i +(g-1)(r-2)}{r}\in \Z.
\end{equation}
\end{obs}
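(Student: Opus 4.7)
The proof will parallel the sketch sketched for Observation \ref{obs:open_rank1}. The plan is to compute the degree of the coarse underlying line bundle $|S|$ on $|C|$ in two ways: once by noting it must be an integer (since $|S|$ is a genuine line bundle on a non-orbifold curve), and once by using the isomorphism
\[|S|^{\otimes r} \cong \omega_{|C|} \otimes \O\left(-\sum_{i \in I} a_i[z_i]\right).\]

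Concretely, I would first take degrees of both sides of this isomorphism, obtaining
\[r \deg(|S|) = \deg(\omega_{|C|}) - \sum_{i \in I} a_i.\]
Using the convention $g = \sum_j g(C_j) - l + 1$ for a disconnected curve (inherited from the open case, and consistent with summing $2g(C_j)-2$ over components), we have $\deg(\omega_{|C|}) = 2g-2$. Therefore $\deg(|S|) = \frac{2g-2-\sum a_i}{r}$, and the left-hand side is an integer because $|S|$, being an ordinary line bundle on each connected component of the coarse curve, has integral degree.

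It then remains a short algebraic identity to convert $\frac{2g-2-\sum a_i}{r}\in\Z$ into the stated form: since
\[(g-1)(r-2) + \sum a_i \;=\; (g-1)r - \bigl(2g-2 - \sum a_i\bigr),\]
dividing by $r$ shows that $\frac{(g-1)(r-2)+\sum a_i}{r}$ differs from $-\deg(|S|)$ by the integer $g-1$, hence is itself an integer.

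The only point that requires any care is the case of disconnected $C$: a priori the argument yields integrality of $\deg(|S|)$ on each connected component separately (a stronger statement), but summing these integer contributions and applying the convention above still gives exactly the claimed congruence. I do not anticipate a genuine obstacle; the observation is essentially a degree count and should follow directly.
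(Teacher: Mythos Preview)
Your proposal is correct and matches the paper's approach. The paper does not give a separate proof of this observation; it simply says ``Analogously to Observation~\ref{obs:open_rank1},'' and your argument is precisely that analogy. The only cosmetic difference is that the paper's argument for Observation~\ref{obs:open_rank1} is phrased in terms of the integrality of $\deg(|J|)$ rather than $\deg(|S|)$, but since $|J|=|S|^\vee\otimes\omega_{|C|}$ and $\deg(\omega_{|C|})=2g-2$ is an integer, the two formulations are trivially equivalent.
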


An {\it isomorphism} of $r$-spin structures consists of an isomorphism of marked orbifold Riemann surfaces with boundary, as defined above, together with an isomorphism $\widetilde{s}: s^*L_2 \cong L_1$ of the spin bundles such that:
\begin{enumerate}[(i)]
\item $\widetilde{s}$ commutes with the involutions, i.e. $\widetilde{s} \circ \widetilde{\phi}_1 = \widetilde{\phi}_2 \circ \widetilde{s}$;
\item the diagram
\[\xymatrix{
s^*L_2^{\otimes r} \ar[r]^{\tau_1}\ar[d]_{\widetilde{s}^{\otimes r}} & s^*\omega_{C_2,\log}\ar[d]\\
L_1^{\otimes r} \ar[r]^{\tau_2} & \omega_{C_1,\log}
}\]
commutes, where the right-hand vertical arrow is induced by $s$.
\end{enumerate}
Similarly, an isomorphism of twisted $r$-spin structures is an isomorphism $\widetilde{s}: s^*S_2 \cong S_1,$ satisfying the analogues of (i) and (ii).

\subsection{Gradings on smooth $r$-spin disks}

From here forward, we restrict to Riemann surfaces with boundary in which each connected component has genus zero, which we refer to as {\it disks}.  The definitions extend to higher genus, as well, but they are not needed for the current work.  We denote by $(C, \phi, \Sigma, \{z_i\}_{i \in I}, \{x_j\}_{j \in B}, m^I, m^B)$ a smooth marked orbifold disk.  Note that we allow the case $\d\Sigma=\emptyset$, and that if a smooth marked orbifold disk is connected (meaning that $C$ is connected), then $\d\Sigma\ne\emptyset$.

Let $A \subseteq \d\Sigma \setminus \{x_j\}_{j \in B}$ be an open subset (where we identify the boundary marked points with their images in $|C|$), and let $s \in \Gamma(A, \omega_{|C|})^{\phi}$ be a section fixed under the fiberwise involution on $\omega_{|C|}|_A$ induced by $\phi$.  We call $s$ {\it positive} if, for any $p \in A$ and any tangent vector $v \in T_p(\d \Sigma)$ in the direction of orientation, we have $\langle s(p),v\rangle > 0$.  A similar notion of positivity applies to $\phi$-fixed sections of any tensor power of $\omega_{|C|}$ over $A$.
\begin{definition}\label{def:lifting}
Given a twisted $r$-spin structure $S$, a  {\it lifting of $S$ over $A$} is a continuous, $\widetilde\phi$-invariant section
\[v \in \Gamma\left(A, |S|^{\widetilde\phi}\right)\]
such that the image of $v^{\otimes r}$ under the map on sections induced by the injection
\begin{equation}
\label{vr}
|S|^{\otimes r} \rightarrow \omega_{|C|}
\end{equation}
is positive.
A {\it lifting of $J$ over $A$} is a $\widetilde\phi$-invariant section
\[w \in \Gamma\left(A, |J|^{\widetilde\phi}\right)\] for which there exists a lifting $v$ of $S$ over $A$ with $\langle w,v\rangle \in \Gamma(A,\omega_{|C|})$ positive on $A$, where $\langle-,-\rangle$ is induced by the natural pairing between $|S|^{\vee}$ and $|S|$.  We consider two liftings $v$ and $v'$ (of either $S$ or $J$) {\it equivalent} if $v = c v'$ for a continuous function $c: A \rightarrow \R^+$. We write $[v]$ for the equivalence class of $v.$
\end{definition}

Observe that a twisted $r$-spin structure admits a lifting of $S$ over $A$ precisely if it admits a lifting of $J$ over $A$.  Moreover, there is a bijection between equivalence classes of liftings of $S$ and of $J$, in which $[v]$ corresponds to $[w]$ if $\langle w,v \rangle$ is everywhere positive for all representatives $v$ and $w$ of $[v]$ and $[w]$.  If a twisted $r$-spin structure admits a lifting (of either $S$ or $J$) over all of $\d \Sigma \setminus \{x_j\}_{j \in B}$, we call it {\it compatible}.

We now define the notion of a lifting ``alternating" at a boundary marked point.
\begin{definition}
\label{def:legal}
Let $w$ be a lifting of $J$ over $\partial\Sigma\setminus\{x_i\}_{i\in B}$.  A boundary marked point $x_j$ is said to be \emph{illegal} if there exists a lifting $w'\in[w]$ that can be continuously extended to $x_j$ without vanishing.\footnote{One can define legality in the exact same manner for liftings of $S$, and it is straightforward to see that $x_j$ is legal for the class $[w]$ of liftings of $J$ precisely if it is legal for the corresponding class $[v]$ of liftings of $S$.}  If $x_j$ is not illegal, we say that it is {\it legal} and that $w$ {\it alternates} at $x_j$.
\end{definition}

It is immediate that legality is well-defined under equivalence of $v$.  Furthermore, legality and compatibility are closely related to the twists:
\begin{prop}
\label{prop:compatibility_lifting_parity}
\begin{enumerate}[(1)]
\item\label{it:compatibility_odd} When $r$ is odd, any twisted $r$-spin structure is compatible, and there is a unique equivalence class of liftings.
\item\label{it:compatibility_even} When $r$ is even, the boundary twists $b_j$ in a compatible twisted $r$-spin structure must be even.  Whenever the boundary twists are even, either the $r$-spin structure is compatible or it becomes compatible after replacing $\widetilde\phi$ by $\xi \circ \widetilde\phi \circ \xi^{-1}$ for $\xi$ an $r$th root of $-1$, which yields an isomorphic $r$-spin structure.
\item\label{it:lifting and parity_odd} Suppose $r$ is odd and $v$ is a lifting over a punctured neighborhood of a boundary marked point $x_j$.  Then $x_j$ is legal if and only if its twist is odd.
\item\label{it:lifting and parity_even}
Suppose $r$ is even.  If a lifting over $\d \Sigma \setminus \{x_j\}_{j \in B}$ alternates precisely at a subset $D \subset \{x_j\}_{j \in B}$, then
\begin{equation}
\label{parity}
\frac{2\sum a_i + \sum b_j+2}{r} \equiv |D| \mod 2.
\end{equation} If~\eqref{parity} holds, then there exist exactly two liftings (up to equivalence) that alternate precisely at $D \subset \{x_j\}_{j \in B}$, one of which is the negative of the other.
\end{enumerate}
\end{prop}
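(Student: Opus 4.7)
The plan is to reduce everything to a local analysis on $\d\Sigma$ of the real sub-line bundle $|S|^{\widetilde\phi} \subset |S|$ and of the $r$th-power map into $\omega_{|C|}^\phi$. The preparation is as follows: around a smooth boundary point, choose a local holomorphic coordinate $z = t + is$ with $\phi(z) = \bar z$ and $\Sigma = \{s \ge 0\}$, and a holomorphic trivialization $u$ of $|S|$; then $\widetilde\phi(u) = h(z)\overline{u(\bar z)}$ for some $h$ with $|h| \equiv 1$ on the real axis, a local real trivialization of $|S|^{\widetilde\phi}$ is $u_\R := \sqrt{h(t)}\,u(t)$, and $u_\R^r$ is a real multiple of $dt$ whose sign is a local topological invariant. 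Near a boundary marked point $x_j$ one may rescale $u$ so that $u^r = z^{b_j}\,dz$; then $u_\R^r \propto t^{b_j}\,dt$ on the real axis, so the sign of $u_\R^r$ flips across $x_j$ iff $b_j$ is odd.

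For part (1) and the forward direction of (3): when $r$ is odd, $(-u_\R)^r = -u_\R^r$, so the sign of $u_\R^r$ can always be adjusted by flipping $u_\R$. This yields a unique local lifting at each point of $\d\Sigma \setminus \{x_j\}$, and these glue to a unique global lifting by uniqueness, proving (1). For (3), $x_j$ is legal iff the unique lifting changes sign across $x_j$, which by the boundary calculation above happens iff $b_j$ is odd.

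For parts (2) and (4): when $r$ is even, $(-u_\R)^r = u_\R^r$, so the sign of $u_\R^r$ is a genuine obstruction. Conjugating $\widetilde\phi$ by $\xi$ with $\xi^r = -1$ sends $u_\R \mapsto \xi u_\R$ and globally flips the sign, yielding the modification described in (2). Compatibility requires the sign of $u_\R^r$ to be constant on $\d\Sigma$, which by the sign-flip calculation fails if any $b_j$ is odd; hence every $b_j$ in a compatible structure must be even. Once all $b_j$ are even, the sign of $v$ on each arc of $\d\Sigma \setminus \{x_j\}$ is an independent binary choice, so equivalence classes of liftings correspond to consistent sign patterns on the arcs, and the alternation set $D$ is exactly the set of marked points at which the pattern flips.

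The substantive content of (4) is the congruence \eqref{parity}, which I regard as the main obstacle. Going around a boundary circle, any continuous sign choice must return to itself after one loop (modulo a M\"obius-type monodromy), so $|D|$ modulo $2$ agrees with the first Stiefel--Whitney class $w_1(|S|^{\widetilde\phi}) \in \Z/2$. To compute $w_1$, I would use a Maslov-index calculation on the disk $\Sigma$: restricting the isomorphism $|S|^{\otimes r} \cong \omega_{|C|}\otimes \O(-\sum a_i[z_i] - \sum a_i[\bar z_i] - \sum b_j[x_j])$ to $\Sigma$ gives $|S||_\Sigma^{\otimes r} \cong \omega_\Sigma\otimes \O(-\sum a_i[z_i] - \sum b_j[x_j])$, whose Maslov index along $\d\Sigma$ is $-2 - 2\sum a_i - \sum b_j$ (the disk's $\omega_\Sigma$ contributes $-2$, each interior twist $-2a_i$, each boundary twist $-b_j$). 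Taking the $r$th root yields $\mu(|S||_\Sigma) = -(2 + 2\sum a_i + \sum b_j)/r$, whose reduction mod $2$ is the claimed formula. Once \eqref{parity} holds, a sign pattern realizing $D$ exists and is unique up to the global flip $v \leftrightarrow -v$, which preserves $u_\R^r$ since $r$ is even; this gives the two promised equivalence classes.
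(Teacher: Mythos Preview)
Your argument is correct. For items (1)--(3) your local analysis via a real trivialization $u_\R$ of $|S|^{\widetilde\phi}$ and the sign of $u_\R^r \propto t^{b_j}\,dt$ is exactly the paper's approach, just packaged slightly differently: the paper trivializes $\omega_{|C|}^\phi$ globally on $\partial\Sigma$, introduces the signs $\epsilon_j \in \{\pm 1\}$ on each arc, and derives the transition rule $\epsilon_j = (-1)^{b_j}\epsilon_{j-1}$ across $x_j$, which is the same computation you do.

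For item (4) you take a genuinely different route. The paper works with $|J|$ rather than $|S|$: it computes $\deg |J| = (2\sum a_i + \sum b_j + 2 - 2r)/r$, observes that for a $\widetilde\phi$-invariant meromorphic section the non-real zeros and poles come in conjugate pairs so $\deg |J| \equiv \deg(|J||_{\partial\Sigma}) \bmod 2$, and then argues directly that the mod-$2$ count of real zeros and poles detects orientability of $|J|^{\widetilde\phi}$, which in turn equals the parity of the number of legal points. You instead identify $|D| \bmod 2$ with $w_1(|S|^{\widetilde\phi})$ and compute the latter via the Maslov index $\mu(|S||_\Sigma,|S|^{\widetilde\phi}) = \deg_{|C|}|S|$. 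The two computations are equivalent (since $|J| = |S|^\vee \otimes \omega_{|C|}$ and $\omega_{|C|}^\phi$ is oriented on $\partial\Sigma$, one has $w_1(|J|^{\widetilde\phi}) = w_1(|S|^{\widetilde\phi})$), but yours is more overtly topological and invokes the standard fact $\mu \equiv w_1 \bmod 2$, while the paper's is self-contained and avoids Maslov machinery. Either way the existence and ``exactly two'' statements follow, as you note, from the fact that a sign pattern on the arcs with prescribed flip set $D$ exists iff $|D|$ has the right parity, and is then unique up to the global sign.
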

\begin{proof}
We begin by choosing trivializations
\begin{equation}
\label{triv}
\omega_{|C|}^{\phi}\bigg|_{\d \Sigma} \cong \d\Sigma \times \R
\end{equation}
and
\[\omega_{|C|}\left(-\sum_{i \in I} a_i [z_i] -\sum_{i \in I} a_i [\overline{z_i}] - \sum_{j \in B} b_j [x_j]\right)^{\phi}\bigg|_{\d \Sigma \setminus \{x_j\}_{j \in B}} \cong ( \d \Sigma \setminus \{x_j\}_{j \in B} ) \times \R\]
such that a section of either of these bundles is positive precisely if its image lies in the positive ray $\R^+ \subset \R$ in each fiber.
Let $I_j$ be the connected component of $\partial\Sigma\setminus\{x_j\}_{j\in B}$ defined by the property that $x_j$ is the left endpoint of the closure of $I_j$ with respect to the orientation of $\partial\Sigma$.  Since the $r$th tensor power of a $\widetilde\phi$-invariant section of $|S|$ is $\phi$-invariant, one can see that on each $I_j$, either there is a section $v_j\in\Gamma(I_j,|S|^{\widetilde{\phi}})$ with $v_j^{\otimes r}$ mapping to $1$ under the composition of~\eqref{vr} and \eqref{triv}, or there is a section $v_j$ with $v^{\otimes r}_j$ mapping to~$-1.$

Now, for the first item, suppose $r$ is odd.  Then, by replacing $v_j$ with $-v_j$ if necessary, we find for each $I_j$ a section $v_j\in\Gamma(I_j,|S|^{\widetilde{\phi}})$ such that $v_j^{\otimes r}$ maps to $1$ under the composition of \eqref{vr} and \eqref{triv}.  Thus, there is always a lifting.

If $r$ is even, then if there is a section $v_j\in\Gamma(I_j,|S|^{\widetilde{\phi}})$ with $v_j^{\otimes r}$ mapping to the constant section $\epsilon_j=\pm 1$, then $(-v_j)^{\otimes r}$ maps to $(-1)^r\epsilon_j=\epsilon_j,$ and there is no real section that maps to $-\epsilon_j.$ We see that when $r$ is even, the structure is compatible precisely if $\epsilon_j=1$ for all $j$.

Suppose the elements of $B$ are enumerated cyclically so that $x_j$ follows $x_{j-1}$ in the cyclic order of $j \in \{1, \ldots, |B|\}$ around the boundary.  Choose a local coordinate $x$ on $\d\Sigma$ centered around a boundary marked point $x_j$.  Then the map on local sections induced by \eqref{vr} is multiplication by the local section $x^{b_j}$  of $\O(b_j[x_j])$.  The section
\[x^{-b_j} \in \Gamma\left(U\setminus\{x_j\}, \;\omega_{|C|}\left(-\sum_{i \in I} a_i [z_i] -\sum_{i \in I} a_i [\overline{z_i}] - \sum_{j \in B} b_j [x_j]\right)^{\phi}\right),\]
defined in a small punctured neighborhood $U \setminus \{x_j\}$ without any other markings, extends to all of $U$ as a nowhere-vanishing local section that we denote by $s$.  In addition, there exists a section $v\in\Gamma(|S|^{\widetilde \phi}|_{U\cap I_j})$ with $v^{\otimes r}$ mapping to $\epsilon_j s$.  Since $s$ does not vanish, $v$ can be extended to all of $U$ without vanishing in such a way that $v^{\otimes r}$ maps to $s.$  Given that $x^{-b_j}$ changes sign at $x_j$ precisely when $b_j$ is odd, we see that
\[\epsilon_j = (-1)^{b_j}\epsilon_{j-1}.\]
Thus, if each $\epsilon_j$ equals $1$, then all $b_j$ must be even.  In case all $b_j$ are even, it could still be the case that $\epsilon_j=-1$ for each $j$, but then replacing $\widetilde{\phi}$ by $\xi \circ \widetilde\phi \circ \xi^{-1}$ reverses the notion of positivity and hence ensures that the spin structure is compatible. The second item follows.

For the third item, by the considerations of the previous two items, for $r$ odd there is a unique $r$th root
\[ v(x) = x^{-b_j/r}.\]
Its $r$th power $v^{\otimes r}$ does not change sign after crossing $x_j$ (and hence gives rise to a lifting) precisely if $b_j$ is even.

The last item is a consequence of the real zero count for a section of $|J|$.  Using \eqref{eq:|J|},
\begin{align*}
\deg(|J|) &= \frac{2\sum a_i +\sum b_i +2-2r}{r}\\
&\equiv \frac{2\sum a_i +\sum b_i +2}{r} \mod 2.
\end{align*}
Viewing the degree as the number of zeroes minus the number of poles of a meromorphic section, one sees also that
\[\deg(|J|) \equiv \deg\left(|J|\big|_{\d \Sigma} \right) \mod 2,\]
since we may choose a $\widetilde\phi$-invariant meromorphic section, and for such sections non-real zeroes and poles come in conjugate pairs.

The number of zeroes minus the number of poles of a $\widetilde\phi$-invariant meromorphic section of $|J|\big|_{\partial\Sigma}$ is even precisely if the real subbundle $|J|^{\widetilde{\phi}}$ on ${\partial\Sigma}$ is orientable.  The orientability of $|J|^{\widetilde{\phi}}\to{\partial\Sigma}$, on the other hand, can be deduced from the number of legal marked points on $\partial\Sigma$: a section $w$ as in Definition~\ref{def:legal} gives a trivialization of $|J|^{\widetilde{\phi}}|_{\partial\Sigma}$ away from boundary marked points, and the transition functions between these trivializations are sign-reversing exactly at the legal marked points. Equation \eqref{parity} follows. The same considerations also allow us, assuming~\eqref{parity} holds, to construct a lifting that alternates precisely at the points of $D$.  The equivalence class of such a lifting is determined by choosing the lifting at any unmarked boundary points, and there are exactly two such choices.
\end{proof}

The last paragraph of the above proof also yields:
\begin{cor}\label{cor:grading_and_orientation_of_J}
The bundle $|J|^{\widetilde\phi}$ on $\d\Sigma$ is orientable if and only if the number of legal marked points on $\partial\Sigma$ is even.
\end{cor}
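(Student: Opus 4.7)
The plan is to follow exactly the strategy indicated in the last paragraph of the proof of Proposition~\ref{prop:compatibility_lifting_parity}, working one boundary circle at a time. Since $\partial\Sigma$ is a disjoint union of circles (one for each connected component of $\Sigma$ with nonempty boundary), the bundle $|J|^{\widetilde\phi}\to\partial\Sigma$ is orientable if and only if its restriction to each such circle is orientable, which for a real line bundle on $S^1$ amounts to computing the monodromy around the loop.

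First, I would fix a connected component $c\subset\partial\Sigma$ and invoke Proposition~\ref{prop:compatibility_lifting_parity} to obtain a lifting $w$ of $J$ over $\partial\Sigma\setminus\{x_j\}_{j\in B}$; if no lifting exists for the given $\widetilde\phi$, I would first replace $\widetilde\phi$ by $\xi\circ\widetilde\phi\circ\xi^{-1}$ as in part~(\ref{it:compatibility_even}) of the proposition, which yields an isomorphic $r$-spin structure without changing the underlying real bundle $|J|^{\widetilde\phi}$. On each open arc of $c$ bounded by consecutive boundary marked points, the section $w$ is a nowhere-vanishing section of $|J|^{\widetilde\phi}$ and so trivializes the real line bundle there; the equivalence class $[w]$ is what matters, since only its sign affects the orientation.

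Next, I would analyze the transition at each boundary marked point $x_j\in c$. By Definition~\ref{def:legal}, $x_j$ is illegal precisely when some representative in $[w]$ extends continuously and nonvanishingly across $x_j$, so that the two adjacent trivializations agree in sign; dually, $x_j$ is legal precisely when every representative flips sign across $x_j$, making the transition function orientation-reversing. Traversing $c$ once therefore produces a monodromy of $(-1)^{N_c}$, where $N_c$ is the number of legal marked points on~$c$. Hence $|J|^{\widetilde\phi}|_c$ is orientable if and only if $N_c$ is even, and summing mod $2$ over the connected components of $\partial\Sigma$ gives the stated equivalence.

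I do not expect a real obstacle: the geometric content is already contained in the paragraph preceding the statement, so the proof is essentially a bookkeeping exercise. The one subtlety worth flagging is the compatibility caveat above — one must either assume a lifting of $J$ over $\partial\Sigma\setminus\{x_j\}$ exists (so that the notions of legality and alternation are meaningful) or observe that the replacement $\widetilde\phi\mapsto\xi\widetilde\phi\xi^{-1}$ does not alter the isomorphism class of the $r$-spin structure and hence does not alter the orientability of $|J|^{\widetilde\phi}$.
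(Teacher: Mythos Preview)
Your proposal is correct and follows essentially the same argument as the paper: the corollary is stated immediately after Proposition~\ref{prop:compatibility_lifting_parity} with the remark that ``the last paragraph of the above proof also yields'' it, and that paragraph says precisely that $w$ trivializes $|J|^{\widetilde\phi}$ away from boundary marked points with transition functions sign-reversing exactly at the legal ones. Your extra care about working circle-by-circle and about the compatibility caveat is sound (the replacement $\widetilde\phi\mapsto\xi\widetilde\phi\xi^{-1}$ carries $|J|^{\widetilde\phi}$ isomorphically to the new fixed subbundle, so orientability is unaffected), but adds nothing essential beyond what the paper already records.
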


If $S$ is a compatible twisted spin structure on a smooth marked orbifold disk, then we define a {\it grading} of $S$ as a lifting $w$ for $J$ over $\partial\Sigma\setminus\{x_i\}_{i\in B}$ that alternates at every boundary marked point.  We apply the same notion of equivalence as above to the grading $w.$

Given this, we can now define our main objects of study, in the case where $C$ is smooth:

\begin{definition}
A {\it graded $r$-spin structure} on a smooth marked orbifold disk is a compatible twisted spin structure $S$ in which all boundary twists are $r-2$, together with an equivalence class $[w]$ of gradings.
\end{definition}

We often abuse terminology somewhat, referring to the choice of equivalence class simply as the ``grading."  The key point about the grading is that it induces a notion of positivity---that is, a choice of preferred real ray in $J^{\widetilde\phi}_q$ at any unmarked boundary point $q$.

A smooth marked orbifold disk together with a twisted spin structure and a grading is called a {\it smooth graded $r$-spin disk}.  An {\it isomorphism} of graded $r$-spin disks consists of an isomorphism of twisted spin structures, as defined above, such that the image of the equivalence class of gradings $[w_1]$ on $C_1$ under the map on sections induced by $s$ and $\widetilde{s}$ is the equivalence class $[w_2]$.

If $\text{Fix}(\phi) = \emptyset$, the above notion of grading is vacuous, but we require an additional datum:

\begin{definition}\label{def:graded_sphere}
A {\it smooth graded $r$-spin sphere} is a smooth $r$-spin sphere together with a choice of a distinguished marking $z_i$, referred to as the {\it anchor}, such that
\begin{enumerate}[(1)]
\item $m^I(z_i) = 0$ and $z_i$ is the only marked point marked zero;
\item if there is a marked point with twist $-1$, then it must be the anchor;
\item if the twist $a_i$ of the anchor is $r-1$, we have a map $\tau':\left(|S|\otimes \O\left([z_i]\right)\right)^{\otimes r}\big|_{z_i}\to \C$ defined as the composition
\[\left(|S|\otimes \O\left([z_i]\right)\right)^{\otimes r}\big|_{z_i}\rightarrow \omega_{|C|}([z_i])\big|_{z_i}\cong\C,\] where the second identification is the residue map.
In this case, we also fix an involution $\widetilde{\phi}$ on the fiber $\left(|S|\otimes \O\left([z_i]\right)\right)_{z_i}$ and require it to satisfy two properties: first,
\[\tau'(\widetilde{\phi}(v)^{\otimes r})=-\overline{\tau'(v^{\otimes r})}\]
for all $v\in\left(|S|\otimes \O\left([z_i]\right)\right)_{z_i}$, where $w\mapsto\overline{w}$ is the standard conjugation; and second,
\[\left\{\tau'(v^{\otimes r})\; | \; v\in \left(|S|\otimes \O\left([z_i]\right)\right)^{\widetilde\phi}_{z_i}\right\}\supseteq i\R_+,\]where $i$ is the root of $-1$ in the upper half plane. Finally, we pick a \emph{positive direction} on $\left(|S|\otimes \O\left([z_i]\right)\right)^{\widetilde\phi}_{z_i},$ meaning a connected component $V$ of $\left(|S|\otimes \O\left([z_i]\right)\right)^{\widetilde\phi}_{z_i}\setminus\{0\}$ such that $\tau'(v^{\otimes r})\in i\R_+$ for any $v \in V$.

Since
\[|J|_{z_i}\otimes \left(|S|\otimes \O\left([z_i]\right)\right)_{z_i}\cong \omega_{|C|}([z_i])\big|_{z_i},\]
the involution $\widetilde\phi$ induces a conjugation on the fiber $|J|_{z_i}$, also denoted by $\widetilde\phi$, by the requirement that
\[\langle \widetilde\phi(w),\widetilde\phi (v)\rangle = -\overline{\langle w, v\rangle}\]
under the identification of $\omega_{|C|}([z_i])_{z_i}$ with $\C$.  Similarly, a positive direction is defined on $|J|^{\widetilde\phi}_{z_i}$ by the requirement that $w\in |J|_{z_i}^{\widetilde\phi}$ is positive if, for any positive $v\in \left(|S|\otimes \O\left([z_i]\right)\right)_{z_i}^{\widetilde\phi}$, one has
\[\langle w,v\rangle\in i\R_+.\]
\end{enumerate}
An {\it isomorphism} of graded $r$-spin spheres is an isomorphism of $r$-spin spheres that preserves the anchor and, in the case where the anchor has twist $r-1$, also preserves the involution $\widetilde{\phi}$ and the positive direction.
\end{definition}

\begin{rmk}
The anchor is an auxiliary tool that is useful when sphere components arise from normalization.  The anchor of a sphere component should be thought of as the half-node at which the component met a disk component, or met a simple path of sphere components connecting it to a disk component. The additional requirement in the case of twist $r-1$ arises from the notion of grading at contracted boundary components; see the discussion preceding Definition \ref{def:stablegraded} below.
\end{rmk}

We observe the following parity constraint, which follows from Observation \ref{obs:open_rank1} when $r$ is odd and from item \ref{it:lifting and parity_even} of Proposition \ref{prop:compatibility_lifting_parity} when $r$ is even:

\begin{obs}\label{obs:open_rank2}
For any graded spin disk, we have
\begin{equation}\label{eq:open_rank2}
e\equiv |B|-1 \mod 2,
\end{equation}
where $e$ is defined by equation \eqref{eq:open_rank1} with $g=0$ and $b_j=r-2$ for all $j$.
\end{obs}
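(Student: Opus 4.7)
The plan is to specialize equation~\eqref{eq:open_rank1} with $g=0$ and $b_j = r-2$ for all $j\in B$, which gives
\[
e = \frac{2\sum_i a_i + |B|(r-2) - (r-2)}{r} = \frac{2\sum_i a_i + (|B|-1)(r-2)}{r},
\]
and then to split into the two cases $r$ odd and $r$ even, exactly as suggested by the hint preceding the observation. The point is that Observation~\ref{obs:open_rank1} only tells us $e\in\Z$, and we need a mod $2$ refinement; for odd $r$ this refinement comes essentially for free from the integrality, while for even $r$ it requires the more delicate input from Proposition~\ref{prop:compatibility_lifting_parity}\eqref{it:lifting and parity_even}.

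\emph{Case 1: $r$ odd.} Since $r$ is odd, multiplication by $r$ is an isomorphism on $\Z/2\Z$, so $e\equiv re\pmod{2}$. But
\[
re = 2\sum_i a_i + (|B|-1)(r-2),
\]
and reducing modulo $2$ the first summand vanishes and $r-2$ is odd, so $re\equiv |B|-1\pmod{2}$. Combining these gives $e\equiv|B|-1\pmod{2}$.

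\emph{Case 2: $r$ even.} By definition of a grading, the distinguished lifting $w$ alternates at every boundary marked point, so the set $D$ in Proposition~\ref{prop:compatibility_lifting_parity}\eqref{it:lifting and parity_even} equals all of $B$. The proposition then gives
\[
\frac{2\sum_i a_i + |B|(r-2) + 2}{r} \equiv |B| \pmod{2}.
\]
On the other hand, a direct computation shows
\[
e = \frac{2\sum_i a_i + |B|(r-2) - (r-2)}{r} = \frac{2\sum_i a_i + |B|(r-2) + 2}{r} - 1,
\]
since $-(r-2)-2=-r$. Combining the last two displays yields $e\equiv|B|-1\pmod{2}$, completing this case.

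The only genuinely substantive step is the verification in the even case that $D=B$, which is immediate from the definition of a graded $r$-spin structure (all boundary twists are $r-2$, which is even when $r$ is even, and the grading $w$ alternates at every boundary marked point). The rest is arithmetic bookkeeping, so I do not expect any real obstacle beyond being careful about the relationship between $|B|(r-2)-(r-2)$ and $|B|(r-2)+2$ modulo $r$.
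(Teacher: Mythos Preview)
Your proof is correct and follows exactly the approach the paper indicates: the odd case is extracted from the integrality of $e$ (Observation~\ref{obs:open_rank1}), and the even case from Proposition~\ref{prop:compatibility_lifting_parity}\eqref{it:lifting and parity_even} with $D=B$. The arithmetic manipulations you supply are precisely the details the paper leaves implicit.
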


In both the closed (sphere) theory and the open (disk) theory, one has an existence and uniqueness result:

\begin{prop}\label{prop:existence_smooth}
If $(C, \{z_1, \ldots, z_l\})$ is a marked sphere and $a_1,\ldots,a_l \in \{-1,0, \ldots, r-1\}$ are twists satisfying \eqref{eq:close_rank1} such that at most one $a_i$ equals $-1$, then there exists a unique twisted $r$-spin structure on $C$ with twists $a_1, \ldots, a_l$.  The only automorphisms are given by scaling $S$ fiberwise by an $r$th root of unity. For any choice of $z_i$, there exists a unique graded $r$-spin structure on $(C,S)$ with anchor $z_i$, and it has no automorphisms if the anchor has twist $r-1$.

Suppose $(C,\phi,\Sigma, \{z_1, \ldots, z_l\}, \{x_1, \ldots, x_k\})$ is a connected marked disk and $a_1,\ldots,a_l \in \{0, 1, \ldots, r-1\}$, $b_1,\ldots,b_k\in\{0,1,\ldots,r-2\}$ are such that \eqref{eq:open_rank1} holds. If $r$ is even, let $D\subseteq[k]$ be an arbitrary set for which \eqref{parity} holds, and if $r$ is odd, let $D=\{i\in[k] \; |\; 2\nmid b_k\}.$ Then $C$ admits a unique $r$-spin structure with a lifting such that the internal twists are given by the integers $a_i$, the boundary twists are given by the integers $b_j$, and $D$ is the set of legal boundary marked points. The $r$-spin structure has no automorphisms that preserve the lifting.

In particular, when all $b_j$ are equal to $r-2$ and \eqref{eq:open_rank1} and \eqref{eq:open_rank2} hold, there exists a unique graded $r$-spin structure on this disk with internal twists given by the integers $a_i$, and this graded structure has no automorphisms.
\end{prop}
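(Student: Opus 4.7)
The strategy is to settle the closed case by exploiting that $\mathrm{Pic}(\P^1)$ is torsion-free, and then to deduce the open case by working on the coarse complex double, lifting the anti-holomorphic involution to the spin bundle, and invoking Proposition \ref{prop:compatibility_lifting_parity}. The main obstacle is to track the $\mu_r$ automorphisms against the discrete choices of a lift of $\phi$ and of an equivalence class of liftings, especially when $r$ is even, where both choices are a priori binary.

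For the sphere, the equivalence of categories recalled after \eqref{Sr} reduces the construction of a twisted $r$-spin structure to producing a line bundle $|S|$ on $|C|\cong\P^1$ whose $r$-th tensor power is $\omega_{|C|}\otimes\O(-\sum a_i[z_i])$. Since $\mathrm{Pic}(\P^1)\cong\Z$ is torsion-free and \eqref{eq:close_rank1} expresses exactly the divisibility of this degree by $r$, such an $|S|$ exists and is unique up to isomorphism; the trivialization $\tau$ is then unique up to a global scalar that is absorbed by rescaling $|S|$, and the automorphism group of the pair is exactly $\mu_r$. When the anchor twist is not $r-1$ the grading data is vacuous. When the anchor twist is $r-1$, a direct fiber calculation shows that the data in Definition \ref{def:graded_sphere} can be chosen and that the $\mu_r$-action on admissible pairs $(\widetilde\phi,\text{positive direction})$ is simply transitive, rigidifying the structure and eliminating all automorphisms.

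For the open case, apply the sphere result to the coarse double $|C|\cong\P^1$ with markings $\{z_i,\overline{z}_i,x_j\}$ and twists $\{a_i,a_i,b_j\}$; the condition \eqref{eq:open_rank1} is the corresponding instance of \eqref{eq:close_rank1}, yielding a unique $(|S|,\tau)$ with automorphism group $\mu_r$. To lift $\phi$, observe that $\phi^*|S|$ and $|S|$ have the same degree and hence are isomorphic, so an anti-holomorphic involution $\widetilde\phi$ covering $\phi$ exists. Any other such lift has the form $c\widetilde\phi$ with $c\in\C^*$ subject to $|c|=1$ (to be an involution) and $c^r=1$ (to be compatible with $\tau$), so $c\in\mu_r$. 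Using antilinearity of $\widetilde\phi$, the $\mu_r$-automorphism group acts on the set of lifts by $\widetilde\phi\mapsto c\widetilde\phi c^{-1}=c^2\widetilde\phi$, so the number of isomorphism classes of lifts is $[\mu_r:\mu_r^2]$, which equals $1$ for $r$ odd and $2$ for $r$ even. By parts (1) and (2) of Proposition \ref{prop:compatibility_lifting_parity}, exactly one of these two orbits (when $r$ is even) is compatible; hence $(|S|,\tau,\widetilde\phi)$ is unique up to isomorphism, with residual automorphism group $\{c\in\mu_r : c^2=1\}$, namely $\{\pm1\}$ for $r$ even and trivial for $r$ odd.

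Finally, parts (3) and (4) of Proposition \ref{prop:compatibility_lifting_parity} produce the lifting realizing the prescribed set $D$ of legal boundary marked points. For $r$ odd the equivalence class of liftings is unique and $D$ is determined by the twists, while for $r$ even the hypothesis \eqref{parity} provides exactly two equivalence classes of liftings alternating at $D$, related by negation; these two classes are exchanged by the residual $-1\in\mu_r$, so the triple (spin structure, involution, equivalence class of liftings) is unique up to unique isomorphism. Since only $c=1\in\mu_r$ preserves the positive ray $\R_{>0}\cdot v$ of a representative of the lifting class, the triple has no non-trivial automorphisms preserving the lifting. The graded-disk statement is the specialization $b_j=r-2$ and $D=\{x_j\}_{j\in B}$; in this case \eqref{eq:open_rank2} coincides with \eqref{parity}, and the preceding argument applies verbatim.
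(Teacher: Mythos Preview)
Your argument is correct and reaches the same conclusion, but by a different route than the paper. The paper \emph{constructs} $\widetilde\phi$ explicitly: it fixes a boundary point $p$, chooses $v\in|S|_p$ with $v^{\otimes r}$ positive, and then propagates $\widetilde\phi$ over $\Sigma$ by path-lifting in the total space of $|S|$, so that compatibility is built in from the start. It then produces the grading directly as any alternating $\widetilde\phi$-invariant section, and rules out automorphisms case-by-case (nontrivial roots of unity fail to commute with $\widetilde\phi$ when $r$ is odd; $-1$ commutes but reverses the grading when $r$ is even). Your approach is instead an orbit--stabilizer count: you parametrize the lifts of $\phi$ compatible with $\tau$ as a $\mu_r$-torsor, compute the conjugation action $\widetilde\phi\mapsto c^2\widetilde\phi$, identify the compatible orbit via Proposition~\ref{prop:compatibility_lifting_parity}, and then let the residual $\{\pm 1\}$ (for $r$ even) be absorbed by the two lifting classes from part~(4). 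This is cleaner bookkeeping and makes the automorphism statement transparent; the paper's path-lifting is more geometric and gives a concrete formula for $\widetilde\phi$ that is reused later (e.g.\ in the nodal case, Proposition~\ref{prop:existence_stable}).

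One step deserves more care. From ``$\phi^*|S|$ and $|S|$ have the same degree and hence are isomorphic'' you only get an anti-holomorphic bundle map $\psi$ covering $\phi$; it is not automatic that $\psi$ can be taken to be an \emph{involution} while remaining compatible with $\tau$. The obstruction is that $\psi^2$ is scaling by some $\lambda\in\C^*$, and rescaling $\psi$ to fix $\tau$-compatibility leaves only a $\mu_r$-worth of freedom, which multiplies $\lambda$ by $|c|^2=1$. The point that saves you is that $\mathrm{Fix}(\phi)\neq\emptyset$: over a boundary point, $\psi$ is a conjugate-linear endomorphism of a one-dimensional fiber, so $\psi^2$ is multiplication by a positive real number, and one checks that the two conditions $\psi^2=\mathrm{id}$ and $\psi^{\otimes r}=\Phi$ can then be simultaneously arranged by a single rescaling. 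You should say this explicitly; as written, the sentence reads as if ``isomorphic'' immediately gives ``involutive'', which it does not in general.
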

\begin{proof}
The first part is well-known; see, for example, the appendix of \cite{CR}. The statement regarding the grading in the closed setting is clear, except for the automorphism claim in case the anchor has twist $r-1.$ In this case, when $r$ is odd, scaling by a root of unity is not compatible with the involution $\widetilde\phi$ in the fiber of the anchor. When $r$ is even, scaling by $-1$ is compatible with the involution but not with the choice of positive direction.

Now, fix a connected marked disk and twists $a_1,\ldots,a_l$ such that \eqref{eq:open_rank1} and \eqref{parity} hold.  Note that the integrality requirement for the closed case (Observation \ref{obs:closed}) is satisfied on the closed marked Riemann surface
\[(C; \{z_1, \ldots, z_l, \overline{z}_1, \ldots, \overline{z}_l, x_1, \ldots, x_k\}),\]
so there exists a twisted $r$-spin structure $S$ on $C$.

To define a twisted $r$-spin structure in the sense of surfaces with boundary, we must produce an involution
\[\widetilde\phi: |S| \rightarrow |S|\]
lifting the involution $|\phi|$ on $|C|$.  In order to do so, choose a boundary point $p \in \d\Sigma \setminus \{x_1, \ldots, x_k\}$, and choose a vector $v \in |S|_p$ such that the image of $v^{\otimes r}$ in $\omega_{|C|}|_p,$ under the injection~\eqref{vr}, is positive.  For any other point $q \in \Sigma$ that is not a marked point, and any $u \in |S|_q$, we define $\widetilde\phi(q,u) = (\phi(q), \widetilde\phi(u))$ as follows.  Let $\widetilde\gamma$ be a path in the total space of
\[|S|\big|_{\Sigma\setminus\{z_i\}_{i\in[l]},\{x_j\}_{j\in[k]}}\]
with $\widetilde\gamma(0)=(p,v)$ and $\widetilde\gamma(1)=(q,u)$, and let $\gamma$ be the image of $\widetilde\gamma^{\otimes r}$ image in the total space of the bundle $\omega_{|C|}|_{\Sigma\setminus\{z_i\}_{i\in[l]},\{x_j\}_{j\in[k]}}$.  There is a unique path in the total space of $|S|\big|_{\phi(\Sigma)\setminus\{\bar{z}_i\}_{i\in[l]},\{x_j\}_{j\in[k]}}$, denoted by $\widetilde\phi(\widetilde\gamma)$ and starting at $(p,v)$, whose $r$th power maps to $\phi(\gamma)$.  Define $\widetilde\phi(q,u)$ to be its endpoint.

It is easy to see that the above definition is independent of choices, and the conjugation extends uniquely, up to isomorphism, to the marked points.  So far we have defined a twisted $r$-spin structure on the marked disk, and what remains is to define the grading.  For this, one can take as the lifting any smooth section $v$ of $|S|^{\widetilde\phi}|_{\d \Sigma\setminus\{x_j\}_{j\in[k]}}$ that alternates at each boundary marking; such a section exists because the parity constraint \eqref{parity} is satisfied. The lifting determines a grading $w.$

The absence of automorphisms, when $r$ is odd, is due to the involution: fiberwise multiplication by an $r$th root of unity does not commute with $\widetilde\phi$, so it is not an automorphism.  When $r$ is even, multiplication by $-1$ does commute with the involution, but it does not preserve the grading $w$, so it is again not an automorphism of the graded structure.
\end{proof}

%\begin{rmk}
%Graded $2$-spin structures were first introduced and explored by Solomon and the third author in \cite{ST0}, in which an equivalent definition of graded structures is also described, purely in terms of the preferred half surface $\Sigma$ considered as an orbifold Riemann surface with corners.  A similar definition can be given for the graded $r$-spin case as well.  The moduli space of graded $2$-spin surfaces is described in \cite{ST1}, where the authors prove that the higher-genus open intersection theory of \cite{PST14} is defined.  A combinatorial description of graded $2$-spin structures appears in \cite{Tes15}, and it is used to prove a combinatorial formula for open intersection numbers.
%\end{rmk}

\subsection{Stable graded $r$-spin disks}
\label{subsec:stabledisks}

Thus far, we have considered only graded $r$-spin structures on smooth curves.  In order to compactify the moduli space of such objects, however, we also need to allow the curves $C$ to be nodal.

A {\it nodal marked orbifold Riemann surface with boundary} is a tuple
\[(C, \phi, \Sigma, \{z_i\}_{i \in I}, \{x_j\}_{j \in B}, m^I, m^B),\]
defined exactly as in Section \ref{subsec:smooth} except that $C$ is a nodal, possibly disconnected, orbifold Riemann surface (curve) as in \cite{AV}, and $\text{Fix}(|\phi|)$ is locally homeomorphic near every point either to an open subset of $\R$, to the union of the coordinate axes
\[\{xy=0\} \subset \R^2,\]
or to a single point.  We recall from \cite{AV} that $C$ is only allowed nontrivial isotropy at marked points and nodes, each of which has isotropy group $\Z/r\Z$, and all nodes are required to be balanced in the sense that, in the local picture $\{xy=0\} \subset \C^2$ at the node, the action of the distinguished generator $\zeta$ of the isotropy group is given by
\[(x,y) \mapsto (\zeta x, \zeta ^{-1}y).\]

The nodes in a Riemann surface with boundary can be divided into three types:
\begin{enumerate}[(1)]
\item Internal nodes, which are nodes in the interior $\Sigma$ (together with a conjugate node in $\overline{\Sigma}$);
\item Boundary nodes, which are nodes in $\d \Sigma$, around which $\d \Sigma$ is locally homeomorphic to the union of the coordinate axes \[\{xy=0\} \subset \R^2;\]
\item Contracted boundary nodes, which are nodes arising in the limiting case where one component of $\text{Fix}(|\phi|)$ is a single point.
\end{enumerate}
The three types of nodes are illustrated in Figure \ref{fig:nodes}.

A nodal Riemann surface is \emph{stable} if every connected component of its normalization is stable.  %It is \emph{partially stable} if every connected component of its normalization is either stable or partially stable.
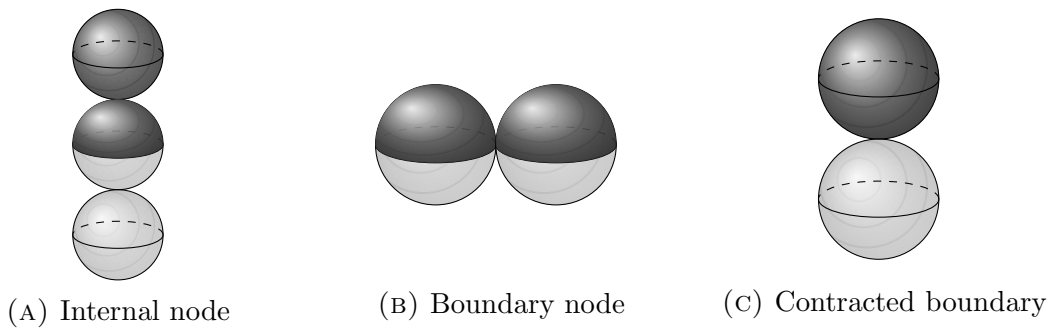
\begin{figure}
\centering
\begin{subfigure}{.3\textwidth}
  \centering

\begin{tikzpicture}[scale=0.3]
  \shade[ball color = gray, opacity = 0.5] (0,0) circle (2cm);
  \draw (0,0) circle (2cm);
  \draw (-2,0) arc (180:360:2 and 0.6);
  \draw[dashed] (2,0) arc (0:180:2 and 0.6);

  \shade[ball color = gray, opacity = 0.1] (0,-4) circle (2cm);
  \draw (0,-4) circle (2cm);
  \draw (-2,-4) arc (180:360:2 and 0.6);
  \draw[dashed] (2,-4) arc (0:180:2 and 0.6);
   \shade[ball color = gray, opacity = 0.6] (-2,-4) arc (180:360:2 and 0.6) arc (0:180:2);

    \shade[ball color = gray, opacity = 0.1] (0,-8) circle (2cm);
  \draw (0,-8) circle (2cm);
  \draw (-2,-8) arc (180:360:2 and 0.6);
  \draw[dashed] (2,-8) arc (0:180:2 and 0.6);
\end{tikzpicture}

  \caption{Internal node}
\end{subfigure}
\begin{subfigure}{.3\textwidth}
  \centering
\vspace{0.9cm}
\begin{tikzpicture}[scale=0.4]
  \shade[ball color = gray, opacity = 0.1] (0,0) circle (2cm);
  \draw (0,0) circle (2cm);
  \draw (-2,0) arc (180:360:2 and 0.6);
  \draw[dashed] (2,0) arc (0:180:2 and 0.6);
  \shade[ball color = gray, opacity = 0.6] (-2,0) arc (180:360:2 and 0.6) arc (0:180:2);

  \shade[ball color = gray, opacity = 0.1] (4,0) circle (2cm);
  \draw (4,0) circle (2cm);
  \draw (2,0) arc (180:360:2 and 0.6);
  \draw[dashed] (6,0) arc (0:180:2 and 0.6);
  \shade[ball color = gray, opacity = 0.6] (2,0) arc (180:360:2 and 0.6) arc (0:180:2);
\end{tikzpicture}
\vspace{0.9cm}

  \caption{Boundary node}
\end{subfigure}
\begin{subfigure}{.3\textwidth}
  \centering

\begin{tikzpicture}[scale=0.4]
\vspace{0.15cm}
  \shade[ball color = gray, opacity = 0.6] (0,0) circle (2cm);
  \draw (0,0) circle (2cm);
  \draw (-2,0) arc (180:360:2 and 0.6);
  \draw[dashed] (2,0) arc (0:180:2 and 0.6);

  \shade[ball color = gray, opacity = 0.1] (0,-4) circle (2cm);
  \draw (0,-4) circle (2cm);
  \draw (-2,-4) arc (180:360:2 and 0.6);
  \draw[dashed] (2,-4) arc (0:180:2 and 0.6);
\end{tikzpicture}
\vspace{0.15cm}

  \caption{Contracted boundary}
\end{subfigure}

\caption{The closed curve $|C|$, with the open disk $\Sigma$ shaded.}
\label{fig:nodes}
\end{figure}

An {\it $r$-spin structure} on a nodal marked orbifold Riemann surface with boundary is defined, exactly as in the smooth case, as a complex line bundle $L$ on $C$ with an isomorphism $\tau$ as in \eqref{Lr}, together with an involution $\widetilde{\phi}: L \rightarrow L$ lifting $\phi$ that is compatible with $\tau$.  There is an associated {\it twisted $r$-spin structure} $S$, defined by
\begin{equation}
\label{eq:S_nodal}
S := L \otimes \O\left(-\sum_{i \in I_0} r[z_i] - \sum_{i \in I_0} r[\overline{z_i}] - \sum_{j \; | \; \text{mult}_{x_j}(L) = 0} r[x_j]\right),
\end{equation}where $I_0\subseteq I$ is a subset of the marked points such that $\text{mult}_{z_i}(L) = 0$ for all $i \in I_0$, and such that connected components of $C$ not meeting the $\phi$-fixed locus contain at most one multiplicity-zero marked point $z_i$ with $i \notin I_0$, whereas connected components meeting the $\phi$-fixed locus do not contain any multiplicity-zero $z_i$ with $i \notin I_0$.  The bundle $S$ satisfies
\begin{equation}\label{eq:Snew_nodal}
S^{\otimes r} \cong \rho^*\left(\omega_{|C|,\log}\otimes \O\left(-\sum_{i \in I_0} r[z_i] - \sum_{i \in I_0} r[\overline{z_i}] - \sum_{j\; | \; \text{mult}_{x_j}(L) = 0} r[x_j]\right)\right),
\end{equation}
and we again define $J = S^{\vee} \otimes \omega_C$, as in \eqref{eq:J}.

\begin{rmk}\label{rmk:auts_of_spin}
Closed orbifold Riemann surfaces have additional ``ghost" automorphisms in the presence of nodes, which play a role in our calculation of the automorphism groups of stable graded $r$-spin disks below.  Specifically, in the local picture of a node as $\{xy=0\} \subseteq \C^2$, there is one ghost automorphism of the form
\begin{equation}
\label{eq:ghost}
(x,y) \mapsto (\xi x, y)
\end{equation}
for each $r$th root of unity $\xi$.  These act trivially on the coarse underlying curve $|C|$, but they act nontrivially on the orbifold $C$ and induce a nontrivial action on orbifold line bundles.  Indeed, let $q$ be a node of $C$ with branches $p$ and $p'$ (given locally by $y=0$ and $x=0$, respectively), and let $m$ be the multiplicity of $S$ at $p$ (where multiplicity at a branch of a node is defined analogously to multiplicity at a marked point).  Then, if $g$ is a ghost automorphism given by \eqref{eq:ghost}, the lift $\widetilde{g}: g^*S \cong S$ multiplies the fiber over $p$ by $\xi^m$, or in other words, changes the gluing of the fibers of $S$ over $p$ and $p'$ by a factor of $\xi^m$; see \cite[Proposition 2.5.3]{ChiodoStable}.
\end{rmk}

From here on, we restrict again to the case where each connected component has genus zero.  Let $\NNN:\widehat{C}\to C$ be the normalization morphism. Then $\NNN^*L\to\widehat{C}$ is an $r$-spin structure, but $\NNN^*S$ is not a twisted spin structure, in general, since it may not satisfy the requisite conditions on the subset $I_0$. Still, there is a canonical way to associate to $S$ a twisted spin structure on $\widehat{C}$, by setting
\begin{equation}\label{eq:widehatS}
\widehat{S} := \NNN^*S\otimes\O\left(-\sum_{q\in\RRR}r[q]\right).
\end{equation}
Here, $[q]$ is the degree-$1/r$ orbifold divisor of a point $q$, and $\RRR$ is the subset of the half-nodes $q \in \widehat{C}$ (thought of as marked points of $\widehat{C}$) with $\text{mult}_q(\NNN^*S)=0$ that satisfy one of the following:
\begin{enumerate}[(1)]
\item $\NNN(q)$ is a boundary node;
\item $\NNN(q)$ is a contracted boundary node;
\item $\NNN(q)$ is an internal node of $C$ that belongs to a connected component not containing any marked point of twist $-1$ and not meeting the $\phi$-fixed locus;
\item $\NNN(q)$ is an internal node of $C$ that belongs to a connected component containing a marked point of twist $-1$, and if one normalizes $C$ only at $n(q)$, then the half-node corresponding to $q$ is in the same connected component as the marked point of twist $-1$;
\item $\NNN(q)$ is an internal node of $C$ that belongs to a connected component meeting the $\phi$-fixed locus, and if one normalizes $C$ only at $n(q)$, then the half-node corresponding to $q$ is in the connected component meeting the $\phi$-fixed locus.
\end{enumerate}
Thus, for each irreducible component $C_l$ of $\widehat{C}$, if $\{z_i\}_{i \in I_l}, \{\overline{z_i}\}_{i \in I_l}$, and $\{x_j\}_{j \in B_l}$ are the marked points lying in $C_l$ and $\{p_k\}_{k \in N_l}$ are the branches of nodes in $C_l$, we have an equation
\begin{equation}\label{eq:nodal_curve}\left(|\widehat{S}|\big|_{|C_l|}\right)^{\otimes r} \cong \omega_{|C_{l}|} \otimes  \O\left(-\sum_{i \in I_l} a_i [z_i] -\sum_{i \in I_l} a_i [\overline{z_i}] - \sum_{j \in B_l} b_j [x_j] - \sum_{h \in N_l} c_h [p_h]\right)\end{equation}
with
\[a_i,c_h\in \{-1,0,\ldots, r-1\}, \;\;\;\; b_j \in \{0,\ldots, r-1\}.\]
The numbers $a_i,b_j$, and $c_h$ are the \emph{twists} of the corresponding internal marked points, boundary marked points, and half-nodes.

Note that if $p$ and $p'$ are the two branches of a node, then we have
\[c_p + c_{p'} \equiv r-2 \mod r.\]
If $c_p=-1\mod r$ (and hence $c_{p'} = -1\mod r$, as well), we say that the node is {\it Ramond}.  Otherwise, we have $c_p + c_{p'} = r-2$, and we say that the node is {\it Neveu--Schwarz}.

As long as $C$ does not have a contracted boundary node, the notion of {\it lifting} can be defined as before, as a continuous, $\widetilde{\phi}$-invariant section of $|S|$ on (an open subset of) the complement of the special points in $\d \Sigma$ whose $r$th power is positive, and we say that a twisted $r$-spin structure is {\it compatible} if it admits a lifting over the entire complement $A$ of the special points in $\d \Sigma$.  Two liftings $v$ and $v'$ are {\it equivalent} if there is a continuous function $c: A \rightarrow \R^+$ such that $v = cv'$. We similarly define liftings and equivalence for $J$, and the same correspondence between equivalence classes of liftings for $S$ and $J$ holds. If $w$ is a lifting of $J$ over $A,$ then it induces a
lifting of
\begin{equation*}
\widehat{J}:=\omega_{\widehat{C}}\otimes\widehat{S}^\vee
\end{equation*}
over the complement of the special points in the boundary $\partial\widehat\Sigma$ of the normalization. We say that $w$ \emph{alternates} at a marked point or half-node $q$, and that $q$ is \emph{legal}, if the induced lifting on $\widehat{J}$ alternates at $q$. Otherwise, the lifting does not alternate, and the point is said to be \emph{illegal}.  A lifting $w$ of $J$ over $A$ that alternates at all boundary marked points is a {\it grading} if, in addition, one of the two half-nodes of every Neveu--Schwarz boundary node is legal and the other is illegal.

\begin{obs}\label{obs:ramond_bdry_nodes}
In a compatible $r$-spin structure for even $r$, all boundary half-nodes have twists of even parity, by item \ref{it:compatibility_even} of Proposition~\ref{prop:compatibility_lifting_parity}. In particular, in this case there are no Ramond boundary nodes, since the twist of such a node is odd.  When $r$ is odd, Ramond boundary nodes may exist, but both of their half-nodes are necessary illegal. Indeed, since the twist of a Ramond boundary half-node is $r-1$ (by the definition of $\widehat{S}$ and the twists in \eqref{eq:widehatS} and \eqref{eq:nodal_curve}), which is even, the illegality of Ramond boundary half-nodes follows from item \ref{it:lifting and parity_odd} of Proposition~\ref{prop:compatibility_lifting_parity}.
\end{obs}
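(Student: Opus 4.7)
The plan is to reduce the claim to the smooth case by passing to the normalization and then invoking the relevant items of Proposition~\ref{prop:compatibility_lifting_parity}. Specifically, let $q$ be a boundary node of $C$ with half-nodes $p, p' \in \widehat{C}$. The boundary node lies on $\partial\Sigma$, so after normalizing, $p$ and $p'$ are boundary marked points of $\widehat{\Sigma}$. On the normalized surface the bundle $\widehat{S}$ defined by \eqref{eq:widehatS} is a twisted $r$-spin structure (by construction), with boundary twists at $p, p'$ equal to the half-node twists $c_p, c_{p'}$ of $S$, and with $c_p + c_{p'} \equiv r-2 \pmod r$.

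First I would verify that compatibility of $S$ on $C$ induces compatibility of $\widehat{S}$ on each connected component of $\widehat{C}$ that meets the boundary of $\widehat{\Sigma}$. A lifting $v$ of $S$ over $A = \partial\Sigma \setminus \{x_j\}_{j \in B}$ pulls back under $\NNN$ to a continuous $\widetilde{\phi}$-invariant section of $|\NNN^*S|$ over $\NNN^{-1}(A)$, which is the boundary of $\widehat{\Sigma}$ minus boundary markings and minus the half-nodes. Because $\widehat{S}$ differs from $\NNN^*S$ only by a twist supported at special (orbifold) points that are removed from this open set, the pulled-back section can be regarded as a lifting of $\widehat{S}$ over that locus. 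Hence $\widehat{S}$ is compatible on each irreducible component meeting $\partial\widehat\Sigma$, and its equivalence class of liftings assigns a legality status to each boundary marked point of $\widehat{C}$, including the half-nodes $p$ and $p'$.

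Now the two cases follow from the proposition applied componentwise to $\widehat{C}$. If $r$ is even, then item~\ref{it:compatibility_even} of Proposition~\ref{prop:compatibility_lifting_parity} forces every boundary twist of a compatible $r$-spin structure to be even; in particular $c_p$ is even. But a Ramond half-node has $c_p \equiv -1 \equiv r-1 \pmod r$, and $r-1$ is odd when $r$ is even, a contradiction. Hence boundary Ramond nodes do not exist for even $r$. If instead $r$ is odd, then $c_p = c_{p'} = r-1$ is even, and item~\ref{it:lifting and parity_odd} of Proposition~\ref{prop:compatibility_lifting_parity} states that a boundary marked point with even twist is illegal. Applied to the two half-nodes $p$ and $p'$ of the Ramond boundary node in $\widehat{C}$, this shows that both are illegal.

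The only subtle step is the first one, namely the clean transfer of liftings from $S$ to $\widehat{S}$ across the normalization and across the different twisting conventions in \eqref{eq:widehatS}. The check amounts to observing that $\NNN$ is an isomorphism away from nodes and that the extra twist $-r\sum_{q \in \RRR}[q]$ is supported at points excluded from the relevant open set on $\partial\widehat\Sigma$, so nothing more than continuity and positivity needs to be checked, and both are preserved. Once this reduction is in place, the rest is a direct invocation of the parity statements already proved for smooth disks.
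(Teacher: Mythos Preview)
Your proposal is correct and follows the same approach as the paper: the observation's justification is given inline in its statement, reducing to the normalization and invoking items~\ref{it:compatibility_even} and~\ref{it:lifting and parity_odd} of Proposition~\ref{prop:compatibility_lifting_parity} exactly as you do. Your explicit discussion of how compatibility and liftings transfer to $\widehat{S}$ under normalization is a helpful elaboration of what the paper leaves implicit, but it is not a different route.
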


If $C$ has a contracted boundary node, on the other hand, then in any connected component with such a node, the complement of the special points in $\d\Sigma$ is empty, so we must adapt the definition of a lifting.  Given that each connected component of $C$ has genus zero, there can only be one contracted boundary node in each connected component. Restrict to one such component, and let $q$ be the contracted boundary node. Assume, additionally, that $q$ is Ramond.
Recall that the fiber $\omega_{|C|}\big|_q$ is canonically identified with $\C$ via the residue, and the involution $\phi$ is sent, under this identification, to the involution $z\mapsto -\bar{z},$ whose fixed points are the purely imaginary numbers.\footnote{The residue of a conjugation-invariant form $\zeta$ can be calculated as $\frac{1}{2\pi i} \oint_L\zeta$, where $L\subset\Sigma$ is a small loop surrounding $q$ whose orientation is such that $q$ is to the left of $L.$ Applying conjugation and using the invariance of $\zeta$ shows that the residue is imaginary.}
We define a {\it lifting} as a $\widetilde\phi$-invariant element
\[v \in \Gamma(\{q\}, |S|) = |S|\big|_q\]
such that the image of $v^{\otimes r}$ under the map
\[|S|^{\otimes r}\big|_q \rightarrow \omega_{|C|}\big|_q\]
is positive imaginary, meaning that it lies in $i\R_+.$  In this case, we call the twisted $r$-spin structure {\it compatible} if the contracted boundary node is Ramond and a lifting exists. Two liftings are equivalent if, at the contracted boundary node, they differ by multiplication by a positive number.  There always exists a $\widetilde{\phi}$-invariant $w \in |J|\big|_q$ such that $\langle v,w \rangle$ is positive imaginary, and we refer to this $w$ as a {\it grading}; this is the limiting case of the notion of grading for smooth curves.

With or without a contracted boundary node, we now have the following definition:

\begin{definition}
\label{def:stablegraded}
A {\it stable genus-zero graded $r$-spin surface} is a nodal marked orbifold Riemann surface with boundary whose coarse underlying surface $(|C|, \{z_i\}, \{\bar z_i\}, \{x_j\})$ is a stable Riemann surface in which each connected component has genus zero, together with:
\begin{enumerate}[(1)]
\item a compatible twisted $r$-spin structure $S$ in which all boundary marked points have twist $r-2$ and all contracted boundary nodes are Ramond;
\item an equivalence class of gradings;
\item a choice of one distinguished special point (called the {\it anchor} and marked zero) in each connected component $C'$ of $C$ that is either disjoint from the set $\text{Fix}(\phi)$ or meets the set $\text{Fix}(\phi)$ in a single contracted boundary node.  If either a contracted boundary node or marked point of twist $-1$ exists, we require the anchor to be this point; if not, the anchor is simply required to be a marked point.  We also require that the collection of anchors is $\phi$-invariant, so that it descends to $\Sigma$. Finally, if the twist of an anchor $z_i$ is $r-1,$ we fix an involution
    $\widetilde{\phi}$ on the fiber $\left(|S|\otimes \O\left([z_i]\right)\right)_{z_i}$ and an orientation of the $\widetilde\phi$-fixed subspace, as in Definition \ref{def:graded_sphere}.
\end{enumerate}
The internal and boundary marked points are required to satisfy the same properties as in the smooth case, and in particular, an anchor, if it is a marked point, is the only marked point in its connected component that is marked zero.
\end{definition}

We conclude this subsection with an existence and uniqueness result analogous to Proposition \ref{prop:existence_smooth}. Here, we refer to irreducible components of $C$ that do not meet the preimage of $\text{Fix}(|\phi|)$, or that meet the preimage of $\text{Fix}(|\phi|)$ in a single contracted boundary node, as {\it sphere components}, and we refer to the other irreducible components as {\it disk components}.

\begin{prop}\label{prop:existence_stable}
Suppose that $(C,\phi,\Sigma, \{z_1, \ldots, z_l\}, \{x_1, \ldots, x_k\})$ is a connected stable marked disk and $a_1,\ldots,a_l \in \{0, 1, \ldots, r-1\}$ are such that \eqref{eq:open_rank1} and \eqref{eq:open_rank2} hold with $|B| = k$ and $b_j = r-2$ for all $j.$  Then there exists a unique graded $r$-spin structure on this stable disk with internal twists given by the integers $a_i$.

The order of the automorphism group of the graded $r$-spin structure is $r^n$ where $2n$ is the number of internal nodes of $C$.
\end{prop}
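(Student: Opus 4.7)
The plan is to induct on the number of nodes of $C$.  The base case of a smooth disk is Proposition~\ref{prop:existence_smooth}, which gives existence, uniqueness, and a trivial automorphism group (matching $r^n=1$ for $n=0$).  For the inductive step we pick a node $\nu$ of $C$ and normalize: at $\nu$ alone if it is a boundary or contracted boundary node, and at the pair $\{\nu,\bar\nu\}$ if $\nu$ is internal.  The resulting curve $C'$ is a disjoint union of stable marked disks and (conjugate pairs of) stable marked spheres, each with strictly fewer nodes.  We apply the inductive hypothesis to the disk pieces and the closed half of Proposition~\ref{prop:existence_smooth} to the sphere pieces; the main task is then to show that the half-node twists, the Ramond/Neveu--Schwarz distinction, the legality of boundary half-nodes, and the gluing of fibers are all forced uniquely by the data on $C$, after which we can glue back.

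The twists on the new half-nodes are determined as follows: the relation $c_p+c_{p'}\equiv r-2\pmod r$ couples the twists across each node, and the congruences~\eqref{eq:open_rank1} and \eqref{eq:close_rank1}, applied to each resulting irreducible component, fix the remaining integer (and in particular determine whether the node is Ramond or Neveu--Schwarz).  For an internal pair $\{\nu,\bar\nu\}$, $\widetilde\phi$-equivariance automatically identifies the twists on the two conjugate half-nodes.  For a boundary node, the requirement that the grading on the disk pieces glues back to a grading on $C$---namely, that exactly one half-node of each Neveu--Schwarz boundary node be legal---specifies the legality pattern on each piece; combined with the parity constraint~\eqref{eq:open_rank2} on each disk component and item~\ref{it:lifting and parity_even} of Proposition~\ref{prop:compatibility_lifting_parity}, this pins down a unique equivalence class of grading on each piece.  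Contracted boundary nodes and Ramond internal nodes are handled by the prescriptions in Definition~\ref{def:stablegraded}, and the anchor on any new sphere component is forced to be the half-node lying along the path toward a disk component, as required by Definition~\ref{def:stablegraded}(3).

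For existence, one glues the structures produced on each piece: orbifold gluing at each node requires an identification of the fibers of the spin bundle over the two branches, and one must check that at each internal pair the identification on one side determines the one on the other via $\widetilde\phi$, while at each boundary or contracted boundary node the grading (or the Definition~\ref{def:graded_sphere} data, in the Ramond contracted case) pins down the identification.  For uniqueness, any two graded $r$-spin structures on $C$ with the prescribed internal twists restrict to isomorphic structures on every piece of the normalization by induction, and the gluing data is likewise forced, so the two are isomorphic.  For the automorphism count, stability excludes nontrivial automorphisms of the coarse curve, and fiberwise scaling of $S$ by a nontrivial $r$-th root of unity is ruled out by the same argument as in the smooth case.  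The only remaining automorphisms are the ghost automorphisms from Remark~\ref{rmk:auts_of_spin}: at a boundary or contracted boundary node, compatibility with $\widetilde\phi$ and preservation of the grading (or of the positive direction in the Ramond contracted case) kill every nontrivial ghost; at an internal pair $\{\nu,\bar\nu\}$, the ghost at $\nu$ is free in $\Z/r\Z$ and the ghost at $\bar\nu$ is then determined by $\widetilde\phi$-equivariance.  With $n$ conjugate pairs of internal nodes, this gives exactly $r^n$ automorphisms.

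The main obstacle will be the case analysis at boundary and contracted boundary nodes, where one must simultaneously track the twists, the Ramond/Neveu--Schwarz distinction, the legality of half-nodes, the fiber-level gluing, and the triviality of ghosts under grading compatibility.  Once this bookkeeping is correctly organized, the proof reduces cleanly to the smooth case together with the closed spin structure existence result.
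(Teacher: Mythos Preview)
Your induction has a structural gap: after normalizing at a boundary node, the two disk pieces carry boundary half-nodes whose twists are $c_p$ and $r-2-c_p$, which are generally not $r-2$. Since the proposition---and hence your inductive hypothesis---is stated only for graded $r$-spin disks (all boundary twists equal to $r-2$, all boundary points legal), you cannot apply it to these pieces. To make the induction run, you would need a stronger statement allowing arbitrary boundary twists together with a prescribed legality pattern, as in the second paragraph of Proposition~\ref{prop:existence_smooth}. A parallel issue arises for sphere pieces: when you normalize at an internal node pair or at a contracted boundary node, the detached closed piece can itself be nodal, and Proposition~\ref{prop:existence_smooth} covers only smooth spheres; you would need the nodal closed classification (which the paper imports from \cite{JKV} and \cite{CZ}) or else always normalize adjacent to a leaf of the dual tree.

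The paper sidesteps both problems by not inducting on nodes. It invokes the closed genus-zero classification once to produce the twisted spin bundle on all of $C$, then constructs the involution $\widetilde\phi$ by propagating from a single boundary point across components and through nodes, and finally builds the grading componentwise using Proposition~\ref{prop:compatibility_lifting_parity}. The paper also explicitly verifies, via the congruences~\eqref{eq:open_rank1} and~\eqref{eq:open_rank2}, that a contracted boundary node is forced to be Ramond; you assert that Definition~\ref{def:stablegraded} ``handles'' this case, but the Ramond property is a fact one must prove, not assume, for existence to get off the ground. Your automorphism analysis (ghosts at internal pairs survive, those at real nodes are killed by $\widetilde\phi$ and the grading) is essentially the paper's.
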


\begin{proof}
The proof of the existence and uniqueness of the graded structure closely mirrors the proof of Proposition \ref{prop:existence_smooth}.  In particular, there exists a unique twisted $r$-spin structure on the closed genus-zero surface $C$ with the given twists at all internal, conjugate, and boundary marked points, so what remains is to produce the involution $\widetilde{\phi}: S \rightarrow S$ and the sections $v$ and $w$.

First, suppose that there is no contracted boundary node.  Then the involution $\widetilde\phi$ can be defined component-by-component.  Namely, starting from some boundary point $p$, we first construct the involution on a single component using the argument of Proposition \ref{prop:existence_smooth}.  At each node, there is an identification of the fibers of $S$ on the two half-nodes, so when we encounter a boundary half-node, the involution on one side induces the involution on the other side, and we use the other half-node as the basepoint for the construction of the involution in its component. An analogous treatment works for conjugate internal nodes.  The resulting involution is unique up to an isomorphism.

The lifting $v$ and the grading $w$ can be defined componentwise, uniquely up to isomorphism (as we prove below), using either the parity of the twist (in the case where $r$ is odd) or the parity constraint \eqref{parity} (in the case where $r$ is even) to determine whether it alternates at each half-node.  Ramond boundary nodes appear only when $r$ is odd, and then there is no choice in the lifting. For Neveu--Schwarz boundary nodes, there are two choices of lifting when $r$ is even, but they are equivalent via the isomorphism scaling the fibers of $S$ by $-1$.

To prove that there is a unique choice of grading satisfying the requisite condition at a Neveu--Schwarz boundary node $q$, let $p$ and $p'$ be the two branches with respective twists $c_p$ and $c_{p'}$.  When $r$ is odd, the fact that $w$ alternates at exactly one branch is immediate from item \ref{it:lifting and parity_odd} of Proposition \ref{prop:compatibility_lifting_parity}, since exactly one of $c_p$ and $c_{p'}$ is odd.  When $r$ is even, assume for simplicity that the nodal Riemann surface $C$ consists of two sphere components joined at the node $q$.  Then, adding the parity constraints \eqref{parity} on the two components, one obtains
\[\frac{4  + 2 \sum a_i + \sum b_i + c_p + c_{p'}}{r} \equiv k + \delta_p^{\text{alt}} + \delta_{p'}^{\text{alt}} \mod 2,\]
where $k$ is the number of boundary marked points, and $\delta_p^{\text{alt}}$ is defined to be $1$ if $w$ alternates at $p$ and $0$ otherwise.  Combining this with the parity constraint \eqref{parity} on the entire curve, one finds
\[\frac{2 + c_p + c_{p'}}{r} \equiv \delta_p^{\text{alt}} + \delta_{p'}^{\text{alt}} \mod 2,\]
so we indeed see that if $c_p + c_{p'} = r-2$, then $w$ alternates at exactly one branch.

This completes the proof of the existence of the graded $r$-spin structure in the case where there is no contracted boundary node.  When there is a contracted boundary node, it must be Ramond. To see this, apply constraint \eqref{eq:open_rank1} to see that the internal twists satisfy
\begin{equation}
\label{constraint1}
-2-2\sum a_i \equiv 0 \mod r.
\end{equation}
On a single component of $C$, on the other hand, the degree of the restriction of $S$ is
\[\frac{-2-\sum a_i - c_p}{r},\]
so we must have
\begin{equation}
\label{constraint2}
-2-\sum a_i - c_p \equiv 0 \mod r
\end{equation}
Combining \eqref{constraint1} and \eqref{constraint2}, we find that $2(c_p+1) \equiv 0 \mod r$.  When $r$ is odd, this is sufficient to conclude that $c_p\equiv r-1 \mod r$.  When $r$ is even, we apply \eqref{eq:open_rank2} to find
\[\frac{2+2\sum a_i}{r} \equiv 0 \mod 2.\]
Therefore, we have
\[\frac{1+\sum a_i}{r} \in \Z,\]
so $1+ \sum a_i \equiv 0 \mod r$.  Combining this with \eqref{constraint2} again shows that $c_p\equiv r-1 \mod r$.

When there is a contracted boundary node $q$, the involution $\widetilde\phi$ can be defined first in the fiber of $q$, by choosing a lifting of the conjugation on $\omega_{\log}\big|_q$.  There are $r$ such choices; when $r$ is odd, they are all equivalent, while when $r$ is even, there are $r/2$ equivalent choices that make the structure compatible, but in either case, we choose one.  We then extend the conjugation to the other components of $C$, if there are any, component-by-component, using the argument in the beginning of this proof. Since the contracted boundary node is Ramond, we can now choose the grading as above. We observe that, again, there is no choice when $r$ is odd and there are two equivalent choices when $r$ is even.

Finally, we compute the order of the automorphism group.  The closed case is known (see, for example, \cite[Proposition 1.18]{JKV} or \cite[Section 2.3]{CZ}): an $r$-spin structure on a closed, genus-zero stable curve with $N$ nodes has $r^{N+1}$ automorphisms.  Namely, each node contributes $r$ ghost automorphisms of the curve, which can each be lifted to the spin bundle as in Remark~\ref{rmk:auts_of_spin}, and each of the resulting automorphisms of the spin structure can be composed with a global fiberwise scaling by an $r$th root of unity.

Let us now consider which of these automorphisms respects the graded structure.  By compatibility with the involution, a ghost automorphism at an internal node determines the ghost automorphism at its conjugate node, but is otherwise unconstrained.  Nontrivial ghost automorphisms at boundary and contracted boundary nodes, on the other hand, cannot respect both the conjugation and the graded structure.  Indeed, at a boundary node, \eqref{eq:ghost} may constitute an automorphism only if $\xi$ is real. For odd $r$, this implies that $\xi=1$.  When $r$ is even, $\xi=-1$ is also a possibility; however, in this case, by Remark \ref{rmk:auts_of_spin}, this ghost automorphism acts on the spin bundle by changing the gluing of the fibers at the node by a factor of $\xi^m$, in which $m$ is the multiplicity at one half-node. By Observation \ref{obs:ramond_bdry_nodes}, the twists at the half-nodes are even, so $m$ is odd.  It follows that for $\xi^m= (-1)^m =-1$, and hence the ghost automorphism does not respect the grading.  At a contracted boundary node, we consider the automorphisms in local coordinates, in which they have the form
\[(x,y) \mapsto (\xi x, \zeta y),\]
where $\xi$ and $\zeta$ are $r$th roots of unity.  Such an automorphism is compatible with the conjugation---which, locally, has the form $(x,y) \mapsto (\bar{y}, \bar{x})$---only if $\zeta=\bar{\xi}=\xi^{-1}$.  But in this case, given the orbifold structure at the node, it is the identity automorphism.

It follows, then, that there are $r^n$ contributing ghost automorphisms, where $2n$ is the number of internal nodes.  Since we have already argued in Proposition \ref{prop:existence_smooth} that the fiberwise scalings on disk components do not respect the graded structure, there are no further automorphisms.
\end{proof}

\section{The moduli space of graded $r$-spin disks}\label{sec:moduli}

Henceforth, we usually denote an $r$-spin disk with a lifting simply by $\Sigma,$ the preferred half, suppressing most of the notation.

\subsection{Stable graded $r$-spin graphs}

It is useful to encode some of the combinatorial data of graded $r$-spin disks in a decorated dual graph.

\begin{definition}\label{def:non_spin_graph}
A {\it genus-zero pre-stable dual graph} is a tuple
\[\Gamma = (V, H, \sigma_0, \sim, H^{CB}, m),\]
in which
\begin{enumerate}[(i)]
\item $V$ is a finite set (the {\it vertices}) equipped with a decomposition $V = V^O \sqcup V^C$ into {\it open} and {\it closed vertices};
\item $H$ is a finite set (the {\it half-edges}) equipped with a decomposition $H = H^B \cup H^I$ into {\it boundary} and {\it internal half-edges};
\item $\sigma_0: H \rightarrow V$ is a function, viewed as associating to each half-edge the vertex from which it emanates;
\item $\sim$ is an equivalence relation on $H$, which decomposes as a pair of equivalence relations $\sim_B$ on $H^B$ and $\sim_I$ on $H^I$.  The equivalence classes are required to be of size $1$ or $2$, and those of size $1$ are referred to as {\it tails}.  We denote by $T^B \subseteq H^B$ and $T^I \subseteq H^I$ the sets of equivalence classes of size $1$ in $H^B$ and $H^I$, respectively;
\item $H^{CB}$ is a subset of $T^I$, the {\it contracted boundary tails};
\item $m$ is a function given by
\[m = m^B \sqcup m^I: T^B \sqcup (T^I \setminus H^{CB}) \rightarrow \{0\}\cup2^\N_{*},\]
where $m^B$ and $m^I$ (the \emph{boundary} and \emph{internal markings}) satisfy the definition of a marking when restricted to any connected component of $\Gamma$, and where $m^I$ is strict on connected components with an open vertex or a contracted boundary tail.
\end{enumerate}
Note that $(V,H, \sigma_0)$ defines a graph, and that we do not require this graph to be connected; denote its set of connected components by $\Conn(\Gamma) = \{\Lambda_i\}$. We require the above data to satisfy the following conditions:
\begin{enumerate}[(1)]
\item For each boundary half-edge $h \in H^B$, we have $\sigma_0(h) \in V^O$;
\item For each $\Lambda_i$, we have $h^1(\Lambda_i ) = 0$;
\item Each $\Lambda_i$ contains at most one half-edge in $H^{CB}$, and if $\Lambda_i$ contains such a half-edge, then all vertices of $\Lambda_i$ are closed;
\item For each $\Lambda_i$, the sub-graph formed by its open vertices (if any exist) and their incident boundary edges is connected.
\end{enumerate}
Conditions (2),~(3), and (4) guarantee that for any nodal graded $r$-spin surface with dual graph $\Gamma$, each connected component of the closed surface $C$ has genus zero.
\end{definition}

We refer to elements of $T^B$ as {\it boundary tails} and elements of $T^I \setminus H^{CB}$ as {\it internal tails}, and we denote
$T := T^I \sqcup T^B.$
Note that $\sim$ induces a fixed-point-free involution on $H \setminus T$, which we denote by $\sigma_1$.  Write
\[E^B := (H^B \setminus T^B)/\sim_B,\;\;\;\;\;\;\; E^I := (H^I \setminus T^I)/\sim_I\]
and refer to these as {\it boundary edges} and {\it internal edges}.
The set of {\it edges} is
$E := E^B \sqcup E^I.$
Denote by $\sigma_0^B$ the restriction of $\sigma_0$ to $H^B$, and similarly for $\sigma_0^I$.

For each vertex $v$, set
$k(v) := |(\sigma_0^B)^{-1}(v)|,$ and $l(v):= |(\sigma_0^I)^{-1}(v)|.$
%\[k(v) := |(\sigma_0^B)^{-1}(v)|, \;\;\;\;\;\;\;B(v)=m^B(T^B\cap (\sigma_0^B)^{-1}(v)).\]
%Analogously, set
%\[l(v):= |(\sigma_0^I)^{-1}(v)|,\;\;\;\;\;\;\; I(v)= m^I((T^I\setminus H^{CB})\cap (\sigma_0^I)^{-1}(v) ).\]
We say that an open vertex $v \in V^O$ is {\it stable} if $k(v) + 2l(v) > 2$, and we say that a closed vertex $v \in V^C$ is {\it stable} if $l(v) > 2$. %An open vertex is \emph{partially stable} if $k(v)=0$ and $l(v)=1.$
A graph is {\it stable} %({\it partially stable})
if all of its vertices are stable, %(stable or partially stable),
and it is {\it closed} if $V^O = \emptyset.$ A graph is \emph{smooth} if there are no edges or contracted boundary tails.

\begin{definition}
An {\it isomorphism} between two genus-zero pre-stable dual graphs
\[\Gamma = (V,H, \sigma_0, \sim, H^{CB}, m) \; \text{ and } \; \Gamma' = (V', H', \sigma_0', \sim', H'^{CB}, m')\]
is a pair $f = (f^V, f^H)$, where
$f^V: V \rightarrow V'$ and $f^H: H \rightarrow H'$
are bijections satisfying
\begin{enumerate}[(1)]
\item $h_1 \sim h_2$ if and only if $f(h_1) \sim' f(h_2)$,
\item $f^{V} \circ \sigma_0 =  \sigma'_0 \circ f^H$,
\item $m = m' \circ f^H$,
\item $f(H^{CB}) = H'^{CB}$.
\end{enumerate}
We denote by $\text{Aut}(\Gamma)$ the group of automorphisms of $\Gamma$.
\end{definition}

Pre-stable dual graphs encode the discrete data of a marked orbifold Riemann surface with boundary.  In order to encode the additional data of a twisted spin structure and a lifting, we must add further decorations.

\begin{definition}
\label{def:graph}
A {\it genus-zero twisted $r$-spin dual graph with a lifting} is a genus-zero pre-stable dual graph $\Gamma$ as above, together with maps
\[\twE: H \rightarrow \{-1, 0, 1, \ldots, r-1\}\]
(the {\it twist}) and
\[\alt: H^B \rightarrow \Z/2\Z\]
and a subset $T^*\subseteq T^I$ (the {\it anchors}), satisfying the following conditions:
\begin{enumerate}[(i)]
\item
Any connected component of $\Gamma$ that is not stable consists either of (a) a single open vertex with a single internal tail, or (b) a single closed vertex with exactly two tails, one of which is in $H^{CB}$ and the other of which is in $H^I$.

\item
Every closed connected component contains exactly one tail in $T^*$.  All contracted boundary tails and all tails $t$ with $\twE(t) =-1$ belong to $T^*.$ Open connected components have no tails in $T^*$. Any element of $T^*\setminus H^{CB}$ is marked $0$ and is the only tail marked $0$ in its connected component.

\item
For any vertex $v$, the total number of incident half-edges $h$ with $h \in T^*$ or $\twE(h) = -1$ is at most one.

\item
\label{cond1} For any contracted boundary tail $t \in H^{CB}$, we have $\twE(t) =r-1$.

\item
 \label{item} For any open vertex $v \in V^O$,
\[2\sum_{h \in (\sigma_0^I)^{-1}(v)} \twE(h) + \sum_{h \in (\sigma_0^B)^{-1}(v)} \twE(h) \equiv r-2 \mod r\]
and
\[\frac{2\sum_{h \in (\sigma_0^I)^{-1}(v)}\ \twE(h) + \sum_{h \in (\sigma_0^B)^{-1}(v)} \twE(h) + 2}{r} \equiv \sum_{h \in (\sigma_0^B)^{-1}(v)} \alt(h) \mod 2.\]

\item
For any closed vertex $v \in V^C$,
\[\sum_{h \in \sigma_0^{-1}(v)} \twE(h) \equiv r-2 \mod r.\]

\item
\label{cond2} For any half-edge $h \in H \setminus T$, we have
\[\twE(h) + \twE(\sigma_1(h)) \equiv r -2 \mod r,\]
and at most one of $\twE(h)$ and $\twE(\sigma_1(h))$ equals $-1$.  No boundary half-edge $h$ satisfies $\twE(h)=-1.$ In case $h \in H^I \setminus T^I$ satisfies $\twE(h) \equiv -1\mod~r,$ then $\twE(h)=r-1$ precisely if, after detaching the edge, $h$ belongs to the connected component containing an anchor $t^* \in T^*$ (if $h$ is in a closed connected component of $\Gamma$) or an open vertex $v \in V^O$ (if $h$ is in an open connected component of $\Gamma$).

\item
\label{it:-1}For any boundary half-edge $h \in H^B \setminus T^B$, if $\twE(h) \neq r-1 $ we have
\[\alt(h) + \alt(\sigma_1(h)) = 1\]
and if $\twE(h) =r -1$ then $\alt(h) = \alt(\sigma_1(h)) = 0.$%we have
%\[\alt(h) = \alt(\sigma_1(h)) = 0.\]

\item
If $r$ is odd, then for any $h \in H^B$,
\[\alt(h) \equiv \twE(h) \mod 2,\]
and if $r$ is even, then for any $h \in H^B$,
\[\twE(h) \equiv 0 \mod 2.\]

\end{enumerate}
Boundary half-edges $h$ with $\alt(h) = 1$ are called {\it legal}, and those with $\alt(h) = 0$ are called {\it illegal}.  Half-edges $h$ with $\twE(h) \in \{-1, r-1\}$ are called {\it Ramond}, and those with $\twE(h) \in \{0, \ldots, r-2\}$ are called {\it Neveu--Schwarz}.  An edge is called {\it Ramond} if one (hence both) of its half-edges is Ramond, and Neveu--Schwarz otherwise.
\end{definition}

We say that a genus-zero twisted $r$-spin dual graph with a lifting is {\it stable} if the underlying dual graph is stable, in the sense specified above.  An {\it isomorphism} between genus-zero twisted $r$-spin dual graphs with liftings consists of an isomorphism in the sense of Definition~\ref{def:graph} that respects $\twE,~\alt,$ and $T^*.$  Analogously to Definition~\ref{def:stablegraded}, we define a {\it genus-zero graded $r$-spin graph} to be a genus-zero twisted $r$-spin dual graph with a lifting, such that every boundary tail $h \in H^B$ has
\[\twE(h) = r-2, \;\;\;\; \alt(h) = 1.\]
Any stable graded $r$-spin disk $\Sigma$ induces a stable genus-zero graded $r$-spin graph $\Gamma(\Sigma)$.

\subsection{Moduli of stable graded $r$-spin disks}\label{subsec:mod_const}

In the situation without boundary, there is a well-studied moduli space $\M_{g,n}^{1/r}$ of stable Riemann surfaces with $r$-spin structure, which is known to be a smooth Deligne--Mumford stack with projective coarse moduli for which the forgetful map to $\M_{g,n}$ is finite (see \cite{ChiodoStable} or, in the setting of a slightly different compactification, \cite{Jarvis}).  This moduli space admits a decomposition into open and closed substacks,
\begin{equation}
\label{eq:closedmoduli}
\M_{g,n}^{1/r} = \bigsqcup_{\vec{a} = (a_1, \ldots, a_n)} \M_{g,\vec{a}}^{1/r},
\end{equation}
where $a_i \in \{0,1,\ldots, r-1\}$ for each $i$ and $\M_{g,\vec{a}}^{1/r}$ denotes the substack of $r$-spin structures with twist $a_i$ at the $i$th marked point.  In genus zero, the situation is even simpler: according to Proposition~\ref{prop:existence_smooth}, for any choice of $\vec{a}$ such that \eqref{eq:close_rank1} holds, the moduli space $\M^{1/r}_{0,\vec{a}}$ has coarse moduli isomorphic to $\M_{0,n}$ and generic additional isotropy $\Z/r\Z$. The isomorphism of coarse moduli is given by the smooth map $\text{For}_{\text{spin}}$ that forgets the spin structure.

When there is no boundary, it is also straightforward to add the information of a grading.  Indeed, the moduli space of graded $r$-spin spheres for which the anchor does not have twist $r-1$ is canonically isomorphic to $\M_{0,n}^{1/r}$, whereas the moduli space of graded $r$-spin spheres with twist $\vec{a}$ and anchor twisted $r-1$ is, canonically, an $r$-to-$1$ cover of the moduli space $\M_{0,\vec{a'}}^{1/r}$ on which $\vec{a'}$ agrees with $\vec{a}$ except that the anchor has twist $-1$.

To generalize the construction of the moduli space to the open setting, we first note that in \cite{PST14}, the moduli space $\M_{0,k,l}$ of connected stable marked disks with boundary marked points marked by $\{1, \ldots, k\}$ and internal marked points marked by $\{1, \ldots, l\}$ was considered.  It is a smooth orientable manifold with corners in the sense of \cite{Joyce}, and its dimension is
\[\dim_{\R}(\M_{0,k,l}) = k+2l-3.\]
Let $\M_{0,k,l}^{1/r}$ denote the set of isomorphism classes of connected stable graded $r$-spin disks, with boundary and internal marked points as above.  There is a set-theoretic decomposition analogous to \eqref{eq:closedmoduli},
\[\M_{0,k,l}^{1/r} = \bigsqcup_{\vec{a}} \M_{0,k, \vec{a}}^{1/r},\]
in which $\M^{1/r}_{0,k,\vec{a}} \subset \M^{1/r}_{0,k,l}$ consists of those disks for which the $i$th internal marked point has twist $a_i$.  By Proposition~\ref{prop:existence_stable}, whenever $\M_{0,k,\vec{a}}^{1/r} \neq \emptyset,$ there is a bijection given by the forgetful map
\[\text{For}_{\text{spin}}: \M_{0,k,\vec{a}}^{1/r} \rightarrow \M_{0,k,l},\]
and we use these to give $\M_{0,k,l}^{1/r}$ the structure of a manifold with corners.

This describes the coarse underlying space of $\M_{0,k,l}^{1/r}$.  The main theorem of this section is that it can also be given an orbifold-with-corners structure in the sense of \cite[Section 3]{Zernik}:
\begin{thm}\label{thm:moduli}
The moduli space $\M_{0,k,l}^{1/r}$ of connected stable graded $r$-spin disks with boundary marked points marked by $\{1, \ldots, k\}$ and internal marked points marked by $\{1, \ldots, l\}$ is a compact smooth orientable orbifold with corners of real dimension $k+2l-3$. Its universal bundle admits a universal grading.
\end{thm}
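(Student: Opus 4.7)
The plan is to construct the orbifold-with-corners structure by realizing each stratum locally as the fixed locus of an anti-holomorphic involution on a chart of the closed $r$-spin moduli, combined with the manifold-with-corners structure of $\M_{0,k,l}$ from PST14. By Proposition~\ref{prop:existence_stable}, the map $\text{For}_{\text{spin}}$ is a set-theoretic bijection on each nonempty stratum $\M^{1/r}_{0,k,\vec a}$, which immediately pulls back compactness, the dimension count $k+2l-3$, and a candidate orientation from PST14's construction. The task is then to upgrade this underlying topological identification to smooth orbifold charts at corners, to compute the isotropy groups, and to produce a canonical orientation.

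To build the orbifold charts, I would associate to every graded $r$-spin disk its complex double: the closed genus-zero $r$-spin curve $C = \Sigma\cup_{\partial\Sigma}\overline\Sigma$ carrying an anti-holomorphic involution $\phi$ lifted to the spin bundle $S$ by $\widetilde\phi$. Augmenting the markings by the conjugate internal points together with the required anchors produces a point in a smooth Deligne--Mumford stack $\M^{1/r}_{0,\vec a'}$ of closed $r$-spin spheres, whose local structure is well understood. The involution $\phi$ extends to a real-analytic anti-holomorphic involution on a versal deformation chart, and the local chart for $\M^{1/r}_{0,k,l}$ is cut out as the real fixed locus of this involution, together with the discrete data encoding the grading and, when relevant, the distinguished lifted involution and positive direction at anchors of twist $r-1$. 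A direct dimension count on the real fixed locus yields $k+2l-3$ once the contribution of the added anchors is subtracted.

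The orbifold isotropy at each point is computed by Proposition~\ref{prop:existence_stable} as $(\Z/r)^n$, where $2n$ is the number of internal nodes of $C$; this is the image, in the group of automorphisms of the chart, of the ghost automorphisms at the $n$ conjugate pairs of internal nodes. Boundary and contracted boundary nodes contribute no further isotropy, again by Proposition~\ref{prop:existence_stable}. The corner structure in the sense of Netser--Zernik is inherited from the PST14 corners, which correspond exactly to boundary nodes in the underlying curve: smoothing each boundary node is a real one-parameter deformation compatible with $\phi$, while smoothing a conjugate pair of internal nodes is a complex one-parameter deformation compatible with $\phi$, so the transition maps between charts are smooth in the required sense. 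Orientability then follows because the $(\Z/r)^n$ isotropy acts only on the spin fibers, leaving the tangent bundle of the chart fixed pointwise, so the PST14 orientation of the coarse moduli lifts canonically to the orbifold.

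Finally, the universal grading on the universal spin bundle $\mathcal{S}$ is produced pointwise from the grading data selected at each moduli point, and continuity across the family follows because the set of equivalence classes of gradings on a fixed compatible twisted $r$-spin bundle is finite and the appropriate class is locally constant. The main obstacle I anticipate is verifying smooth compatibility at the deepest corner strata, where boundary nodes and internal nodes coexist: one must check carefully that the real smoothing parameters at boundary nodes glue smoothly with the $r$-fold orbifold chart contributed by the spin structure at each conjugate pair of internal nodes, and that the anchor selection rules and grading data vary smoothly across such degenerations. The rest of the argument is essentially bookkeeping once the chart structure and the involution picture are in place.
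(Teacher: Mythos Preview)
Your overall strategy---pass to the complex double, take the fixed locus of the anti-holomorphic involution on the closed $r$-spin moduli, and pull back the PST14 structure---is the same as the paper's. But two steps are genuinely missing, and both are exactly where the paper does real work.

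\textbf{The corner structure.} The real fixed locus $\M_{0,k+2l}^{1/r,\Z_2}$ of the involution is a smooth real orbifold, \emph{not} an orbifold with corners. The corners have to be manufactured by cutting along the real simple-normal-crossings divisor of curves with a boundary (or contracted boundary) node; the paper does this via the ``real hyperplane blow-up'' of \cite{Zernik}, then takes a $2$-to-$1$ cover to record the preferred half $\Sigma$. Saying that corners are ``inherited from the PST14 corners'' skips precisely this step. There is also an even-$r$ subtlety you miss: before the blow-up, a boundary node contributes an extra $\Z/2\Z$ ghost automorphism $(x,y)\mapsto(-x,y)$ (see Remark~\ref{rmk:auts_of_spin} and the discussion in Remark~\ref{rem:isotropy}), so your assertion that boundary nodes ``contribute no further isotropy'' is not correct at this stage. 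The blow-up is exactly what kills that isotropy, and your proposal never introduces it.

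\textbf{The universal grading.} For odd $r$ your argument is fine, since the grading is unique. For even $r$, however, there are two equivalence classes of gradings at every smooth point, and ``the appropriate class is locally constant'' is not enough to produce a global continuous choice: you must rule out monodromy along loops in the interior of the moduli. The paper does this in Lemma~\ref{lem:2-1 cover} by showing that $\pi_1$ of $\text{Int}(\M_{0,k,l}^{1/r})$ is trivial (via an explicit contraction argument using the forgetful fibrations and the unit-disk model), and only then extends the grading to the boundary by continuity. Your proposal has no substitute for this step.
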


We split the proof of Theorem \ref{thm:moduli} into three parts. Lemma \ref{lem:orbi_str} shows that the moduli space is a compact smooth orbifold with corners, Lemma \ref{lem:2-1 cover} proves the existence of the universal grading, and Proposition \ref{prop:or_moduli} together with Observation \ref{obs:or_with_without_spin} proves the orientability.

\begin{lemma}\label{lem:orbi_str}
$\M_{0,k,l}^{1/r}$  has the structure of a compact smooth orbifold with corners.
\end{lemma}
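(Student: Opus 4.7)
The plan is to transport the topology from the coarse moduli and then construct local orbifold charts via a doubling construction. By Proposition \ref{prop:existence_stable}, the forgetful map $\text{For}_{\text{spin}}: \M_{0,k,l}^{1/r} \to \M_{0,k,l}$ is a set-theoretic bijection, and \cite{PST14} shows that $\M_{0,k,l}$ is a compact smooth orientable manifold with corners of dimension $k+2l-3$; pulling back along this bijection yields compactness and the correct underlying point-set topology on $\M_{0,k,l}^{1/r}$.

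To obtain local smooth charts, I would use a doubling construction. Given $[\Sigma] \in \M_{0,k,l}^{1/r}$, form the closed $r$-spin curve $\widetilde{C} = \Sigma \cup_{\partial \Sigma} \bar \Sigma$ equipped with its anti-holomorphic involution $\phi$. A neighborhood of $[\widetilde{C}]$ in the closed moduli is a smooth Deligne--Mumford stack locally modeled on a polydisk modulo the automorphism group of $\widetilde{C}$ (see \cite{ChiodoStable, Jarvis}); call this versal chart $V$. The involution $\phi$ extends canonically to an anti-holomorphic involution of $V$ (by conjugating the complex structure and the spin bundle), and a neighborhood of $[\Sigma]$ in $\M_{0,k,l}^{1/r}$ is cut out by the real locus $V^\phi$, with the grading, anchor, and positive-direction data of Definition \ref{def:stablegraded} singling out a specific connected component. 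The residual orbifold group action is computed by keeping only those automorphisms of $\widetilde{C}$ that commute with $\phi$ and preserve the grading; by the argument in the last part of the proof of Proposition \ref{prop:existence_stable}, this gives exactly $(\Z/r)^n$, matching the isotropy predicted by that proposition, where $2n$ is the number of internal nodes.

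I would then identify the coordinate directions of $V^\phi$: complex smoothing parameters at the $n$ pairs of conjugate internal nodes are preserved as complex parameters with residual $\Z/r$ action by rotation at each pair; smoothing parameters at boundary nodes lie in the fixed locus of conjugation and become real, contributing the corner directions; and the smoothing parameter at a contracted boundary node is purely imaginary by the residue condition in Definition \ref{def:stablegraded}, also giving a corner direction with trivial isotropy. The main obstacle I expect is the local model analysis at the corners: one must verify that $V^\phi$ is an orbifold with corners in the precise sense of \cite{Zernik}, by writing down the local equations $xy=t$ for node smoothings in a conjugation-equivariant fashion and checking that the internal, boundary, and contracted-boundary contributions combine into a product of an orbifold chart with group $(\Z/r)^n$ and a standard corner chart, and that these charts glue consistently as $[\Sigma]$ varies, which follows from the uniqueness (up to the ghost action) of the graded $r$-spin lift asserted in Proposition \ref{prop:existence_stable}.
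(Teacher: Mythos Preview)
Your overall strategy---double the disk to a closed real spin sphere and take the fixed locus of the anti-holomorphic involution on a versal chart---is exactly the backbone of the paper's argument. However, there is a genuine gap: the fixed locus $V^\phi$ is a real orbifold, not an orbifold with corners, and your proposal does not explain how the corners actually arise.

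Concretely, near a boundary node the real smoothing parameter is an element of $\R$, not of $[0,\infty)$. The two signs of $t$ both parameterize honest graded disks with a preferred half: geometrically, crossing $t=0$ flips one of the two disk components relative to the other. So ``choosing a connected component'' of $V^\phi$ does not separate $t>0$ from $t<0$; the nodal stratum sits as a codimension-one real hypersurface through which one passes smoothly. (For a contracted boundary node the component choice does work, since $t<0$ gives a fixed-point-free involution, but this is only one of the corner types.) The paper handles this by an explicit real hyperplane blow-up along the divisor of curves with a real node, following \cite{Zernik}: one replaces a neighborhood $U\times[-1,1]$ by $U\times[-1,0]\sqcup U\times[0,1]$, which is precisely what produces the corners. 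There is an additional subtlety when $r$ is even: a boundary node contributes an extra $\Z/2\Z$ ghost automorphism to the isotropy of $V^\phi$, and it is the blow-up (performed before the quotient, then quotiented) that kills this extra isotropy. Your isotropy count $(\Z/r\Z)^n$ is the correct answer for $\M_{0,k,l}^{1/r}$, but it is not what you would see on $V^\phi$ itself at a boundary-nodal point for even $r$, so the chart you describe does not yet model the right orbifold. Adding the real blow-up step (and the cover by choice of grading when $r$ is even) would complete your argument along the same lines as the paper.
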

\begin{proof}  We describe a procedure that defines an orbifold-with-corners structure on $\M_{0,k,l}^{1/r}$, analogous to the procedure performed in \cite[Section 2]{Zernik}.  To define the procedure, we make reference to the following sequence:
\begin{equation}
\label{eq:OWCsequence}
\M_{0,k,l}^{1/r}\stackrel{(5)}{\to}\widehat{\mathcal{M}}_{0,k,l}^{1/r}\stackrel{(4)}{\hookrightarrow}\widetilde{\mathcal{M}}_{0,k,l}^{1/r}\stackrel{(3)}{\to}\widetilde{\mathcal{M}}_{0,k,l}^{1/r,\Z_2}\stackrel{(2)}{\to}
\overline{\mathcal{M}}_{0,k+2l}^{1/r,\Z_2}\stackrel{(1)}{\to}\overline{\mathcal{M}}_{0,k+2l}^{'1/r}.
\end{equation}
The moduli spaces and maps appearing in \eqref{eq:OWCsequence} are defined as they appear in what follows.

\textbf{Step 1: }
First, $\overline{\mathcal{M}}_{0,k+2l}^{'1/r}$ is the suborbifold of $\M_{0,k+2l}^{1/r}$ given by the condition that the first $k$ markings have twist $r-2$ and that the integer defined in \eqref{eq:close_rank1} is of the same parity as $k+1$.  Inside this space, $\M_{0,k+2l}^{1/r, \Z_2}$ is the fixed locus of the involution defined by
\[(C;w_1, \ldots, w_{k+2l}, S) \mapsto (\overline{C}; w_1, \ldots, w_k, w_{k+l+1},\ldots, w_{k+2l},w_{k+1},\ldots,w_{k+l}, \overline{S}),\]
where $\overline{C}$ and $\overline{S}$ are the same as $C$ and $S$ but with the conjugate complex structure (more details on the fixed point functor on stacks can be found in \cite{fixedPointStack}).  As the fixed locus of an anti-holomorphic involution, $\M_{0,k+2l}^{1/r, \Z_2}$ has the structure of a real orbifold.  It parameterizes isomorphism types of marked spin spheres with a real structure (an involution $\widetilde\phi$ covering the conjugation $\phi$ on $C$) and the prescribed twists, and it maps to $\M_{0,k+2l}^{1/r}$ (in general it is not a sub-orbifold, since some isotropy is lost), so it inherits a universal curve via pullback.

\begin{rmk}
\label{rem:isotropy}
Let us digress to discuss the isotropy of $\M_{0,k+2l}^{1/r,\Z_2}$, especially near nodal strata.\footnote{Without spin structure, the nodal strata of the real moduli space $\M_{0,k,l}$ are discussed in \cite[Section 3]{Liu}, and in the closed case with spin structure, the nodal strata of $\M_{0,l}^{1/r}$ are discussed in \cite[Section 4]{ChiodoGMS}.}  The generic point of $\M_{0,k+2l}^{1/r,\Z_2}$ has isotropy coming from scaling the fibers of the spin bundle by real $r$th roots of unity.   When $r$ is odd, there are no such roots and hence no generic isotropy, while when $r$ is even, there is generic $\Z/2\Z$ isotropy.  Nodal strata have additional $\Z/r\Z$ isotropy for each internal Neveu--Schwarz node, coming from the ghost automorphisms.

For boundary nodes, there is a difference in behavior for $r$ odd or even.  When $r$ is odd, boundary nodes also contribute no further isotropy.  Furthermore, if $U\times[-1,1]$ is a neighborhood in the moduli space of a curve with a single boundary node, such that $U\times\{0\}$ is the intersection with the nodal stratum and $(u,t)$ for $t\neq 0$ corresponds to a smooth real sphere, then the passage from $t<0$ to $t>0$ geometrically corresponds to flipping one of the two disk components and defining the involution on the spin bundle in the unique possible way.  Thus, in this case, the behavior near the node is exactly like in the real, non-spin case, and the spin moduli continues to be a trivial degree-one cover of the non-spin moduli generically.  When $r$ is even, on the other hand, boundary nodes contribute additional $\Z/2\Z$ isotropy, coming from ghost automorphisms of the form $(x,y)\to(-x,y)$; see the discussion in the proof of Proposition \ref{prop:existence_stable}.  On the moduli level, the picture is that a neighborhood of a nodal curve with a single boundary node looks locally like $U\times[-1,1]/(\Z/2\Z)$, where $U\times\{0\}$ is the nodal locus and the generator of $\Z/2\Z$ takes $(u,t)$ to $(u,-t).$

Contracted boundary nodes add no additional isotropy, as we saw in the end of the proof of Proposition \ref{prop:existence_stable}.  On the moduli level, again let $U\times[-1,1]$ be a neighborhood in the moduli space of a curve with a contracted boundary, such that $U\times\{0\}$ is the intersection with the nodal stratum and $(u,t)$ for $t> 0$ corresponds to a smooth real sphere on which the conjugation has nonempty fixed locus.  Then $(u,t)$ for $t<0$ corresponds to a real sphere on which the conjugation has no fixed points, so after taking the quotient by the conjugation, the result is a marked real projective plane.
\end{rmk}

\textbf{Step 2: }
Returning to our discussion of \eqref{eq:OWCsequence}, the next step is to cut $\overline{\mathcal{M}}_{0,k+2l}^{1/r,\Z_2}$ along the real simple normal crossings divisor consisting of curves with at least one real node, via the ``real hyperplane blow-up" as in \cite{Zernik}, yielding an orbifold with corners $\widetilde{\mathcal{M}}_{0,k,l}^{1/r, \Z_2}$.  We direct the reader to \cite[Section 3.3]{Zernik} for more details, but the idea is the following:

Near the real divisor consisting of curves with a contracted boundary node---or, when $r$ is odd, near the real divisor consisting of curves with a boundary node (a real node that is not an isolated fixed point of the conjugation)---the real blow-up is the standard cutting procedure that can be defined without a spin structure. In the notation of Remark \ref{rem:isotropy}, the real blow-up corresponds to the natural quotient map
\[U\times[-1,0]\sqcup U\times[0,1]\to U\times[-1,0]\cup U\times[0,1]= U\times[-1,1].\]
When $r$ is even, the blow-up near the real divisor consisting of curves with a boundary node is a topologically trivial operation but nontrivial on the orbifold level. In particular, in local charts, the real blow-up when $r$ is even is equivalent to blowing up before taking the extra $\Z/2\Z$ quotient mentioned in Remark \ref{rem:isotropy}, and then taking the quotient, so it kills the additional $\Z/2\Z$ isotropy on nodal strata.  In other words, the blow-up is locally the map
\[\left(U\times[-1,0]\sqcup U\times[0,1]\right)/(\Z/2\Z)\to (U\times[-1,0]\cup U\times[0,1])/(\Z/2\Z)= U\times[-1,1]/(\Z/2\Z),\] where the generator of $\Z/2\Z$ takes $(u,t)\in U\times[-1,0]\sqcup U\times[0,1]$ to $(u,-t).$

\textbf{Step 3: }
Consider the subset of $\widetilde{\mathcal{M}}_{0,k,l}^{1/r,\Z_2}$ whose generic point is a smooth marked real spin sphere with nonempty real locus.  Then $\widetilde{\mathcal{M}}_{0,k,l}^{1/r}$ is the disconnected $2$-to-$1$ cover of this subset given by the choice of a distinguished connected disk component of $C\setminus C^\phi$.  Equivalently, in the generic (smooth) situation, we first restrict to the connected components of $\widetilde{\mathcal{M}}_{0,k,l}^{1/r,\Z_2}$ consisting of real spheres on which the conjugation has nonempty fixed locus, and then we choose an orientation for $C^{\phi}$.  It is important to note, however, that this choice can be uniquely continuously extended to points in the boundary of $\widetilde{\mathcal{M}}_{0,k,l}^{1/r}.$

\textbf{Step 4: }
Inside $\widetilde{\mathcal{M}}_{0,k,l}^{1/r}$, we denote by $\widehat{\mathcal{M}}_{0,k,l}^{1/r}$ the union of connected components such that the marked points $w_{k+1},\ldots, w_{k+l}$ lie in the distinguished stable disk and, for even $r,$ the spin structure is compatible.

\textbf{Step 5: }
Finally, $\M_{0,k,l}^{1/r}$ is the cover of $\widehat{\mathcal{M}}_{0,k,l}^{1/r}$ given by a choice of grading. When $r$ is odd, this is the identity, while when $r$ is even, it is a $2$-to-$1$ cover given by forgetting the global $\Z/2\Z$ isotropy.

As a manifold with corners, $\M_{0,k,l}^{1/r}$ is indeed the same space defined previously, but now it has the additional structure of an orbifold with corners.  The proof that $\widehat{\mathcal{M}}_{0,k,l}^{1/r}$ is an orbifold with corners is identical to the proof of the analogous Theorem 2 in \cite{Zernik}. The space $\M_{0,k,l}^{1/r}$ then inherits the orbifold-with-corners structure from $\widehat{\mathcal{M}}_{0,k,l}^{1/r}$. It is moreover compact since compactness is preserved at every step.
\end{proof}

Over $\M_{0,k,l}^{1/r}$, there is a universal curve whose fibers are compatible stable spin disks. The content of Lemma \ref{lem:2-1 cover} below is that one may construct a graded structure, in a continuous way, on the fibers.
\begin{lemma}\label{lem:2-1 cover}
One can continuously choose a grading for the fibers of the universal curve of $\mathcal{M}_{0,k,l}^{1/r}.$ This choice is unique when $r$ is odd, while for even $r$ it is unique up to a global change of grading in each connected component of the moduli space.
\end{lemma}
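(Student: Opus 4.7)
The plan is to reduce the claim to the explicit construction of $\mathcal{M}_{0,k,l}^{1/r}$ carried out in Lemma \ref{lem:orbi_str}, using existence from the tautological structure of that construction and uniqueness from the pointwise classification of gradings in Proposition \ref{prop:compatibility_lifting_parity}.

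For existence, I would invoke Step 5 of the sequence \eqref{eq:OWCsequence}: when $r$ is odd the grading is pointwise unique and $\mathcal{M}_{0,k,l}^{1/r}=\widehat{\mathcal{M}}_{0,k,l}^{1/r}$, while when $r$ is even $\mathcal{M}_{0,k,l}^{1/r}$ was defined as the $2$-to-$1$ cover of $\widehat{\mathcal{M}}_{0,k,l}^{1/r}$ parametrizing a choice of grading. In either case a point of $\mathcal{M}_{0,k,l}^{1/r}$ carries by construction a preferred equivalence class of lifting of $|J|^{\widetilde\phi}$ over the real locus of the corresponding fiber, and this gives the universal tautological grading. Continuity is immediate from the fact that $\mathcal{M}_{0,k,l}^{1/r}\to\widehat{\mathcal{M}}_{0,k,l}^{1/r}$ is a local homeomorphism: over a small chart of $\widehat{\mathcal{M}}_{0,k,l}^{1/r}$ the bundle $|J|^{\widetilde\phi}$, together with the residue datum at any contracted boundary, is a continuous line bundle whose nowhere-vanishing sections with the prescribed alternation pattern form a locally trivial discrete cover, and we simply take the sheet selected by the moduli point.

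For uniqueness, suppose $[w]$ and $[w']$ are two continuous universal gradings. For odd $r$, item \ref{it:compatibility_odd} of Proposition \ref{prop:compatibility_lifting_parity} forces $[w]=[w']$ fiberwise. For even $r$, applying item \ref{it:lifting and parity_even} of the same proposition fiber-by-fiber---and tracking, via the proof of Proposition \ref{prop:existence_stable}, how signs at nodes propagate to a single global $\pm 1$ ambiguity on a connected graded disk---yields that $[w']=\epsilon\cdot[w]$ for some function $\epsilon:\mathcal{M}_{0,k,l}^{1/r}\to\{\pm 1\}$. Discreteness of the fiber choices together with continuity of both $[w]$ and $[w']$ forces $\epsilon$ to be locally constant, which gives the claimed ambiguity.

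The main technical obstacle is verifying that the tautological grading extends continuously across boundary strata of $\mathcal{M}_{0,k,l}^{1/r}$, especially near Neveu--Schwarz and contracted boundary nodes. At a Neveu--Schwarz boundary node, the legality condition in Definition \ref{def:stablegraded} rigidly determines which half-node is legal, so smoothing the node to a cylinder produces a continuous lifting whose alternation pattern on the smoothed boundary agrees with the nodal limit. At a contracted boundary node, the role of the grading at $q$ is played by a vector in $|J|_q^{\widetilde\phi}$ with $\langle v,w\rangle\in i\R_+$; continuity then reduces to continuity of the residue pairing and of $\widetilde\phi$ in families, both of which are standard facts about deformations of spin structures and are already encoded in the blow-up procedure of Step 2 of Lemma \ref{lem:orbi_str}.
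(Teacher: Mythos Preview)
Your argument for existence is essentially circular, and this is the genuine gap. You invoke Step~5 of Lemma~\ref{lem:orbi_str} to say that a point of $\mathcal{M}_{0,k,l}^{1/r}$ already carries a preferred grading ``by construction,'' making existence tautological. But look at what the paper says immediately before Lemma~\ref{lem:2-1 cover}: ``Over $\M_{0,k,l}^{1/r}$, there is a universal curve whose fibers are \emph{compatible} stable spin disks. The content of Lemma~\ref{lem:2-1 cover} below is that one may construct a graded structure, in a continuous way, on the fibers.'' The point is that Step~5's ``$2$-to-$1$ cover given by forgetting the global $\Z/2\Z$ isotropy'' is, on the level of coarse moduli, a $1$-to-$1$ map: by Proposition~\ref{prop:existence_stable} the two gradings on a compatible disk are isomorphic via the $-1$ automorphism of $S$, so isomorphism classes of graded disks and isomorphism classes of compatible disks coincide. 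Thus a coarse point of $\mathcal{M}_{0,k,l}^{1/r}$ does \emph{not} pick out one of the two gradings, and the lemma is precisely the assertion that such a choice can be made globally and continuously.

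The paper's proof supplies the missing ingredient you skipped: it shows that $\pi_1(\oCM_{0,k,l})=0$ by an explicit contraction of loops, whence the $\{\pm 1\}$-local system of gradings over the interior has no monodromy and admits a global section. Your uniqueness argument and your sketch of the boundary extension are fine and in the same spirit as the paper's, but without the simple-connectivity step the existence half of the lemma is unproven. If you want to salvage the tautological viewpoint, you would need to argue carefully at the level of the orbifold groupoid (objects carry gradings, morphisms respect them) rather than the coarse space, and then explain why this yields a continuous section in the sense the paper needs---which ultimately comes back to the same monodromy question.
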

\begin{proof}
We first prove the lemma for the universal curve over
\[\text{Int}(\mathcal{M}_{0,k,l}^{1/r}):=\oCMr\setminus\partial\oCMr.\]
The statement is clear when $r$ is odd, and its uniqueness up to a possible global change of grading in each component is immediate when $r$ is even, given the existence. We hence prove the existence for even $r.$

The fact that this choice can be made locally in a continuous way is straightforward.  The obstruction to making such a choice global in $\text{Int}(\mathcal{M}_{0,k,l}^{1/r})$ is the possible existence of a loop \[\gamma:\partial\bar\Delta\to \text{Int}(\oCMr),\] where $\bar\Delta$ is the closed unit disk, along which a continuous choice of grading alternates.

Since the strata corresponding to disks with an internal ``bubble" are of codimension two, it may be assumed that $\gamma(\partial\bar\Delta)\subset \CM_{0,k,\vec{a}}^{1/r}$.  We first show that $\gamma$ may be extended to $\hat\gamma:\bar\Delta\to \oCM_{0,k,\vec{a}}^{1/r}$ by verifying that $\pi_1(\oCM_{0,k,l})$ is trivial. This is true when $l=0,$ as every connected component of $\oCM_{0,k,0}$ is the contractible associahedron.  By considering the forgetful map $\oCM_{0,k,1}\to\oCM_{0,k,0},$, whose fiber is contractible, it is easy to see that $\oCM_{0,k,1}$ is also contractible. For $l>1,$ it is enough to consider an arbitrary $\gamma:\partial\bar\Delta\to \CM_{0,k,l}$ and to show that it can be extended to the disk. By working in the unit disk model where $z_1$ is mapped to the origin and $x_1$ to $1,$ one may write $\gamma(\theta)$ as \[(1,x_2(\theta),\ldots,x_k(\theta),0,z_2(\theta),\ldots,z_l(\theta)).\]
An extension $\hat\gamma$ may be written as
\begin{align*}
\hat{\gamma}(r,\theta)=(1, &e^{r\log{x_2(\theta)}+(1-r)\frac{2\pi i}{k}},e^{r\log{x_3(\theta)}+(1-r)\frac{4\pi i}{k}},\ldots,\\
&e^{r\log{x_k(\theta)}+(1-r)\frac{2(k-1)\pi i}{k}}
,0,r^{l-1}z_2(\theta),r^{l-2}z_3(\theta),\ldots,rz_l(\theta)),\end{align*}
where we define the logarithm by excluding the positive real ray.

As $\bar\Delta$ is contractible, standard homotopy arguments now show that we can uniquely extend the grading from an arbitrary grading at $\gamma(1)$ to a grading for all points of $\hat{\gamma}(\bar\Delta).$ Restricting to $\partial\bar\Delta,$ the grading defines a grading for the points of $\gamma.$ Thus, the grading does not alternate along $\gamma$, and therefore it can be defined globally.

Given a grading for the fibers of the universal curve over the interior of the moduli space, we extend it to fibers over the boundary by continuity.
If $\Sigma_t$ is a family of smooth graded $r$-spin surfaces converging to $\Sigma_0$, then the gradings of $\Sigma_t$ determine, by continuity, a compatible lifting on $\Sigma_0$ away from special points. An argument as in Proposition \ref{prop:existence_stable} shows that this lifting is in fact a grading, and it is independent of the family $\Sigma_t.$

Suppose $\Sigma\in\partial\overline{\mathcal{M}}_{0,k,l}^{1/r}$ has a contracted boundary node, which in particular forces that $k=0$.  Using the same argument as in Proposition \ref{prop:existence_stable} for the contracted boundary case, we observe that the contracted boundary node must be Ramond.  The limit of the grading in the smooth case, at the boundary stratum consisting of surfaces with a contracted boundary node, is precisely a grading in the sense of contracted boundary nodes defined above.
\end{proof}

\begin{rmk}
It is interesting to note that even for nodal spin disks with Neveu--Schwarz nodes, the choice of grading cannot be performed independently for different components, if it is required to be continuous. It is the real blow-up stage in the construction of the orbifold with corners that fixes this choice, up to a global change of grading in each connected component of the moduli space.
\end{rmk}%#####

The above results can be carried out in greater generality.  First, if the images of the markings are any sets $B$ and $I$, one can clearly define the space $\M_{0,B,I}^{1/r}$ in the same way as above.    Furthermore, associated to each connected stable genus-zero twisted graded $r$-spin dual graph $\Gamma$, there is a closed suborbifold with corners $\M^{1/r}_{\Gamma} \subset \M_{0,B,I}^{1/r}$ whose general point is a graded $r$-spin disk with dual graph $\Gamma$. We also allow for the possibility that $\Gamma$ is disconnected, in which case $\M_{\Gamma}^{1/r}$ is defined as the product of the moduli spaces $\M_{\Gamma_i}^{1/r}$ associated to its connected components. Inside $\M^{1/r}_{\Gamma}$, we define $\CM_\Gamma^{1/r}$ as the open suborbifold consisting of graded $r$-spin disks whose dual graph is precisely $\Gamma.$

There are forgetful maps between the moduli spaces, but we note that marked points can only be forgotten if their twist is zero (otherwise \eqref{Sr} is not preserved), and boundary marked points can only be forgotten if they are in addition illegal (otherwise the grading does not descend to the moduli space with fewer marked points).  We define
\begin{equation}\label{eq:forgetful_map}\text{For}_{B',I'}: \M^{1/r}_{\Gamma} \rightarrow \M^{1/r}_{\Gamma'}\end{equation}
for $B',I' \subset \Z$ by forgetting all twist-zero internal marked points marked by $I'$ and all twist-zero illegal boundary marked points marked by $B'$.  This process may create unstable components; we repeatedly contract them.  If the process ends with some unstable components we remove them.  We denote by $\text{for}_{B',I'}(\Gamma)$ the graph $\Gamma'$ resulting from this procedure.

\subsection{The orientation of $\oCMr$}\label{sec:ap-or}
In the following subsection, we describe a natural orientation on the spaces $\oCMr$, thereby completing the proof of Theorem~\ref{thm:moduli}.  The ideas presented here are not new; in particular, they are similar to those presented in \cite[Section 2.5]{PST14} and are closely related to the earlier discussion in~\cite[Section 2.1.2]{FO09}.  First, we reduce the question of orientability to a simpler setting:

\begin{obs}\label{obs:or_with_without_spin}
We claim that the moduli space $\oCMr$ is orientable exactly if $\CM_{0,k,l}$ is orientable, and that an orientation on $\oCM_{0,k,l}$ induces one on $\oCMr$ by pullback under the map
\[\text{For}_{\text{spin}}:\oCMr\to\oCM_{0,k,l}\]
that forgets the graded spin structure.

To see this, note that the map $\text{For}_{\text{spin}}:\CMr\to\CM_{0,k,l}$ on the open moduli spaces is a diffeomorphism on the coarse underlying level, which means that the open locus $\CMr$ is indeed orientable precisely if $\CM_{0,k,l}$ is orientable.  To pass from the open locus to the full moduli space, one can construct $\oCMr$ (respectively, $\oCM_{0,k,l})$ from $\CMr$ (respectively, $\CM_{0,k,l}$) in two stages. First, add loci parameterizing disks without boundary nodes or contracted boundary components; such loci are of real codimension two and hence do not affect orientability.  Then, add the boundary of the moduli space; this contains strata of real codimension one, but the fact that they lie in the boundary means that they do not affect orientability.  This proves the claim.
\end{obs}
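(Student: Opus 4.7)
The plan is to reduce orientability of $\oCMr$ to orientability of $\oCM_{0,k,l}$, which is already established in \cite{PST14}, by exhibiting $\text{For}_{\text{spin}}$ as a morphism that is a coarse-moduli diffeomorphism on a dense open subset whose complement has no orientability-relevant contribution. Since an orientation on the base of a local diffeomorphism pulls back to one on the total space, this simultaneously proves both the equivalence of orientability claimed in the statement and the pullback construction of the orientation.

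First I would verify that on the interior locus $\CMr$ parameterizing smooth graded $r$-spin disks, the map $\text{For}_{\text{spin}}$ is a bijection on coarse underlying points. This is immediate from Proposition~\ref{prop:existence_smooth}: for any smooth marked disk and any assignment of internal twists satisfying the congruences \eqref{eq:open_rank1} and \eqref{eq:open_rank2} with all boundary twists equal to $r-2$, there exists a unique graded $r$-spin structure with no grading-preserving automorphisms. Since $\CM_{0,k,l}$ is a smooth orientable manifold by \cite{PST14}, this coarse bijection upgrades to a coarse-moduli diffeomorphism, and orientability on the two interior loci is equivalent; orienting $\CMr$ by pullback from $\CM_{0,k,l}$ is the natural choice.

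Next I would argue that extending from the interior to the full moduli space does not introduce obstructions to orientability. I would stratify $\oCMr \setminus \CMr$ by dual graph type and treat the strata separately. Strata corresponding to disks with at least one internal node, or with a contracted boundary node, are of real codimension at least two: each internal node contributes a complex smoothing parameter, and contracted boundary nodes only occur in connected components with no boundary marked points, where they likewise form a real codimension-two locus. Removing or re-adding such codimension-two loci does not affect orientability of the ambient orbifold with corners. Strata of disks with a boundary node, by contrast, have real codimension one, but by construction they lie in the topological boundary $\partial\oCMr$; since orientability of an orbifold with corners is determined by its interior, these strata are also harmless. Combined with the orientation of $\oCM_{0,k,l}$ from \cite{PST14}, this yields the orientation of $\oCMr$ by pullback.

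The main obstacle is the compatibility of $\text{For}_{\text{spin}}$ with the real blow-up operations used in Lemma~\ref{lem:orbi_str} (Step~2 of the orbifold-with-corners construction) and with the corresponding blow-up performed in \cite{PST14} to produce $\oCM_{0,k,l}$. In particular, near boundary nodes for even $r$, Remark~\ref{rem:isotropy} records additional $\Z/2\Z$ isotropy together with a nontrivial local cutting procedure that is not present in the non-spin setting. I would need to verify that after taking the real blow-ups on both sides, $\text{For}_{\text{spin}}$ descends to a morphism of orbifolds with corners whose coarse restriction remains a diffeomorphism on the codimension-zero locus, so that the pullback orientation extends continuously across these strata.
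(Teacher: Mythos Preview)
Your proposal is correct and follows essentially the same route as the paper's own justification: reduce to the smooth locus via the coarse diffeomorphism supplied by Proposition~\ref{prop:existence_smooth}, then observe that the remaining strata are either of real codimension at least two (internal nodes, contracted boundaries) or lie in the topological boundary (boundary nodes), and in either case are irrelevant to orientability.

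One comment on your final paragraph: the concern you raise about compatibility of $\text{For}_{\text{spin}}$ with the real blow-ups near boundary nodes is not actually needed for the argument. Orientability of an orbifold with corners, and an orientation on it once orientability is known, are completely determined by any dense open subset of the interior. Since you have already established that $\CMr$ is coarsely diffeomorphic to $\CM_{0,k,l}$, the pullback orientation is well-defined there, and the codimension and boundary arguments in your second paragraph show that both compactified spaces are orientable; the orientation then extends uniquely from the interior. You do not need $\text{For}_{\text{spin}}$ to behave in any particular way across the blown-up boundary strata. In other words, your first two paragraphs already constitute a complete argument, and the third paragraph can be dropped.
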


We henceforth discuss orientations on $\CM_{0,k,l}$, but, in light of Observation \ref{obs:or_with_without_spin}, all statements carry over to the spin case.  Furthermore, orientations can be studied one connected component at a time, and the the connected components of both $\M_{0,B,\vec{a}}^{1/r}$ and $\M_{0,B,\vec{a}}$ are indexed by cyclic orders of $B$:

\begin{nn}\label{nn:comps_of_moduli}
If $\pi: [|B|] \rightarrow B$ is an order of $B$, we denote its induced cyclic order by $\hat{\pi}$, and we denote by $\M_{0,B,\vec{a}}^{1/r,\hat\pi}$ (respectively, $\M_{0,B,\vec{a}}^{\hat\pi}$) the connected component of $\M_{0,B,\vec{a}}^{1/r}$ (respectively, $\M_{0,B,\vec{a}}$) that parameterizes disks for which the cyclic order of boundary markings taken along the boundary of the disk, with its canonical orientation, is $\hat\pi$. We denote by $\CMm_{0,k,l},\oCMm_{0,k,l} \subset \oCM_{0,k,l}$ the subspaces where the induced cyclic order on the boundary marked points is the cyclic order induced from the standard order $\pi^{\text{std}}$ on $[k].$%, and by $\oCMm_{0,k,l}$ the corresponding component of $\oCM_{0,k,l}$.
\end{nn}

We denote by $\text{Ord}(B)$ the set of all orders of $B$, by $\text{Cyc}(B)$ the set of cyclic orders, and by $S_B$ the group of permutations of $B$.   Note that $S_B$ acts both on $\text{Ord}(B)$, by composition, and on $\M_{0,B,\vec{a}}^{1/r}$, by permuting markings.

\begin{definition}
Let $\{ \tilde{\mathfrak{o}}^\pi=\tilde{\mathfrak{o}}^\pi_{0,B,I} \}$ be a family of orientations, where $B$ runs over all sets of size $k$, $I$ runs over all sets of size $l$, $\pi$ runs over all orders of $B$, and $\tilde{\mathfrak{o}}^\pi_{0,B,I}$ is an orientation for $\oCM_{0,B,I}^{\hat\pi}$.  We say such a family is \emph{covariant} if, whenever $f^B:B\to B'$ and $f^I:I\to I'$ are bijections and $F:\oCM_{0,B,I}\to\oCM_{0,B',I'}$ is the induced map, we have $\tilde{\mathfrak{o}}_{0,B,I}^\pi=F^* \tilde{\mathfrak{o}}_{0,B',I'}^{f^B\circ \pi}$.   A family $\{\tilde{\mathfrak{o}}^\pi=\tilde{\mathfrak{o}}^\pi_{0,B,\{\vec{a}\}}\}$ of orientations of $\oCM_{0,B,\{a_i\}_{i\in I}}^{\frac{1}{r},\hat\pi}$ is covariant if it is the pullback of a covariant family of orientations for $\oCM_{0,B,I}^{\hat\pi}.$
\end{definition}

The fiber of the forgetful map $\text{For}_{k+1}:\CM_{0,k,l+1} \to \CM_{0,k,l}$ is a punctured disk with a canonical complex orientation. For $k \geq 1,$ the fiber of the forgetful map $\CM_{0,k+1,l} \to \CM_{0,k,l}$ is a union of open intervals, so it is canonically oriented as the boundary of an oriented disk (as above). Denote this orientation by $o_{\text{For}_{k+1}^{-1}(\Sigma)}.$

\begin{prop}\label{prop:or_moduli}
Suppose $k+2l\geq 3.$ Then there exists a unique covariant family of orientations $\tilde{\mathfrak{o}}^{\pi}_{0,B,I}$ for the spaces $\oCM^{\hat{\pi}}_{0,B,I}$ with the following properties:
\begin{enumerate}[(1)]
\item\label{it:zero}
In the zero-dimensional case where $k = l=1$, the orientation is positive, while when $k = 3$ and $l=0$, the orientations are negative.
\item\label{it:perm}
Fix an integer $h$, and let $\pi\in \text{Ord}(B)$ and $g\in S_B$ be such that $g$ sends $x_{\pi(i)}$ to $x_{\pi(i+h)}$ cyclically.  Then $g$ preserves the orientation of $\oCM_{0,B,I}^{\hat\pi}$ if and only if $h(|B|-1)$ is even (this holds for any orientation of a moduli space of disks).   %Any $g\in S_I$ preserves the orientation of $\oCM_{0,B,I}.$
\item\label{it:int}
The orientation $\tilde{\mathfrak{o}}^{\pi}_{0,k,l+1}$ agrees with the orientation induced from $\tilde{\mathfrak{o}}^{\pi}_{0,k,l}$ by the fibration $\oCM_{0,k,l+1} \to \oCM_{0,k,l}$ and the complex orientation on the fiber.
\item\label{it:bdry}
On $\oCMm_{0,k+1,l}$, we have $\tilde{\mathfrak{o}}_{0,k+1,l}^{\pi^{\text{std}}}=o_{\text{For}_{k+1}^{-1}(\Sigma)}\otimes \text{For}_{k+1}^*\tilde{\mathfrak{o}}_{0,k,l}{\pi^{\text{std}}}.$
\end{enumerate}
\end{prop}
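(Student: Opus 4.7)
The strategy is to use properties (3) and (4) to propagate orientations from the two zero-dimensional base cases listed in (1), with uniqueness following almost formally from this setup and the substantive work going into existence and the verification of property (2).

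For uniqueness, observe that any $(B,I,\pi)$ with $k+2l\geq 3$ can be connected to one of the base cases by a chain of forgetful maps of the type in (3) and (4): when $k\geq 1$, iteratively forget internal and boundary marked points to reduce to $\oCM_{0,1,1}$ or to $\oCM_{0,3,0}$. Since the fiber of each such map is canonically oriented (complex disk for internal, oriented interval for boundary), each step of the chain determines the orientation uniquely from the one downstairs. Hence (1), (3), and (4) force the family if it exists at all.

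For existence I would work with $B=[k]$, $I=[l]$, and $\pi=\pi^{\mathrm{std}}$, and then extend by covariance. I define $\tilde{\mathfrak{o}}^{\pi^{\mathrm{std}}}_{0,k,l}$ by a canonical induction: starting from the base case prescribed by (1), iteratively apply (3) to adjoin internal points $z_2,\dots,z_l$ and (4) to adjoin boundary points $x_2,x_3,\dots$ (or $x_4,\dots$ in the $l=0$ case). Well-definedness requires showing that rearranging the order of these adjunctions gives the same final orientation. The key input is the commutative square
\begin{equation*}
\xymatrix{
\oCMm_{0,k+1,l+1} \ar[r]^{\mathrm{For}_{x_{k+1}}} \ar[d]_{\mathrm{For}_{z_{l+1}}} & \oCMm_{0,k,l+1} \ar[d]^{\mathrm{For}_{z_{l+1}}}\\
\oCMm_{0,k+1,l} \ar[r]^{\mathrm{For}_{x_{k+1}}} & \oCMm_{0,k,l}
}
\end{equation*}
whose horizontal arrows have interval fibers and whose vertical arrows have complex-disk fibers; the two resulting orientations on the top-left space coincide because $o_{\R}\otimes o_{\C}=o_{\C}\otimes o_{\R}$, since $\C$ has even real dimension. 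I also need to check that the two ways of reaching a space accessible from both base cases give the same orientation, which can be reduced to a single sign comparison on $\oCMm_{0,3,1}$; this check is precisely where the sign conventions imposed in (1) are what is required for consistency.

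Finally I would verify property (2). My plan is to reduce to an inductive sign computation: an adjacent swap in the chain of forgetful maps defining the orientation contributes a controlled sign, and a cyclic rotation by one position can be realized by $|B|-1$ such swaps, producing a sign of $(-1)^{|B|-1}$; iterating $h$ times gives $(-1)^{h(|B|-1)}$. Covariance is then automatic from the naturality of the forgetful maps under relabelings. The main obstacle I anticipate is the careful bookkeeping of signs in this last step, in particular ensuring that the sign at the base case ($-1$ for $k=3$, $l=0$) is precisely what is needed for the cyclic rotation formula to hold at the smallest nontrivial value $|B|=3$, and then that this propagates consistently through the induction. The edge case $k=0$ is not reachable from the base cases by (3) and (4) alone and would require separate handling, likely via a comparison with the complex orientation on the closed moduli space.
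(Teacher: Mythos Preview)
Your overall architecture matches the paper's: uniqueness via chains of forgetful maps back to the base cases, existence via the commutative square of forgetful maps, and the single consistency check at $\oCMm_{0,3,1}$ where the two base cases meet. That part is fine and essentially identical to what the paper does.

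The one place you diverge is property~(2), and here your plan is both more complicated and somewhat confused. You propose realizing a cyclic rotation as a product of $|B|-1$ adjacent swaps and tracking signs through the inductive definition. But an adjacent transposition of boundary labels changes the cyclic order $\hat\pi$, so it does not act on the component $\oCM_{0,B,I}^{\hat\pi}$ at all; only the full cyclic rotation does. So you cannot literally decompose the action on a fixed component into swap-steps, and your bookkeeping would have to pass through other components via covariance, which you have not set up. The paper sidesteps this entirely by observing (as the parenthetical in the statement already hints) that (2) is a fact about the moduli space independent of which orientation one picks: one models $\CMm_{0,k,l}$ as $U^{\text{main}}/\text{PSL}_2(\R)$ with
\[
U^{\text{main}}\subset (S^1)^k\times(\ior D^2)^l,
\]
on which a cyclic shift of boundary labels by $h$ permutes the $S^1$-factors by an $h$-th power of a $k$-cycle, contributing sign $(-1)^{h(k-1)}$, while interior relabelings are complex and hence orientation-preserving. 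This is a two-line computation and is what you should do instead.

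Your flag about $k=0$ is fair: the paper's proof, like yours, builds up from $k=1$ or $k=3$ using property~(4), and property~(4) as stated presupposes $k\geq 1$ on the base. The paper does not spell out the $k=0$ case separately either.
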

\begin{rmk}\label{rmk:different_or_from_PST14}
For $k$ odd, the orientations described here differ from those of \cite{PST14} by $(-1)^{\frac{k-1}{2}}.$ This choice is more natural from the point of view of integrable hierarchies.% as we see in the sequel.
\end{rmk}
\begin{proof}[Proof of Proposition~\ref{prop:or_moduli}]
If the orientations $\tilde{\mathfrak{o}}_{0,k,l}^\pi$ exist, then properties~\eqref{it:zero} -- \eqref{it:bdry} imply that they are unique.  It remains to check existence.

For property~\eqref{it:perm} to hold, we must show that permutations of labels that map the component $\oCMm_{0,k,l}$ to itself affect the orientation according to their sign. This can be checked with respect to any orientation. Write
\[
U  = \left\{(z,w)\left|
\begin{array}{lll}
z = (z_1,\ldots,z_k) \in (S^1)^k, & z_i \neq z_j, & i\neq j \\
w = (w_1,\ldots,w_l) \in (\ior D^2)^l, & w_i \neq w_j, & i \neq j
\end{array}
\right.\right\}.
\]
Denote by $U^{\text{main}} \subset U$ the subset where the cyclic order of $z_1,\ldots,z_k$ on $S^1 = \partial D^2$ (with respect to the orientation induced from the complex orientation of $D^2$) agrees with the standard order of $[k]$.  Then
\[
\CMm_{0,k,l} = U^{\text{main}}/\text{PSL}_2(\R).
\]
When $k$ is odd, cyclic permutations of the boundary labels preserve the orientation of $U^{\text{main}}$ and thus also $\CM^{\text{main}}_{0,k,l}$ and $\oCM^{\text{main}}_{0,k,l}$.  When $k$ is even, a cyclic permutation of boundary labels that moves each boundary label by $h$ multiplies the orientation by the sign $(-1)^h$.  Renaming internal markings is a complex map that preserves orientations trivially, and similarly, arbitrary permutations of the interior labels preserve the orientation of $\oCM^{\text{main}}_{0,k,l}$.

A direct calculation shows that the orientation on $\oCM_{0,3,1}$ induced by property~\eqref{it:int} from $\tilde{\mathfrak{o}}_{0,3,0}^\pi$ agree with the orientation induced by property~\eqref{it:bdry} from $\tilde{\mathfrak{o}}^\pi_{0,1,1}$.  Thus, the required $\tilde{\mathfrak{o}}_{0,3,1}^\pi$ exists. Existence of $\tilde{\mathfrak{o}}_{0,k,l}^\pi$ satisfying properties~\eqref{it:int} and~\eqref{it:bdry} for other $k$ and $l$ follows from the commutativity of the diagram of forgetful maps
\[
\xymatrix{
\oCMm_{0,k+1,l+1} \ar[r]\ar[d] & \oCMm_{0,k+1,l} \ar[d] \\
\oCMm_{0,k,l+1} \ar[r] & \oCMm_{0,k,l}.
}
\]
\noindent Covariance, at this point, gives a unique way to extend the orientations to other connected components and to moduli spaces for different $B,I.$
\end{proof}

%Note that the second item in the lemma above holds for any orientation of a moduli space of disks.
\begin{nn}\label{nn:or_for_moduli_spin}
Denote by $\tilde{\mathfrak{o}}^\pi_{0,B,\{a_i\}_{i\in I}}$ the orientation on $\oCM_{0,B,\{a_i\}_{i\in I}}^{1/r,\hat{\pi}}$ defined as $\text{For}_{\text{spin}}^*\tilde{\mathfrak{o}}^\pi_{0,B,I}.$
\end{nn}

\begin{lemma}\label{lem:moduli-induced_or}
The orientations $\tilde{\mathfrak{o}}^\pi_{0,B,\{a_i\}_{i\in I}}$ satisfy the following two properties:
\begin{enumerate}[(1)]
\item\label{it:induced or to boundary moduli}
Write $I = I_1 \sqcup I_2$, and take $B=B_1\sqcup B_2$.  Let $\Gamma$ be the graph with two open vertices, $v_1$ and $v_2$, connected by an edge $e$, where vertex $v_i$ has internal tails labeled by $I_i$ and $k_i$ boundary tails labeled by $B_i$, and let $h_i$ denote the half-edges of $v_i$. Let $\hat\pi$ be a cyclic ordering of $B$ in which all tails of $v_1$ are consecutive and all tails of $v_2$ are consecutive.  Denote by $\pi$ the unique ordering of $B$ such that, for any $\Sigma\in \CM_\Gamma^{\hat\pi}$ with normalization $\Sigma_1 \sqcup \Sigma_2$ (where $\Sigma_i$ corresponds to $v_i$), we have
\begin{itemize}
\item under $\pi$, the marked points of $\Sigma_1$ appear before those of $\Sigma_2$;
\item when $\pi$ is restricted to the points of $\Sigma_i$, it agrees with the order of the points on $\partial\Sigma_i$ with its natural orientation, starting after the node.
\end{itemize}
Let $\pi_1$ be the restriction of $\pi$ to the points of $\Sigma_1$, but adding the half-node $x_{h_1}$ in the end, and let $\pi_2$ be the restriction to points of $\Sigma_2$, but adding the half-node $x_{h_2}$ in the beginning.  Note that we have $\det(T\oCM^{\frac{1}{r}}_{0,B,\{a_i\}_{i\in I}})|_{\CM_\Gamma^{\hat\pi}} = \det(N)\otimes \det( T\CM_{\Gamma}^{\hat\pi}),$ where $N$ is the outward normal with canonical orientation $o_N$.  Then \[\tilde{\mathfrak{o}}^\pi|_{\CM_\Gamma^{\hat\pi}}=(-1)^{(|B_1|-1)|B_2|}o_N\otimes
(\tilde{\mathfrak{o}}^{\pi_1}_{0,B_1\cup\{h_1\},I_1}\boxtimes\tilde{\mathfrak{o}}^{\pi_2}_{0,\{h_2\}\cup B_2,I_2}).\]
\item\label{it:induced or to interior moduli}
Let $\Gamma\in\partial\Gammar$ be a graph with two vertices, an open vertex $v^o$ and a closed vertex $v^c$.  We have $\det(T\CM_\Gamma) = \det(N)\boxtimes \det(T\CM_{v^c})\boxtimes \det(T\CM_{v^o})$, where $N$ is again the normal bundle.  Then, for any order $\pi,$
\begin{equation}\label{eq:induced_or_interior_moduli}
\tilde{\mathfrak{o}}^\pi|_{\CM_\Gamma} = o_N\otimes (\tilde{\mathfrak{o}}^\pi_{v^o}\boxtimes \tilde{\mathfrak{o}}_{v^c}),
\end{equation}
where $o_N$ and $\tilde{\mathfrak{o}}_{v^c}$ are the canonical complex orientations.
\end{enumerate}
\end{lemma}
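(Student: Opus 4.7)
The plan is to prove both items of Lemma~\ref{lem:moduli-induced_or} by induction on $|B|+|I|$, using the uniqueness portion of Proposition~\ref{prop:or_moduli}. By Observation~\ref{obs:or_with_without_spin} and the fact that $\tilde{\mathfrak{o}}^\pi_{0,B,\{a_i\}}$ is defined as a pullback under $\text{For}_{\text{spin}}$, we may (and will) work with the non-spin orientations $\tilde{\mathfrak{o}}^\pi_{0,B,I}$; and by covariance, we may restrict attention to a single distinguished ordering. Thus in both items, the strategy is to verify the formula in a small base case and then propagate the formula to arbitrary $B,I$ using properties~\eqref{it:int} and~\eqref{it:bdry} of Proposition~\ref{prop:or_moduli}.

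Item~\eqref{it:induced or to interior moduli} is the cleaner of the two, because the smoothing parameter at an internal node is a genuinely complex parameter, so $o_N$ is the canonical complex orientation, and $\tilde{\mathfrak{o}}_{v^c}$ is also complex by construction. I would induct on the number of internal marked points $|I^c|$ on $v^c$. The base case is the smallest stable closed vertex, $|I^c|=3$, in which $\CM_{v^c}$ is a point with positive (trivial complex) orientation and $\CM_\Gamma\to\CM_{v^o}$ is an isomorphism of oriented orbifolds with $o_N$ matching the ambient outward normal direction; this reduces the claim to a direct check using Proposition~\ref{prop:or_moduli}\eqref{it:zero}. For the inductive step, I would forget an internal marked point of $v^c$ (which is always allowed since all closed vertices have twist-zero tails in the non-spin case): by property~\eqref{it:int} this contributes the complex fiber orientation on both sides of~\eqref{eq:induced_or_interior_moduli}, so the formula propagates unchanged.

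Item~\eqref{it:induced or to boundary moduli} is the delicate part. The base case is the smallest stable situation, namely $\oCMm_{0,4,0}$ (a one-dimensional interval whose two boundary points are the two boundary-node strata compatible with the standard cyclic order) and the analogous $\oCMm_{0,2,1}$; in each case, the claim can be verified directly from Proposition~\ref{prop:or_moduli}\eqref{it:zero}--\eqref{it:bdry}. For the inductive step, I would pick a marked point to forget, on (say) $v_2$. If the point is internal, the fiber of the forgetful map in the ambient space is complex, as is the fiber of the induced forgetful map $\CM_\Gamma^{\hat\pi}\to\CM_{\text{for}(\Gamma)}^{\hat\pi'}$ restricted to $\CM_{v_2}$; both appear on the same side of the formula, so the sign $(-1)^{(|B_1|-1)|B_2|}$ is unchanged and the formula propagates by property~\eqref{it:int}. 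If the point is boundary, one applies property~\eqref{it:bdry}: the fiber is a boundary interval, which under the isomorphism $\partial\text{For}_{k+1}^{-1}(\Sigma)\cong\partial(\text{For}^{-1}(\Sigma_2))$ matches the fiber of the forgetful map on $v_2$.

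The hard part will be the sign verification in the boundary inductive step. Extending $\pi_2$ by inserting a new boundary marking (call it $x_{k+1}$) changes $|B_2|$ to $|B_2|+1$, so the predicted sign changes by $(-1)^{|B_1|-1}$. This sign must be accounted for by commuting the fiber orientation $o_{\text{For}_{k+1}^{-1}(\Sigma)}$ past the intervening factors when relating the ambient formula in property~\eqref{it:bdry} to its restriction to $\CM_\Gamma^{\hat\pi}$. Concretely, viewing $o_{\text{For}_{k+1}^{-1}(\Sigma)}$ as a one-dimensional factor and $o_N$ as another, the two are interchanged when one moves from the ambient product decomposition to the decomposition on the stratum, and the cost of moving past $\tilde{\mathfrak{o}}^{\pi_1}_{v_1}$ (of dimension $|B_1|+2|I_1|-2$) contributes the requisite $(-1)^{|B_1|-1}$ modulo the complex (hence even-dimensional) internal factors. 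I would formalize this by explicitly writing $\det(T\oCMm)|_{\CM_\Gamma^{\hat\pi}}$ in the two product decompositions --- one arising from the ambient forgetful fibration and one arising from the forgetful fibration on $v_2$ --- and reading off the sign of the transition. Once this computation is done, an identical but easier argument (without re-ordering) handles the case of forgetting a boundary marking from $v_1$, and the induction closes.
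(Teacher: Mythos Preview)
Your plan is essentially the paper's own argument: reduce to the non-spin moduli via Observation~\ref{obs:or_with_without_spin}, use covariance to fix a convenient $\pi$, and run an induction on dimension by forgetting a marked point and invoking properties~\eqref{it:int}--\eqref{it:bdry} of Proposition~\ref{prop:or_moduli}. The paper streamlines your inductive step slightly: rather than treating ``forget from $v_1$'' as a separate case, it observes that swapping the roles of $v_1$ and $v_2$ changes both sides of the claimed identity by the same sign (using item~\eqref{it:perm} of Proposition~\ref{prop:or_moduli} and $\dim_\R\CM_{v_i}\equiv|B_i|\bmod 2$), so one may always forget from $v_2$.

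One point to sharpen in your sign count: moving the one-dimensional fiber $o_{\text{For}_{k+1}^{-1}(\Sigma)}$ past $\tilde{\mathfrak{o}}^{\pi_1}_{v_1}$ alone gives $(-1)^{|B_1|+2|I_1|-2}=(-1)^{|B_1|}$, not $(-1)^{|B_1|-1}$. The missing $-1$ is exactly the cost of swapping $o_{\text{For}_{k+1}^{-1}(\Sigma)}$ with the one-dimensional $o_N$, which you mention but don't explicitly fold into the total. With that correction your computation closes: $(-1)^{(|B_1|-1)(|B_2|-1)}\cdot(-1)\cdot(-1)^{|B_1|}=(-1)^{(|B_1|-1)|B_2|}$. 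For item~\eqref{it:induced or to interior moduli}, note also that inducting on $|I^c|$ alone does not by itself cover all $(v^o,v^c)$; you will want to allow forgetting from $v^o$ as well, exactly as in item~\eqref{it:induced or to boundary moduli}, which is what the paper means by ``the same arguments.''
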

\begin{proof}
We prove the first item (which is analogous to Lemma 3.16 in \cite{PST14}) by induction on the dimension $\oCM_{0,B,I}^{1/r}$. By Observation \ref{obs:or_with_without_spin}, it is enough to prove the analogous claim for $\oCM_{0,k,l}$.  The proof of the second item uses exactly the same arguments, so it is omitted.

Covariance shows that it is enough to prove the claim when $B=[k]$ and $\pi$ is the order $1,\ldots, k$.  The base cases where the moduli space has dimension one or two can be checked by hand.  Suppose, then, that the desired statement holds for all moduli spaces of dimension less than $n$, where $n \geq 3$.  After possibly interchanging the roles of $v_1$ and $v_2$, we can assume that a tail of $v_2$ can be forgotten without affecting stability; here, we use item \eqref{it:perm} of Proposition \ref{prop:or_moduli} and the fact that $\dim_\R \CM_{v_i} \equiv |B_i| \mod 2$ to see that the interchanging affects the equation with the correct sign.  Let $\Gamma'$ be the stable graph obtained by removing a tail of $v_2$.  If this tail is a boundary tail, assume it is labeled $k$.  Write $v'_2$ for the resulting vertex of $\Gamma'$ and $\pi'$ for the restriction of $\pi$ to $[k-1].$

Consider first the case where the forgotten tail corresponds to a boundary point $x_k$.  With the notation of Proposition \ref{prop:or_moduli}, we have $\tilde{\mathfrak{o}}^\pi_{0,k,l}=o_{\text{For}_{k}^{-1}(\Sigma)}\otimes \text{For}_k^*\tilde{\mathfrak{o}}^{\pi'}_{0,k,l}$, so
\begin{equation}
\label{eq:1_for_mod}
\tilde{\mathfrak{o}}^\pi|_{\CM_\Gamma}=o_{\text{For}_{k}^{-1}(\Sigma)}\otimes \text{For}_k^*\tilde{\mathfrak{o}}^{\pi'}|_{\CM_{\Gamma'}}.
\end{equation}
By induction, we have
\begin{equation}\label{eq:2_for_mod}
\tilde{\mathfrak{o}}^{\pi'}|_{\CM_{\Gamma'}} = (-1)^{(|B_1|-1)(|B_2|-1)}o_N\otimes(\tilde{\mathfrak{o}}^{\pi_1}_{v_1}\boxtimes\tilde{\mathfrak{o}}^{\pi_2}_{v'_2}).
\end{equation}
We can identify $\text{For}_{k}^{-1}(N|_{\CM_{\Gamma'}}) \cong N_{\CM_\Gamma},$ and the identification preserves natural orientations.
Finally,
\begin{equation}\label{eq:3_for_mod}
\tilde{\mathfrak{o}}^{\pi_2}_{v_2}=o_{\text{For}_{k}^{-1}(\Sigma)}\otimes \text{For}_k^*\tilde{\mathfrak{o}}^{\pi'_2}_{v_2'}.
\end{equation}
Putting equations \eqref{eq:1_for_mod}, \eqref{eq:2_for_mod}, and \eqref{eq:3_for_mod} together, and recalling that $\dim_\R \CM_{v_1} \equiv |B_1| \mod 2$, we obtain the result.

If the forgotten tail corresponds to an internal marked point labeled $i$, then
\begin{equation}\label{eq:1'_for_mod}
\tilde{\mathfrak{o}}^\pi_{0,k,l}=o_{\text{For}_{i}^{-1}(\Sigma)}\otimes \text{For}_i^*\tilde{\mathfrak{o}}^{\pi}_{0,k,l-1}\Rightarrow
\tilde{\mathfrak{o}}^\pi|_{\CM_\Gamma}=o_{For_{i}^{-1}(\Sigma)}\otimes For_i^*\tilde{\mathfrak{o}}^{\pi}|_{\CM_{\Gamma'}},
\end{equation}
where we abuse notation somewhat by using $\text{For}_i$ to denote the map forgetting the $i$th internal marked point.  By induction, we have
\begin{equation}\label{eq:2'_for_mod}
\tilde{\mathfrak{o}}^{\pi'}|_{\CM_{\Gamma'}} = (-1)^{(|B_1|-1)|B_2|}o_N\otimes\tilde{\mathfrak{o}}^{\pi_1}_{v_1}\boxtimes\tilde{\mathfrak{o}}^{\pi_2}_{v'_2}.
\end{equation}
Observe that $\text{For}_{i}^{-1}(N|_{\CM_{\Gamma'}})\cong N_{\CM_\Gamma},$ and the equation preserves natural orientations.  Finally,
\begin{equation}\label{eq:3'_for_mod}
\tilde{\mathfrak{o}}^{\pi_2}_{v_2}=o_{\text{For}_{i}^{-1}(\Sigma)}\otimes \text{For}_i^*\tilde{\mathfrak{o}}^{\pi'_2}_{v_2'}.
\end{equation}
Putting equations \eqref{eq:1'_for_mod}, \eqref{eq:2'_for_mod}, and \eqref{eq:3'_for_mod} together, and noting that $\dim_\R \text{For}_{i}^{-1}(\Sigma)=2$, we obtain the result.
\end{proof}

\section{Associated vector bundles}\label{sec:bundles}

\subsection{The Witten bundle}\label{subsec:Wittenbdl}

In closed genus-zero $r$-spin theory, the virtual fundamental class is defined by way of the {\it Witten bundle} $(R^1\pi_*\mathcal{S})^{\vee}$, where $\pi: \mathcal{C} \rightarrow \M_{0,n}^{1/r}$ is the universal family and $\mathcal{S} \rightarrow \mathcal{C}$ is the universal twisted $r$-spin bundle.  This is an orbifold vector bundle with fiber
\[H^1(C, S)^{\vee} \cong H^0(C, J),\]
of complex rank $\frac{\sum a_i - (r-2)}{r}$.%#####

\begin{obs}
\label{obs:twist-1}
As observed in \cite{JKV2} and studied in detail in \cite{BCT_Closed_Extended}, the Witten bundle is a bundle as long as at most one marked point has twist $-1$ and all other twists are non-negative, since this ensures that $R^0\pi_*\mathcal{S} = 0$.
\end{obs}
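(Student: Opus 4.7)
The plan is a two-step reduction: first invoke cohomology and base change to reduce to fiberwise vanishing, then verify that vanishing using the degree formula and the structure of twists on components of the normalization.

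For the first step, I would appeal to Grothendieck's cohomology-and-base-change theorem. The morphism $\pi: \mathcal{C}\to\M_{0,\vec a}^{1/r}$ is flat and proper and $\mathcal{S}$ is a line bundle on $\mathcal{C}$, so if one can show $h^0(C, S|_C) = 0$ on every geometric fiber, then $R^0\pi_*\mathcal{S} = 0$; by upper semicontinuity of $h^1$ together with constancy of the Euler characteristic, $R^1\pi_*\mathcal{S}$ is locally free of constant rank equal to $-\chi(S) = \frac{\sum a_i - (r-2)}{r}$ (via Riemann--Roch on the orbifold curve). Dualizing produces the Witten bundle as a genuine vector bundle.

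For the second step, I would prove that $H^0(C,S) = 0$ on every stable genus-zero $r$-spin curve $C$ satisfying the twist hypothesis. Since $H^0(C,S) = H^0(|C|,|S|)$, it suffices to work with the coarse underlying bundle on the nodal curve $|C|$. Pulling back to the normalization $\tilde\nu: |\widehat C|\to|C|$ gives an injection
\[
H^0(|C|,|S|) \hookrightarrow \bigoplus_l H^0(|C_l|, \tilde\nu^*|S||_{|C_l|}),
\]
so it is enough to show each summand vanishes, after suitably correcting $\tilde\nu^*|S|$ to $|\widehat S|$ using the relation in~\eqref{eq:widehatS}. For each component $|C_l|\cong\mathbb{P}^1$, equation~\eqref{eq:nodal_curve} gives $r\deg(|\widehat S||_{|C_l|}) = -2 - \sum_{i\in I_l} a_i - \sum_{h\in N_l} c_h$, and I would show this is strictly negative under the hypothesis. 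The smooth case is immediate: $\deg|S| \leq -1/r$ in $\Z$ forces $\deg|S|\leq -1$. For the nodal case, the key input is condition~(\ref{cond2}) of Definition~\ref{def:graph}, which constrains how a Ramond node distributes the twists $-1$ and $r-1$ across its two half-edges according to which side contains the anchor. An induction on the number of components, peeling off leaves of the dual tree, shows that on every component the resulting degree is $\leq -1$, whence $H^0$ vanishes componentwise.

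The main obstacle is the combinatorial bookkeeping in the last step: one needs to rule out a ``rogue'' component that accumulates enough $r-1$ node-half twists on the anchor side to push $r\deg(|\widehat S|)$ up to $0$. This is handled by observing that each $r-1$ twist on a given component is forced to be matched, on the adjacent component, by a compensating $-1$ twist, so that the total bookkeeping propagates outward from the unique marked point of twist $-1$ (if it exists) in a controlled way, and the genus-zero (tree) structure prevents any reinforcement. Once the degree inequality is secured on every component, the normalization injection plus the vanishing of $H^0$ on each $\mathbb{P}^1$ component completes the fiberwise vanishing, and Step~1 concludes.
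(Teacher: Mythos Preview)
The paper does not prove this observation; it simply records the fact with citations to \cite{JKV2} and \cite{BCT_Closed_Extended}. Your two-step plan (cohomology and base change, then componentwise degree analysis via normalization and a leaf-peeling induction on the dual tree) is the standard argument, so there is nothing in the paper to compare against.

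That said, one step in your write-up is imprecise in a way that matters. You say it suffices to show each summand $H^0(|C_l|,\,\cdot\,)$ vanishes ``after suitably correcting $\tilde\nu^*|S|$ to $|\widehat S|$.'' But the correction \eqref{eq:widehatS} twists \emph{down} at the half-nodes in $\RRR$, so $|\widehat S|\hookrightarrow \tilde\nu^*|S|$, and vanishing of $H^0(|C_l|,|\widehat S|)$ does not by itself force vanishing of $H^0(|C_l|,\tilde\nu^*|S|)$. The leaf-peeling induction you sketch is precisely the fix, provided you always peel a leaf $v$ \emph{not} containing the twist~$-1$ marked point (such a leaf exists whenever there are at least two vertices). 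By the rule in Definition~\ref{def:graph}\eqref{cond2}, the unique half-node on such a leaf has twist in $\{-1,0,\dots,r-2\}$ and hence lies outside $\RRR$, so on that leaf $\tilde\nu^*|S|=|\widehat S|$ and your degree bound applies directly; the section then vanishes at the node and you reduce to a smaller curve still satisfying the hypothesis.

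Relatedly, your ``rogue component'' worry has the sign backwards: a half-node twist $c_h=r-1$ contributes $-(r-1)$ to $r\deg|\widehat S|=-2-\sum a_i-\sum c_h$ and so pushes the degree \emph{down}, not up. In fact $\deg|\widehat S|<0$ on every component is immediate from the observation that each component carries at most one special point of twist~$-1$; no induction is needed for that. The genuine subtlety is only the gap between $|\widehat S|$ and $\tilde\nu^*|S|$ at Ramond half-nodes described above, and your induction handles it once the leaf is chosen correctly.
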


We now define the open analogue of the Witten bundle. %The analogue of the Witten bundle in open $r$-spin theory is defined as follows.Abusing notation somewhat, d
Denote by $\pi: \mathcal{C} \rightarrow \M^{1/r,\Z_2}_{0,k+2l}$ the universal curve over the moduli space of real spin spheres defined above, and by $\cS \rightarrow \mathcal{C}$ the universal spin bundle.  Then $R^1\pi_* \cS$ is a vector bundle, since \eqref{Sr} implies that spin structures have negative degree and hence $R^0\pi_*\cS = 0$.  There are universal involutions
\[\phi: \mathcal{C} \rightarrow \mathcal{C} \; \text{ and } \; \widetilde{\phi}: \cS \rightarrow \cS,\]
which induce an involution on $R^1\pi_*\mathcal{S}$.  Let
\[\mathcal{W}^{\text{pre}} := (R^0\pi_*\mathcal{J})_+ =(R^1\pi_*\mathcal{S})^{\vee}_-\]
be the vector bundle of $\widetilde{\phi}$-invariant sections of $J$, where $\mathcal{J}:= \mathcal{S}^{\vee} \otimes \omega_{\pi}$; the second equality uses Serre duality, under which invariant sections become anti-invariant.  From here, the {\it open Witten bundle} is defined as
\[\mathcal{W}:= \varpi^*\mathcal{W}^{\text{pre}},\]
where $\varpi: \M^{1/r}_{0,k,l} \rightarrow \M^{1/r,\Z_2}_{0,k+2l}$ is the composition of the morphisms defined above.

\begin{rmk}\label{rmk:subscript}
We pull back $\cW$ from the moduli space $\M^{1/r,\Z_2}_{0,k+2l}$ of spheres in order to avoid the need to define derived pushforward in the orbifold-with-corners context.  To avoid cluttering the notation in what follows, however, we often write $\mathcal{W} = (R^0\pi_*\mathcal{J})_+$ even on $\M^{1/r}_{0,k,l}$.  Whenever we write such expressions, they should be understood as pulled back under $\varpi$. In addition, we sometimes write $\mathcal{W} = (R^0\pi_*\mathcal{J})_+$ on $\M^{1/r}_{0,n}$, where no involution is needed; in this case, the subscript should be ignored and the equation is to be interpreted as $\mathcal{W} = R^0\pi_*\mathcal{J}.$
\end{rmk}

\begin{rmk}\label{rmk:Witten-coarseVSorbi}
Due to the canonical isomorphism
\begin{equation}
\label{eq:H0(J)}
H^0(J) = H^0(|J|),
\end{equation}
the fibers of the Witten bundle can equivalently be viewed as sections of $|J|$ on $|C|$.  Furthermore, if $C$ is a graded $r$-spin disk and $p \in C$ is a non-orbifold point, then the fiber $J_p$ is identified with the fiber $|J|_{\rho(p)}$ over the image point $\rho(p) \in |C|$.  In particular, if $s \in H^0(J)$ is an element of the fiber of the Witten bundle over $C$ corresponding under \eqref{eq:H0(J)} to $\rho_*(s) \in H^0(|J|)$, then the evaluation of $s$ at $p$ agrees under the identification $J_p = |J|_{\rho(p)}$ with the evaluation of $\rho_*(s)$ at $\rho(p)$.  Because of these observations, we view the fibers of the Witten bundle interchangeably as $H^0(J)$ or as $H^0(|J|)$ in what follows.
\end{rmk}

The real rank of $\mathcal{W}$ (which is the number $e$ defined in \eqref{eq:open_rank1} after setting $g=0$) is
\begin{equation}
\label{eq:rank_of_W}\frac{2\sum a_i + \sum b_j - (r-2)}{r}
\end{equation}
indeed, a standard Riemann--Roch calculation shows that \eqref{eq:rank_of_W} is the complex rank of $R^0\pi_*\mathcal{J}$, and taking involution-invariant parts reduces the real rank by half.  Furthermore, in the notation of \eqref{eq:forgetful_map}, there is a canonical isomorphism
\begin{equation}\label{eq:ForW}\mathcal{W} = \text{For}_{B',I'}^*\mathcal{W}\end{equation}
for any subsets $B',I' \subset \Z$.  This identification involves replacing a graded $r$-spin disk $C$ by a partially coarsened (and possibly stabilized) disk $C'$.  However, in the same vein as Remark~\ref{rmk:Witten-coarseVSorbi}, if $s \in H^0(J)$ is an element of the fiber of $\mathcal{W}$ over $C$ that corresponds under \eqref{eq:ForW} to an element $\text{For}^*_{B',I'}(s)$ in the fiber of $\text{For}_{B',I'}^*\mathcal{W}$ over $C'$, and if $p \in C$ is a non-special point whose component is not stabilized when passing to $C'$, then the evaluation of $s$ at $p$ coincides under \eqref{eq:ForW} with the evaluation of $\text{For}^*_{B',I'}(s)$ at the image of $p$.

%
%\begin{rmk}\label{def:scale_invariant}
%The Witten bundles are orbifold vector bundles. The action of the isotropy groups of the moduli space is the one induced from lifting the fiberwise action on $J$ (which, in turn, is induced from the action on the spin bundle) to the space of global sections of $J$.  In particular, the group acting on a generic fiber in the open case is trivial, while in the closed connected case it is the group of $r$th roots of unity that acts by scaling.
%\end{rmk}

\subsection{Decomposition properties of the Witten bundle}\label{subsec:decomp}

The open Witten bundle, like its closed analogue, satisfies a decomposition property along nodes.  In order to state the property, we must define a normalization morphism on the moduli spaces, which can be described by a ``detaching" operation on graphs.

\begin{definition}
\label{def:detach}
Let $\Gamma$ be a genus-zero graded $r$-spin graph, and let $e$ be an edge of $\Gamma$ with half-edges $h$ and $h'$.  Then the {\it detaching} of $\Gamma$ at $e$ is the disconnected graph
\[\detach_e(\Gamma) = (V', H', \sigma_0', \sim', H'^{CB}, m'),\]
defined to agree with $\Gamma$ except that $h \not \sim' h'$.  We keep $\alt$ and $\twE$ the same, and we extend the marking and the anchor as follows.   If $e$ is a boundary edge, set $m'(h) = m'(h') = 0$.  If $e$ is an internal edge, then exactly one of the components of $\detach_e(\Gamma)$ is closed and unanchored; suppose, without loss of generality, that this is the component containing $h$.  Then we set $h$ to be the anchor of its component, and we set $m'(h) = 0$ and $m'(h')$ to be the union of the markings of the internal tails $h'' \neq h$ in the same component as $h$.

If $t \in H^{CB}$ is a contracted boundary tail, then the {\it detaching} of $\Gamma$ at $t$ is the graph $\detach_t(\Gamma)$ defined to agree with $\Gamma$ except that $t \in (T')^I \setminus (H')^{CB}$.  We keep $\alt$ and $\twE$ the same and leave $t$ as the anchor.
\end{definition}

Note that the new internal and boundary markings still satisfy the requirements of Definitions \ref{def:non_spin_graph} and \ref{def:graph}.  In particular, since there is a canonical identification of $E(\Gamma) \setminus \{e\}$ with the edges of $\detach_e(\Gamma)$ and of $E(\Gamma)$ with the edges of $\detach_t(\Gamma)$, one can also iterate the detaching process.  For any subset $N \subset E(\Gamma) \cup H^{CB}(\Gamma)$, we denote by $\detach _{N}(\Gamma)$ the graph obtained by performing $\detach_f$ for each element $f \in N$; the result is independent of the order in which the detachings are performed. When we write $\detach(\Gamma)$ without any subscript, we mean $\detach_{E(\Gamma)}(\Gamma).$

Let $\Gamma$ be a stable $r$-spin dual graph, and let $\widehat{\Gamma} = \detach_{N}(\Gamma)$ for some set $N \subset E(\Gamma) \cup H^{CB}(\Gamma)$ of edges and contracted boundary tails.  Unlike the moduli space of curves, the $r$-spin moduli space does not always have a gluing map $\M_{\widehat\Gamma}^{1/r} \rightarrow \M_{0,k,l}^{1/r}$, because there is no canonical way to glue the fibers of the spin bundle at the internal nodes.  Instead, we consider the following diagram of morphisms:
\begin{equation}\label{eq:instead_of_product}\M_{\widehat\Gamma}^{1/r} \xleftarrow{q} \M_{\widehat\Gamma} \times_{\M_{\Gamma}} \M_{\Gamma}^{1/r} \xrightarrow{\mu} \M_{\Gamma}^{1/r} \xrightarrow{i_{\Gamma}} \M_{0,k,l}^{1/r}.\end{equation}
Here, $\M_{\Gamma},~\M_{\widehat\Gamma}$ are the moduli spaces of marked disks with dual graphs $\Gamma,\widehat{\Gamma}$ respectively. The morphism $q$ is defined by sending the spin structure $S$ to $\widehat{S}$ as in \eqref{eq:widehatS}.

The map $\mu$ is an isomorphism, though we distinguish between its domain and codomain because they have different universal objects.  While the map $q$ is in general not an isomorphism (see Remark \ref{rmk:q not iso} below), it has degree one.  This fact is known in the closed case (see, for example, \cite{CZ}); the key point is that each element of $\M_{\widehat\Gamma}^{1/r}$ has a single geometric preimage under $q$, and in both the domain and codomain of $q$, the order of the automorphism group is $r^c$, where $c$ is the number of components.  By the analysis of automorphisms in Proposition \ref{prop:existence_stable}, the same argument applies in the open case.

\begin{rmk}
\label{rmk:q not iso}
The reason that $q$ is not, in general, an isomorphism is that it does not induce an isomorphism on automorphism groups.  Automorphisms on a normalized curve are given by separate fiberwise scalings on each sphere component; on a nodal curve, however, scalings by $\zeta$ and $\eta$ at opposite branches of a node only glue to give a global automorphism if one can act by a ghost automorphism of $C$ to make the scalings at the two branches agree---in other words (by Remark~\ref{rmk:auts_of_spin}), if there exists $\xi \in \Z/r\Z$ for which $\zeta \cdot \xi^m = \eta$, where $m$ is the multiplicity of $S$ at the $\zeta$-branch of the node.  Thus, $q$ is an isomorphism only if each internal node satisfies $\text{gcd}(r,m)=1$, so that such $\xi$ exists.  (Boundary nodes need not satisfy any condition, since fiberwise scaling on a disk component is not an automorphism of the graded spin structure.)
\end{rmk}

\begin{nn}\label{nn:Detach}
For $\Gamma$ and $N$ as above, we denote the map $q\circ \mu^{-1}:\M_{\Gamma}^{1/r}\to\M_{\widehat\Gamma}^{1/r}$ by $\text{Detach}_{N}.$ When $N=\{e\}$ is a singleton, we denote this map by $\text{Detach}_{e}.$
\end{nn}

There are two natural universal curves over the fiber product $\M_{\widehat\Gamma} \times_{\M_{\Gamma}} \M_{\Gamma}^{1/r}$: we define $\mathcal{C}_{\Gamma}$ by the fiber diagram
\[\xymatrix{
\mathcal{C}_{\Gamma} \ar[r]\ar[d]_{\pi} & \mathcal{C}\ar[d]\\
\M_{\widehat\Gamma} \times_{\M_{\Gamma}} \M_{\Gamma}^{1/r} \ar[r]^-{i_{\Gamma} \circ \mu} & \M_{0,k,l}^{1/r},}\]
and $\mathcal{C}_{\widehat{\Gamma}}$ by the fiber diagram
\[\xymatrix{
\widehat{\mathcal{C}}\ar[d] & \mathcal{C}_{\widehat{\Gamma}}\ar[l]\ar[d]^{\widehat{\pi}}\\
\M_{\widehat{\Gamma}}^{1/r} & \M_{\widehat\Gamma} \times_{\M_{\Gamma}} \M_{\Gamma}^{1/r}\ar[l]_-{q},}\]
in which $\widehat{\mathcal{C}}$ is the universal curve over $\M_{\widehat\Gamma}^{1/r}$.  There are universal bundles $\mathcal{S}$ and $\widehat{\mathcal{S}}$ on these two universal curves, and they are related by a universal normalization morphism
\[n: \mathcal{C}_{\widehat\Gamma} \rightarrow \mathcal{C}_{\Gamma}.\]

We can now state the decomposition properties of the Witten bundle.  We state the properties in the case where $N = \{e\}$ for a single edge $e$, but all can be readily generalized to the setting where more than one edge is detached.

\begin{pr}
\label{pr:decomposition}
Let $\Gamma$ be a stable genus-zero twisted $r$-spin dual graph with a lifting.  Suppose that $\Gamma$ has a single edge $e$, so the general point of $\M_{\Gamma}^{1/r}$ is a stable $r$-spin disk with two components $C_1$ and $C_2$ meeting at a node $p$.  Let $\widehat{\Gamma} = \detach_e(\Gamma)$.  Let $\cW$ and $\widehat\cW$ denote the Witten bundles on $\M_{0,k,l}^{1/r}$ and $\M_{\widehat\Gamma}^{1/r}$, respectively.

Then, topologically, the Witten bundle decomposes as follows along the node $p$:
\begin{enumerate}[(i)]
\item\label{it:NS} If $e$ is a Neveu--Schwarz edge, then $\mu^*i_{\Gamma}^*\cW = q^*\widehat\cW$.

\item\label{it:Ramondbdryedge} If $e$ is a Ramond boundary edge, then there is an exact sequence
\begin{equation}
\label{eq:decompses}0 \rightarrow \mu^*i_{\Gamma}^*\mathcal{W} \rightarrow q^*\widehat{\mathcal{W}} \rightarrow \TTT_+ \rightarrow 0,
\end{equation}
where $\TTT_+$ is a trivial real line bundle.

\item If $e$ is a Ramond internal edge connecting two closed vertices, write $q^*\widehat\cW = \widehat\cW_1 \boxplus \widehat\cW_2$, in which $\widehat\cW_1$ is the Witten bundle on the component containing the anchor of $\Gamma$ (defined via $\widehat\cS|_{\mathcal{C}_1}$) and $\widehat\cW_2$ is the Witten bundle on the other component.  Then there is an exact sequence
\begin{equation}
\label{eq:decompses2}
0 \rightarrow \widehat\cW_2 \rightarrow \mu^*i_{\Gamma}^*\cW \rightarrow \widehat\cW_1 \rightarrow 0.
\end{equation}
Furthermore, if $\widehat\Gamma'$ is defined to agree with $\widehat\Gamma$ except that the twist at each Ramond tail is $r-1$, and $q': \M_{\widehat{\Gamma}} \times_{\M_{\Gamma}} \M_{\Gamma}^{1/r} \rightarrow \M_{\widehat\Gamma'}^{1/r}$ is defined analogously to $q$, then there is an exact sequence
\begin{equation}
\label{eq:decompses3}
0 \rightarrow \mu^*i_{\Gamma}^*\mathcal{W} \rightarrow (q')^*\widehat{\mathcal{W}}' \rightarrow \TTT \rightarrow 0,
\end{equation}
where $\widehat{\mathcal{W}}'$ is the Witten bundle on $\M_{\widehat\Gamma'}^{1/r}$ and $\TTT$ is a line bundle whose $r$th power is trivial.

\item If $e$ is a Ramond internal edge connecting an open vertex to a closed vertex, write $q^*\widehat\cW = \widehat\cW_1 \boxplus \widehat\cW_2$, in which $\widehat\cW_1$ is the Witten bundle on the disk component (defined via $\widehat\cS|_{\mathcal{C}_1}$) and $\widehat\cW_2$ is the Witten bundle on the sphere component.  Then the exact sequences \eqref{eq:decompses2} and \eqref{eq:decompses3} both hold.
\item\label{it:cont_bdry_tail} Suppose that $\Gamma$ has a single vertex, no edges, and a contracted boundary tail $t$, and let $\widehat{\Gamma} = \detach _t(\Gamma)$.  If $\cW$ and $\widehat\cW$ denote the Witten bundles on $\M_{0,k,l}^{1/r}$ and $\M_{\widehat\Gamma}^{1/r}$, respectively, then the sequence \eqref{eq:decompses} also holds in this case.
\end{enumerate}

\begin{rmk}
\label{rmk:coarse}
We say that the Witten bundle decomposes ``topologically" as above to emphasize that, while the coarse underlying Witten bundle behaves as above, the action of the isotropy groups of the moduli space on the fibers may not respect these identifications.  (This is only relevant for internal nodes, since boundary and contracted boundary nodes do not contribute isotropy to the moduli space.)  For example, in the case of a Neveu--Schwarz internal node, an element of the fiber of $q^*\widehat\cW$ is acted on by independently scaling the sections of $J$ on the two components, while Remark~\ref{rmk:q not iso} shows that this is not in general possible for elements of the fiber of $\mu^*i_{\Gamma}^*\cW$.
\end{rmk}

\begin{proof}[Proof of Proposition~\ref{pr:decomposition}]
First, let us fix some notation.  Letting $\cJ = \cS^{\vee} \otimes \omega_{\pi}$ and decomposing $\mathcal{C}_{\widehat{\Gamma}}$ into components $\mathcal{C}_1$ and $\mathcal{C}_2$, we define:
\[\pi_i = \widehat{\pi}|_{\mathcal{C}_i},\;\;\; \cS_i = n^*\cS|_{\mathcal{C}_i},\;\;\;
\widehat\cS_i = \widehat\cS|_{\mathcal{C}_i},\;\;\;
\widehat\cJ_i = \widehat\cS_i^{\vee} \otimes \omega_{\pi_i}\]
%\begin{align*}
%\pi_i &= \widehat{\pi}|_{\mathcal{C}_i},\\
%\cS_i &= n^*\cS|_{\mathcal{C}_i},\\
%\widehat\cS_i &= \widehat\cS|_{\mathcal{C}_i},\\
%\widehat\cJ_i &= \widehat\cS_i^{\vee} \otimes \omega_{\pi_i}
%\end{align*}
for $i=1,2$.  We view $\mu^*i_{\Gamma}^*\cW = (R^0\pi_*\cJ)_+$ and $q^*\widehat{\cW} = (R^0\pi_{1*}\widehat\cJ_1 \oplus R^0\pi_{2*}\widehat\cJ_2)_+$.  %\footnote{This is a slight abuse of notation, as in reality, both sides should be pulled back under the map $\varpi: \M^{1/r}_{0,k,l} \rightarrow \M^{1/r,\Z_2}_{0,k+2l}$ discussed in the previous subsection; however, by carrying out all of the computations below in $\M^{1/r,\Z_2}_{0,k+2l}$ and then pulling back under $\varpi$, the proofs all apply to the Witten bundle on $\M^{1/r}_{0,k,l}$.}

Suppose that $e$ is Neveu--Schwarz.  Then the normalization exact sequence yields
\[0 \rightarrow \cS \rightarrow n_*n^*\cS \rightarrow \cS|_{\Delta_p} \rightarrow 0,\]
where $\Delta_p \subset \mathcal{C}_{\Gamma}$ is the orbifold divisor corresponding to the node $p$.  Since the twist of every tail of $\Gamma$ is non-negative, except at most one tail that may have twist $-1$, we have $R^0\pi_{1*}\cS_1 = R^0\pi_{2*}\cS_2 = 0$, and we obtain
\begin{equation}
\label{eq:normS}
0 \rightarrow \sigma_p^*\cS \rightarrow R^1\pi_*\cS \rightarrow R^1\pi_{1*}\cS_1 \oplus R^1\pi_{2*}\cS_2 \rightarrow 0.
\end{equation}
The assumption that $e$ is Neveu--Schwarz implies both that $\widehat{\cS}_i = \cS_i$ for $i=1,2$ and that $\sigma_p^*\cS = 0$, since sections of an orbifold line bundle necessarily vanish at Neveu--Schwarz points.  Thus, dualizing and taking involution-invariant parts of \eqref{eq:normS} shows that the fibers of $\mu^*i_{\Gamma}^*\cW$ and $q^*\widehat\cW$ are canonically identified.

Suppose, now, that $e$ is a Ramond boundary edge.  Then
\[\widehat{\cS}_i = \cS_i \otimes \O(-r\Delta_{i,p}),\]
where $\Delta_{i,p} \subset \mathcal{C}_i$ is the orbifold divisor corresponding to the half-node in $C_i$.  The normalization exact sequence for $\cJ$ yields
\begin{equation}
\label{eq:normJ}
0 \rightarrow R^0\pi_*\mathcal{J} \rightarrow R^0\widehat{\pi}_*(n^*\mathcal{J}) \rightarrow \sigma_p^*\mathcal{J} \rightarrow 0.
\end{equation}
Now, passing to coarse underlying bundles (which does not affect cohomology), the middle term can be re-written:
\[\big|n^*\mathcal{J}|_{\mathcal{C}_i}\big| = |n|^*\bigg(|\mathcal{S}|^{\vee} \otimes \omega_{|\pi|}\bigg)\bigg|_{|\mathcal{C}_i|} = |\mathcal{S}_i|^{\vee} \otimes \O(\Delta_{i,p}) \otimes \omega_{|\pi|} = |\widehat{\mathcal{J}}_i|.\]
Thus, the sequence \eqref{eq:normJ} can be re-expressed as
\begin{equation}
\label{eq:Jhat12}
0 \rightarrow R^0\pi_*\mathcal{J} \rightarrow  R^0\pi_{1*}\widehat{\mathcal{J}}_1 \oplus R^0\pi_{2*}\widehat{\mathcal{J}}_2 \rightarrow \sigma_p^*\mathcal{J} \rightarrow 0.
\end{equation}
Because $e$ is Ramond, the bundle $\TTT:= (\sigma_p^*\cJ)$ has trivial $r$th power, using that the restriction of both $\omega_{\pi}$ and $\omega_{\pi,\log}$ to the locus of nodes is trivial. Taking involution-invariant parts in \eqref{eq:Jhat12} yields \eqref{eq:decompses}, where the real line $\TTT_+$ is trivialized using the grading.

Next, suppose that $e$ is a Ramond internal edge connecting two closed vertices.   Denote the two vertices of $\Gamma$ by $v_1$ and $v_2$, and let $v_1$ be the vertex supporting the anchor.  Then $R^0\pi_{2*}\cS_2 = 0$, and if $v_1$ contains at least one tail of positive twist, then $R^0\pi_{1*}\cS_1 = 0$, as well.  In this case, we still have the sequence \eqref{eq:normS}.  We also have
\begin{equation}
\label{eq:SShat}
0 \rightarrow \sigma_p^*\cS \rightarrow R^1\pi_{1*}\widehat\cS_1 \rightarrow R^1\pi_{1*}\cS_1 \rightarrow 0
\end{equation}
The sequences \eqref{eq:normS} and \eqref{eq:SShat} fit together into a commutative diagram
\[\xymatrix{
0 \ar[r] & \sigma_p^*\mathcal{S} \ar[d]\ar[r] & R^1\pi_*\widehat{\mathcal{S}}_1 \ar[r]\ar[d] & R^1\pi_*\mathcal{S}_1 \ar[d]\ar[r] &0\\
0 \ar[r] & \sigma_p^*\mathcal{S}\ar[r] & R^1\pi_*\mathcal{S} \ar[r] & R^1\pi_*\mathcal{S}_1 \oplus R^1\pi_*\mathcal{S}_2 \ar[r] & 0,
}\]
in which the middle vertical arrow can be constructed by dualizing the sequence \eqref{eq:Jhat12}.  By the Snake Lemma the cokernel of the middle vertical arrow is $R^1\pi_{2*}\cS_2 = R^1\pi_{2*}\widehat\cS_2$, so
\[0 \rightarrow R^1\pi_{1*}\widehat\cS_1 \rightarrow R^1\pi_*\cS \rightarrow R^1\pi_{2*}\widehat\cS_2 \rightarrow 0.\]
Dualizing and taking involution-invariant parts proves \eqref{eq:decompses2}.

Still assuming that $e$ is a Ramond internal edge connecting two closed vertices, suppose that every tail of $v_1$ except the anchor has twist zero. In this case the anchor must have twist $-1.$  Then $R^0\pi_{1*}\cS_1$ is one-dimensional, and in the normalization exact sequence
\[0 \rightarrow R^0\pi_{1*}\cS_1 \oplus R^1\pi_{2*}\cS_2 \rightarrow \sigma_p^*\cS \rightarrow R^1\pi_*\cS \rightarrow R^1\pi_{1*}\cS_1 \oplus R^1\pi_{2*},\]
the first map is an isomorphism.  Hence, $R^1\pi_*\cS \cong R^1\pi_{1*}\cS_1 \oplus R^1\pi_{2*}\cS_2$.  A similar argument shows that the first map in
\[0 \rightarrow R^0\pi_{1*}\cS_1 \rightarrow \sigma_p^*\cS_1 \rightarrow R^0\pi_{1*}\widehat\cS_1 \rightarrow R^0\pi_{1*}\cS_1 \rightarrow 0\]
is an isomorphism, so we also have $R^1\pi_{1*}\widehat\cS_1 \cong R^1\pi_{1*}\cS_1$.  Since $\widehat\cS_2 = \cS_2$ by construction, we conclude that $R^1\pi_*\cS \cong R^1\pi_{1*}\widehat\cS_1 \oplus R^1\pi_{2*}\widehat\cS_2$, which implies that \eqref{eq:decompses2} holds (and, in fact, splits) in this case.  The proof of \eqref{eq:decompses3} is identical to the proof of \eqref{eq:decompses} above.

If $e$ is a Ramond internal edge connecting an open vertex to a closed vertex, then the proof that \eqref{eq:decompses2} holds is identical to the proof for an edge joining two closed vertices (with $R^0\pi_{1*}\cS_1 = R^0\pi_{2*}\cS_2 = 0$), and the proof that \eqref{eq:decompses3} holds is exactly as in \eqref{eq:decompses}.  Finally, in the situation where $\Gamma$ has a single vertex, no edges, and a contracted boundary tail, the exact sequence \eqref{eq:decompses} still holds, by the same proof.
\end{proof}
\end{pr}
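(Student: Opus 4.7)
The unifying approach is to apply the normalization exact sequence to either $\cS$ or $\cJ = \cS^\vee \otimes \omega_\pi$, whichever yields favorable cohomological vanishings, and then take $\widetilde\phi$-invariants fiberwise. Two mechanisms drive the dichotomy between Neveu--Schwarz and Ramond nodes. First, $\sigma_p^*\cS$ vanishes at an NS node because the multiplicity of $\cS$ at $p$ is nonzero and sections of an orbifold line bundle must transform trivially under the isotropy character; at a Ramond node, by contrast, $\sigma_p^*\cS$ is the full one-dimensional fiber. Second, the modification $\widehat\cS_i = n^*\cS|_{\mathcal{C}_i} \otimes \O(-r\Delta_{i,p})$ from \eqref{eq:widehatS} differs from $n^*\cS|_{\mathcal{C}_i}$ only at Ramond half-nodes, and after passing to coarse bundles the divisor is absorbed into $\omega$ via the adjunction $\omega_{|\pi|}\big|_{|\mathcal{C}_i|} \cong \omega_{|\pi_i|}\otimes \O(\Delta_{i,p})$, giving the key identification $|\widehat\cJ_i| \cong |n^*\cJ|_{|\mathcal{C}_i|}|$.

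For case~(i), I push forward the normalization sequence $0 \to \cS \to n_* n^*\cS \to \sigma_p^*\cS \to 0$ by $\pi_*$. Since the internal twists are all $\geq -1$ with at most one $-1$ (at the anchor), the pushforwards $R^0\pi_{i*}\cS_i$ vanish; together with $\sigma_p^*\cS = 0$ from the NS condition, this gives $R^1\pi_*\cS \cong R^1\pi_{1*}\cS_1 \oplus R^1\pi_{2*}\cS_2$. Dualizing, taking $\widetilde\phi$-invariants, and using $\widehat\cS_i = \cS_i$ at NS nodes yields the claim.

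For case~(ii), I instead apply $R^0\widehat\pi_*$ to the normalization sequence for $n^*\cJ$, producing $0 \to R^0\pi_*\cJ \to R^0\widehat\pi_*(n^*\cJ) \to \sigma_p^*\cJ \to 0$ (surjectivity on the right follows once vanishing of the relevant $R^1$ terms on the normalization is verified). The middle term rewrites as $R^0\pi_{1*}\widehat\cJ_1 \oplus R^0\pi_{2*}\widehat\cJ_2$ via the adjunction above, and the quotient $\TTT := \sigma_p^*\cJ$ has trivial $r$th power since $\omega_{\pi,\log}|_p$ is canonically trivial at a node. At a Ramond boundary node the grading $w$ provides a non-vanishing real section of $\TTT_+$, trivializing it; case~(v) is identical once one recalls that at a contracted boundary node the grading picks out a positive-imaginary ray, which again trivializes $\TTT_+$.

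For case~(iii), the sequence \eqref{eq:decompses3} follows from exactly the recipe of~(ii) applied to $\widehat\Gamma'$; here $\TTT$ need only have trivial $r$th power, since no grading is available to furnish a real trivialization at an interior Ramond node. For \eqref{eq:decompses2}, I combine the normalization sequence for $\cS$ with the short exact sequence $0 \to \sigma_p^*\cS \to R^1\pi_{1*}\widehat\cS_1 \to R^1\pi_{1*}\cS_1 \to 0$ comparing the two modifications on the anchor-containing side, and apply the Snake Lemma to identify the cokernel $R^1\pi_{2*}\widehat\cS_2$; dualizing and taking invariants yields \eqref{eq:decompses2}. Case~(iv) follows the same template. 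The main technical obstacle will be the degenerate subcase in which the anchor-bearing closed vertex has only its twist-$(-1)$ anchor as a non-positively-twisted tail, so that $R^0\pi_{1*}\cS_1$ does not vanish; here the connecting maps in both normalization sequences force the sequences to split, and \eqref{eq:decompses2} persists as a split short exact sequence.
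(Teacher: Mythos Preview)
Your proposal is correct and follows essentially the same approach as the paper: the normalization sequence for $\cS$ in the Neveu--Schwarz case, the normalization sequence for $\cJ$ together with the adjunction $|n^*\cJ|_{|\mathcal{C}_i|}| \cong |\widehat\cJ_i|$ in the Ramond boundary and contracted boundary cases, and the Snake Lemma applied to the pair of sequences \eqref{eq:normS} and \eqref{eq:SShat} for the Ramond internal case (including the degenerate split subcase when $R^0\pi_{1*}\cS_1\neq 0$). The paper's proof matches your outline step for step, including the use of the grading to trivialize $\TTT_+$ and the observation that $\TTT^{\otimes r}$ is trivial via triviality of $\omega_{\pi,\log}$ at the node.
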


One further observation will be needed later:
\begin{obs}\label{obs:cases when pulled back from one component}
Suppose $I'\subseteq [l]$ is such that $\sum_{i\in I'} a_i\leq r-1$.  Let $\Gamma$ be a graph with exactly two vertices, a closed vertex $v^c$ containing exactly the internal tails labeled by $I'$ and an open vertex $v^o$, connected by a single edge.  Then $\M_{\widehat\Gamma}^{1/r} = \M_{v^c} \times \M_{v^o}$.
One can compute that the twist at the node is $r-2-\sum_{i\in I'} a_i$ and the Witten bundle $\cW_{v^c}$ on $\M_{v^c}$ has rank zero, so Proposition~\ref{pr:decomposition} implies that
\[\mu^*i_{\Gamma}^*\mathcal{W}_{\Gamma} = q^*\pi_2^*\mathcal{W}_{v^o},\]
where $\pi_2$ is the projection to the second factor.
\end{obs}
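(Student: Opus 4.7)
My plan follows the three steps that the observation itself sketches. First, I would compute the twist $c$ at the half-edge of $\Gamma$ emanating from $v^c$. By the closed-vertex balance condition in Definition~\ref{def:graph}(vi), $\sum_{i\in I'}a_i + c \equiv r-2\pmod r$, and the range $c\in\{-1,0,\dots,r-1\}$ together with $0\le \sum a_i\le r-1$ forces $c=r-2-\sum_{i\in I'}a_i$ when $\sum a_i\le r-2$ (a Neveu--Schwarz node) and $c=-1$ when $\sum a_i=r-1$ (a Ramond node). In the Ramond case, $c=-1$ rather than $c=r-1$ because, by Definition~\ref{def:graph}(vii), the whole graph $\Gamma$ is an open connected component (it contains $v^o$), and after detaching the edge the half-edge $h_c$ sits in the closed connected component consisting only of $v^c$, which contains no open vertex.

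Second, I verify that $\cW_{v^c}$ has rank zero. The closed sphere $v^c$ in $\widehat\Gamma$ carries internal tails with twists $\{a_i\}_{i\in I'}$ together with the anchor half-node of twist $c$, so the closed rank formula $\frac{(\text{total twist})-(r-2)}{r}$ evaluates to $0$ in both cases: in the Neveu--Schwarz case since $\sum a_i+c=r-2$ directly, and in the Ramond case since $\sum a_i + c=(r-1)+(-1)=r-2$. The bundle $\cW_{v^c}$ is well defined on $\M_{v^c}$ even in the Ramond case, by Observation~\ref{obs:twist-1}, since the anchor is the unique tail of twist $-1$ in its component.

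Third, I would conclude using Proposition~\ref{pr:decomposition}. Since $\widehat\Gamma$ is disconnected, the universal curve on $\M_{\widehat\Gamma}^{1/r}=\M_{v^c}\times\M_{v^o}$ is a disjoint union, so the Witten bundle decomposes as $\widehat\cW=\cW_{v^c}\boxplus\cW_{v^o}$, which equals $\pi_2^*\cW_{v^o}$ because $\cW_{v^c}=0$. In the Neveu--Schwarz case, part~(i) of the proposition directly gives $\mu^*i_\Gamma^*\cW=q^*\widehat\cW=q^*\pi_2^*\cW_{v^o}$. In the Ramond case, the exact sequence~\eqref{eq:decompses2} from part~(iv),
\[0\to \widehat\cW_2\to \mu^*i_\Gamma^*\cW\to\widehat\cW_1\to 0,\]
has $\widehat\cW_2=\cW_{v^c}=0$ and $\widehat\cW_1=\cW_{v^o}$, yielding the same identification. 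The most delicate point is selecting~\eqref{eq:decompses2} rather than~\eqref{eq:decompses3} in the Ramond case: only the former uses the detached graph $\widehat\Gamma$ with anchor twist $-1$, whose $v^c$-factor is the zero bundle, whereas~\eqref{eq:decompses3} refers to the alternative graph $\widehat\Gamma'$ with anchor twist $r-1$, for which the corresponding sphere Witten bundle need not vanish.
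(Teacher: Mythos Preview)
Your proof is correct and follows the same approach as the paper, which embeds the argument in the statement of the observation itself. You have in fact been more careful than the paper by explicitly separating the Neveu--Schwarz case $\sum_{i\in I'}a_i\le r-2$ from the Ramond case $\sum_{i\in I'}a_i=r-1$, invoking Definition~\ref{def:graph}(vii) to determine that the half-edge at $v^c$ gets twist $-1$ (not $r-1$) in the latter, and then selecting the correct part of Proposition~\ref{pr:decomposition}---namely~\eqref{eq:decompses2} rather than~\eqref{eq:decompses3}---so that the sphere-side bundle vanishes.
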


\subsection{The cotangent line bundles}

For each $i \in I$, a cotangent line bundle $\mathbb{L}_i$ is defined on the moduli space of stable marked disks as the line bundle whose fiber over $(C, \phi, \Sigma, \{z_i\}, \{x_j\}, m^I, m^B)$ is the cotangent line $T^*_{z_i}\Sigma$.  Alternatively, $\mathbb{L}_i$ is the pullback via the doubling map of the usual line bundle $\mathbb{L}_i$ on $\M_{0,k+2l}$.  We define cotangent line bundles $\mathbb{L}_i$ on $\M^{1/r}_{0,k,l}$ by pullback under the morphism forgetting the spin structure, and for any graph $\Gamma$, we let $\mathbb{L}_i^{\Gamma}$ be the pullback of $\mathbb{L}_i$ to $\M^{1/r}_{\Gamma}$.

We mention a few important properties of these bundles in the observation below. The proofs are identical to the proofs of the analogous statements in \cite[Section 3.5]{PST14}:

\begin{obs}\label{obs:L_i_properties}
\begin{enumerate}[(i)]
\item If $e$ is an edge of $\Gamma$, $\Gamma_1$ and $\Gamma_2$ are the two connected components of $\detach_e(\Gamma)$, and $i$ is a marking of an internal tail of $\Gamma_1$, then
\[\mathbb{L}_i^{\Gamma} = {\text{Proj}}_1^*\mathbb{L}_i^{\Gamma_1},\]
where ${\text{Proj}}_1: \M^{1/r}_{\Gamma} \rightarrow \M^{1/r}_{\Gamma_1}$ is the projection.
\item If $B' \subset \Z \setminus \{i\}$, $I' \subset \Z$, and $\Gamma' = \text{for}_{B',I'}(\Gamma)$, then there exists a canonical morphism
\[t_{\Gamma,B',I'} : \text{For}_{B',I'}^*\mathbb{L}_i^{\Gamma'} \rightarrow \mathbb{L}_i^{\Gamma},\]
which is an isomorphism away from strata where the component containing $z_i$ is contracted by the forgetful map, and which vanishes identically on the remaining strata.
\item $\mathbb{L}_i^{\Gamma}$ is canonically oriented as a complex line bundle.
\end{enumerate}
\end{obs}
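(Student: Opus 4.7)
The plan is to reduce all three statements to standard facts about the classical cotangent line bundles on $\oCM_{0,k+2l}$, exploiting the fact that $\mathbb{L}_i$ on $\M_{0,k,l}^{1/r}$ is defined by pulling back along the composition of $\text{For}_{\text{spin}}$ and the doubling morphism $\oCM_{0,k,l} \to \oCM_{0,k+2l}$ that sends $(C,\phi,\Sigma,\ldots)$ to $(C;\{z_i\},\{\bar z_i\},\{x_j\})$. All pullback and comparison morphisms at the spin level will then be induced tautologically from their classical counterparts, so the only real content will be checking that the resulting maps are compatible with the extra data (involutions, gradings, anchors) and with the diagrams \eqref{eq:instead_of_product} used to define $\M_\Gamma^{1/r}$.

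For item (i), I would start by recalling that the cotangent line $T^*_{z_i}\Sigma$ is a purely local invariant of the germ of $C$ at the point $z_i$. Detaching the edge $e$ does not affect the connected component $\Gamma_1$ that carries the tail $i$, and the universal curve over $\M_{\Gamma_1}^{1/r}$ agrees with the component of the universal curve of $\M_\Gamma^{1/r}$ containing $z_i$. The projection ${\text{Proj}}_1$ therefore factors at the level of universal curves through the inclusion of this component, and pulling back the cotangent line at $z_i$ gives a canonical isomorphism. The only point worth checking carefully is that the fiber product in \eqref{eq:instead_of_product} and the degree-one map $q$ do not introduce any twist in the identification; since $\mathbb{L}_i$ is pulled back from the non-spin moduli, where ${\text{Proj}}_1$ is literally a projection of products of moduli spaces, this is automatic.

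For item (ii), I would construct $t_{\Gamma,B',I'}$ exactly as the classical comparison morphism of $\psi$-classes under a forgetful map. Concretely, consider the relative dualizing sheaf $\omega_{\pi}$ of the universal curve and recall $\mathbb{L}_i = \sigma_i^*\omega_{\pi}$, where $\sigma_i$ is the $i$th section. If $C'$ is the result of forgetting and stabilizing, there is a natural pullback of relative differentials along the contraction $C \to C'$, which induces a map $\text{For}_{B',I'}^*\mathbb{L}_i^{\Gamma'} \to \mathbb{L}_i^{\Gamma}$. This map is an isomorphism whenever no stabilization occurs near $z_i$, and it vanishes precisely on the locus where the component of $z_i$ is unstable and gets contracted, because in that case the differential at $z_i$ factors through the cotangent space of a point. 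I expect this to be the most delicate part: one must keep track of several cases (forgetting a legal tail of a disk component, forgetting several tails that destabilize a chain of bubbles, interacting with the grading and the anchor) and verify that each is covered by the standard argument. Since forgetful maps in Section~\ref{subsec:mod_const} only act on twist-zero and illegal boundary markings, the grading and anchor data descend cleanly, so the classical contraction arguments from \cite[Section 3.5]{PST14} apply verbatim.

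For item (iii), the cotangent line $T^*_{z_i}\Sigma$ at an internal marked point $z_i \in \Sigma\setminus\partial\Sigma$ is a one-dimensional complex vector space via the complex structure of $\Sigma$, and this complex structure varies holomorphically in families. Hence $\mathbb{L}_i$ is a complex line bundle in a canonical way, and any complex line bundle carries a preferred orientation given by the ordered real basis $(v, iv)$ for any nonzero $v$. This orientation is clearly preserved under the identifications of items (i) and (ii) whenever they are isomorphisms.
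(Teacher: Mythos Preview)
Your proposal is correct and takes essentially the same approach as the paper: the paper does not give a proof at all, simply stating that ``the proofs are identical to the proofs of the analogous statements in \cite[Section 3.5]{PST14}.'' Your sketch spells out exactly that reduction---pulling back from the non-spin moduli via $\text{For}_{\text{spin}}$ and the doubling map, then invoking the standard local/contraction arguments for $\psi$-classes---so it matches the intended argument and in fact supplies more detail than the paper itself.
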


\section{Orientation: constructions and properties}\label{sec:or}
\subsection{Relative orientation of the Witten bundle}\label{subsec:rel_or}
The open Witten bundle does not carry a canonical complex orientation.  Nevertheless, it is canonically relatively oriented, relative to the moduli space of stable graded $r$-spin disks.  Before stating this precisely, we require some notation.

\begin{nn}\label{nn:Gammar}
Denote by $\Gammar$ the connected graded smooth graph with a single vertex $v,$ boundary tails marked by $[k]$, and internal tails marked $[l]$, such that the $i$th internal tail has twist $a_i$.
\end{nn}
Let $\Gamma\in\d\Gammar$, which is an $r$-spin dual graph with a lifting consisting of two vertices $v_1$ and $v_2$ connected by a single edge $e$.  We have
\begin{equation}
\label{eq:TMdecomp0}
i_{\Gamma}^* \det(T\M^{1/r}_{0,B,\{a_i\}_{i \in I}}) \cong \det(N)  \otimes \det(T\oCM_{\detach_e(\Gamma)}^{1/r}),
\end{equation}
where $N$ is the normal bundle of $\M^{1/r}_{\Gamma}$; this can be identified with the tensor product of the tangent line bundles at the half-nodes on the moduli spaces $\M_{v_1}^{1/r}$ and $\M_{v_2}^{1/r}$ associated to the vertices.\footnote{To be more precise, the isomorphism \eqref{eq:TMdecomp0} holds only on the coarse level, as the actions of the isotropy groups of the moduli space on the fibers are not the same; see Remarks~\ref{rmk:q not iso} and \ref{rmk:coarse}.  However, these actions are clearly orientation-preserving, so this issue does not affect our orientation analysis, and will therefore be ignored below.} These two tangent lines are canonically oriented complex lines when $e$ is internal, and are canonically oriented real lines when $e$ is boundary. In both cases, $N$ carries a canonical orientation, where in the second case we orient $N$ by the outward-pointing normal.

Furthermore, for any decompositions $I = I_1 \sqcup I_2$ and $B = B_1 \sqcup B_2$ such that
\begin{align}
\label{eq:B12}
|B_1| - 1 = \sum_{i \in I_1} a_i \mod r, \;\;\; |B_2| = \sum_{i \in I_2} a_i \mod r
\end{align}
and hence
\begin{align*}
\frac{2\sum_{i \in I_2} a_i +|B_2|(r-2)}{r}\equiv |B_2|\mod 2,
\end{align*}
let $\Gamma = \Gamma_{I_1,I_2; B_1,B_2}$ be the graph with two open vertices $v_1$ and $v_2$ joined by a boundary edge $e$, in which $v_i$ contains the internal tails marked by $I_i$ and the boundary tails marked by $B_i$.  If $h_i$ is the half-edge of $e$ incident to $v_i$, then a straightforward computation shows that
\[\text{tw}(h_1) = \alt(h_1) = 0.\]
We let $\mathcal{B}\Gamma$ denote the graph with vertices $v'_1$ and $v_2$ obtained by detaching $e$ and forgetting $h_1$; this is a very special case of the notion of the ``base" of a graph $\Gamma$, defined in general in %Subsection~\ref{subsec:base} below.
the sequel. Let $\M^{1/r}_{v'_1}$, $\M^{1/r}_{v_1}$, and $\M^{1/r}_{v_2}$ be the moduli spaces of stable graded $r$-spin disks corresponding to the vertices $v'_1$, $v_1$, and $v_2,$ thought of as $r$-spin graphs.

By the discussion in Section~\ref{subsec:mod_const}, %the moduli space $\M_{\Gamma}^{1/r}$ includes the data of a gluing of the fibers of $L$ at a boundary node.  Thus,
the maps $q$ and $\mu$ are isomorphisms in the case of a single boundary edge, which implies that
\[\M_{\Gamma}^{1/r} \cong \M_{v_1}^{1/r} \times \M_{v_2}^{1/r}.\]
Composing with the forgetful map forgetting the twist-zero tail $h_1$ on $v_1$, we obtain a map
\[F_{\Gamma}: \M^{1/r}_{\Gamma} \rightarrow \M^{1/r}_{\CB\Gamma}\cong \M_{v'_1}^{1/r} \times \M_{v_2}^{1/r}.\]
By Proposition~\ref{pr:decomposition} and the fact that the Witten class pulls back under the forgetful map, %we have
\begin{equation}
\label{eq:Wdecomp}
i_{\Gamma}^*\mathcal{W} = F_{\Gamma}^*\mathcal{W}_{\CB\Gamma} = \mathcal{W}_{v_1'} \boxplus \mathcal{W}_{v_2}.
\end{equation}
In this situation, we also have
\begin{equation}
\label{eq:TMdecomp}
i_{\Gamma}^* \det(T\M^{1/r}_{0,B,\{a_i\}_{i \in I}}) \cong \det(N) \otimes \det(f_{\Gamma}) \otimes F_\Gamma^*\det(T\M^{1/r}_{\mathcal{B}\Gamma}),
\end{equation}
where $N$ is again the normal bundle of $\M^{1/r}_{\Gamma},$ and $f_{\Gamma}$ is the fiber of $F_{\Gamma}$ (which can be identified with the fiber of the map on $\M_{v_1}$ that forgets the marked point $x_{h_1}$ associated to $h_1$).  Note that $f_{\Gamma}$ also carries a canonical orientation, as a subset of the boundary of a disk.

The main theorem of this section is the following:

\begin{thm}\label{thm:or}
The morphism $\mathcal{W} \rightarrow \M_{0,B ,\{a_i\}_{i \in I}}^{1/r}$ is canonically relatively oriented.  More precisely, for any sets $B$ and $I$ of boundary and internal markings, and any set $\{a_i\}_{i \in I}$ of the internal twists, there exists a distinguished orientation $o_{0, B, \{a_i\}_{i \in I}}$ of $T\M_{0,B ,\{a_i\}_{i \in I}}^{1/r} \oplus \mathcal{W}$.  These orientations satisfy the following properties:

\begin{enumerate}[(i)]

\item\label{it:renaming} If $f^B: B \rightarrow B'$ and $f^I: I \rightarrow I'$ are bijections such that $f^I$ preserves twists, and if
\[F: \M_{0,B, \{a_i\}_{i \in I}}^{1/r} \rightarrow \M_{0,B', \{a_i\}_{i \in I'}}^{1/r}\] is the induced map,
then $o_{0, B, \{a_i\}_{i \in I}} = F^* o_{0, B', \{a_i\}_{i \in I'}}$.

\item\label{it:or-induced to bdry} If $I = I_1 \sqcup I_2$ and $B = B_1 \sqcup B_2$ satisfy \eqref{eq:B12} and $\Gamma = \Gamma_{I_1, I_2; B_1, B_2}$, then under the isomorphisms \eqref{eq:Wdecomp} and \eqref{eq:TMdecomp}, we have
\[i_{\Gamma}^*o_{0,B,\{a_i\}_{i \in I}}= o_N \otimes o_{h_1} \otimes F_\Gamma^*(o_{0, B_1, I_1} \boxtimes o_{\{h_2\} \cup B_2, I_2}),\]
in which $o_N,~o_{h_1}$ are the canonical orientations on $N$ and $f_{\Gamma}$ respectively.% described above.
\item\label{it:or_ind_to_interior} If $\Gamma$ is a graph with one open vertex $v^o$ and one closed vertex $v^c$, then under the identification
\[i_{\Gamma}^*\det(T\M^{1/r}_{0,B, \{a_i\}_{i \in I}}) = \det(N) \otimes \left(\det(T\M_{v^o}) \boxtimes \det(T \M_{v^c})\right),\]
we have
\begin{equation}\label{eq:123456}i_{\Gamma}^*o_{0,B,\{a_i\}_{i \in I}}= o_N \otimes \left(o_{v^o} \boxtimes o_{v^c}\right),\end{equation}
where $o_N$ is canonically defined as above and $o_{v^o},o_{v^c}$ are the orientations on the total space associated to the two vertices, $o_{v^c}$ is the standard complex one.
\end{enumerate}
\end{thm}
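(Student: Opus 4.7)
My plan is to construct the orientation fiberwise using the grading as a totally real boundary condition for a Cauchy--Riemann operator, then extend across boundary strata using the decomposition results of Proposition~4.3, with properties (ii) and (iii) serving as defining compatibility relations along the codimension-one boundary.

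On the smooth locus, I identify the fiber $\mathcal{W}|_\Sigma = H^0(|C|,|J|)^{\widetilde\phi}$ with the kernel of a real Cauchy--Riemann operator $\bar\partial_J$ on $\Sigma$ whose boundary condition is the totally real subbundle $|J|^{\widetilde\phi}\subset |J|\big|_{\partial\Sigma}$. Since the twists $a_i$ are nonnegative, the cokernel vanishes. The grading $w$ supplies a nowhere-vanishing section of $|J|^{\widetilde\phi}$ on each arc of $\partial\Sigma$ between consecutive legal boundary marked points, with sign changes occurring exactly at those points (by the alternation condition and Corollary~2.4). The standard theory of determinant lines for Cauchy--Riemann operators with oriented totally real boundary conditions, suitably adapted to allow the prescribed sign changes at legal marked points, then produces a canonical orientation of $\det\mathcal{W}$. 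Combined with the moduli orientation from Proposition~3.2, this defines $o_{0,B,\{a_i\}_{i\in I}}$ on the interior of the moduli space.

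To extend across the boundary I use the exact sequences of Proposition~4.3 together with the isomorphism \eqref{eq:TMdecomp}. At a Neveu--Schwarz boundary edge, $i_\Gamma^*\mathcal{W}$ splits as $\mathcal{W}_{v_1'}\boxplus\mathcal{W}_{v_2}$, and each factor is oriented by induction on dimension. At a Ramond boundary node the splitting has a one-dimensional quotient $\TTT_+$ that is canonically trivialized by the grading. At an internal degeneration the closed factor of $\mathcal{W}$ is complex and hence carries the canonical complex orientation, while the open factor inherits its inductively defined orientation. Verifying that these prescriptions glue continuously across codimension-one strata and are consistent at codimension-two corners (where two detachings commute) follows from the associativity of the splittings in Proposition~4.3, yielding a global $o_{0,B,\{a_i\}_{i\in I}}$. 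Property~(i) is then tautological by the naturality of the entire construction, and property~(iii) is immediate from the complex orientation of the closed Witten bundle.

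The main obstacle is property~(ii). Observe that the moduli-only orientation picks up a sign $(-1)^{(|B_1|-1)|B_2|}$ on restriction to $\CM_\Gamma$ by Lemma~3.6\eqref{it:induced or to boundary moduli}, whereas the statement of property~(ii) displays no such sign. Consequently the grading-induced orientation of $\mathcal{W}$, restricted to a boundary stratum at a Neveu--Schwarz boundary node, must contribute exactly the compensating factor $(-1)^{(|B_1|-1)|B_2|}$. The core technical task is to verify this: one must compare the orientation of $\det\mathcal{W}$ produced by the grading on the nodal disk $\Sigma=\Sigma_1\cup_e\Sigma_2$ to the tensor product of grading-induced orientations of $\det\mathcal{W}_{v_1'}\boxtimes\det\mathcal{W}_{v_2}$ on the normalization, keeping careful track of the cyclic reordering of the boundary arcs under which the trivializations of $|J|^{\widetilde\phi}$ on $\partial\Sigma_1$ and $\partial\Sigma_2$ interleave inside $\partial\Sigma$. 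The counting of the transpositions needed to realign these orderings is what produces the sign $(-1)^{(|B_1|-1)|B_2|}$, matching the moduli sign and confirming property~(ii).
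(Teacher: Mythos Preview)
Your approach via Cauchy--Riemann determinant lines is genuinely different from the paper's, which constructs the Witten-bundle orientation by writing down an explicit global frame $(\sigma_1^\pi,\ldots,\sigma_{k-1}^\pi)$ of $\mathcal{W}$ over the smooth locus when $k-1=\sum a_i$ (Construction~\ref{cons:explicit basis of Witten}), then bootstraps to all other $(k,\vec a)$ by an inductive ``$\text{ind}_{k\to k\pm r}$'' procedure tied to a single undetermined choice $\mathfrak{o}_{r+1}$.  The paper then proves two technical lemmas (Lemmas~\ref{lem:induced or to boundary bundle} and~\ref{lem:induced_or_of_Witten-internal}) computing how these explicit orientations restrict along boundary and internal degenerations, and finally combines them with Lemma~\ref{lem:moduli-induced_or} to obtain properties~(ii) and~(iii).

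Your proposal has two real gaps.  First, the sentence ``property~(iii) is immediate from the complex orientation of the closed Witten bundle'' is not correct.  In the paper this is Lemma~\ref{lem:induced_or_of_Witten-internal}, and the computation shows that an a priori sign $\delta^{m^c(\{a_i\}_{i\in I^C})}$ appears, with $\delta\in\{\pm 1\}$ depending on the initial choice $\mathfrak{o}_{r+1}$; only after \emph{choosing} $\mathfrak{o}_{r+1}$ so that $\delta=1$ does~(iii) hold.  Your CR construction produces \emph{some} orientation, but there is no mechanism in your argument forcing it to match the complex orientation of $\mathcal{W}_{v^c}$ without sign; indeed Lemma~\ref{lem:almost_uniq} shows exactly two families satisfy~(i)--(iii), so a nontrivial normalization is required somewhere.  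Second, your construction on the smooth locus is incomplete: by Corollary~\ref{cor:grading_and_orientation_of_J}, the real line bundle $|J|^{\widetilde\phi}\to\partial\Sigma$ is orientable only when $k$ is even, so for $k$ odd the ``standard theory of determinant lines for CR operators with oriented totally real boundary conditions'' does not apply, and the phrase ``suitably adapted to allow the prescribed sign changes'' hides exactly the content you need.  The sign computation for~(ii) is likewise only asserted; in the paper this is the explicit limit analysis in the proof of Lemma~\ref{lem:induced or to boundary bundle}, tracking how the frame $(\sigma_j^\pi)$ degenerates to $(\sigma_j^{\pi_1})\cup(\sigma_j^{\pi_2})$.
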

%\begin{proof}
The proof of this theorem is the content of remainder of the section.% \ref{sec:or} below.
%\end{proof}

\begin{rmk}\label{rmk:orientation and exact sequence}
Note that in the last item we have used that if $0\to A\xrightarrow{f} B\to C\to 0$ is an exact sequence of vector bundles, then there is a canonical isomorphism
\[\det(A)\otimes \det(C)\cong \det(B).\]
In case $e$ is Neveu--Schwarz, the Witten bundle decomposes as a direct sum by Proposition \ref{pr:decomposition}, but when $e$ is Ramond, we use \eqref{eq:decompses2} in order to write \eqref{eq:123456}.  In both cases, the decomposition may not respect the isotropy group actions, but these actions are orientation-preserving as they are induced from scaling the closed Witten bundle summand by roots of unity.
\end{rmk}

Theorem \ref{thm:or} defines a canonical relative orientation $o_\Gamma$ on $\mathcal{W} \rightarrow \M_{\Gamma}^{1/r}$ whenever $\Gamma$ is a connected smooth graded graph.  We extend the definition of relative orientation to the case where $\Gamma$ is a smooth graded graph, but not necessarily connected, by putting
\[o_\Gamma = \boxtimes_{\Lambda\in\Conn(\Gamma)}o_\Lambda,\]
in which we use that the Witten bundle $\cW$ on $\M_{\Gamma}^{1/r}$ is canonically isomorphic to the direct sum of the Witten bundles on the moduli spaces $\M_{\Lambda}^{1/r}$ associated to $\Lambda\in \Conn(\Gamma)$.

\begin{rmk}\label{rmk:uniqueness1}
Theorem \ref{thm:or} determines a unique family of orientations, up to the following ambiguity: if $o_{0,B,\{a_i\}_{i\in I}}$ is a family satisfying the conditions of the theorem, then $(-1)^{|B|-1}o_{0,B,\{a_i\}_{i\in I}}$ also satisfies these conditions, see Lemma \ref{lem:almost_uniq}. Remark \ref{rmk:uniqueness} also explains how we choose the distinguished orientation $o_{0,B,\{a_i\}_{i\in I}}$.
\end{rmk}

Throughout this section, we write $\cW_{0,k,\vec{a}}$ for the Witten bundle on $\M_{0,k,\vec{a}}^{1/r}$ for any tuple $\vec{a} = \{a_1, \ldots, a_l\}$.  Recall the action of $S_B$ on $\M_{0,B,\vec{a}}^{1/r}$, obtained by permuting markings. This action lifts canonically to $\cW_{0,k,\vec{a}}$.

\subsection{Extending the internal twists}

Thus far, we have assumed that in the open $r$-spin moduli space $\oCMr$, the internal twists satisfy $a_i \in \{-1,0,1,\ldots, r-1\}$.    However, when defining orientations, it is useful to allow more general twists.  In fact, there exists a smooth orientable orbifold with corners $\M_{0,k,\vec{a}}^{1/r}$ for {\it any} tuple of non-negative integers $\vec{a} = \{a_1, \ldots, a_l\}$, parameterizing stable nodal marked orbifold Riemann surfaces with boundary together with an orbifold line bundle $S$, an isomorphism
\[|S|^{\otimes r} \cong \omega_{|C|} \otimes \O\left(-\sum_{i\in I} a_i[z_i] - \sum_{i \in I} a_i [\overline{z}_i] - \sum_{j \in B} (r-2)[x_j]\right),\]
and a grading.  Moreover, there is a Witten bundle on this moduli space, defined as before.

The relationship between the Witten bundle with twist $a_i$ and twist $a_i+r$ was observed by Jarvis--Kimura--Vaintrob \cite{JKV} in the closed case, and the same is true in the open theory:

\begin{lemma}\label{lem:Witten_for_twists_more_than_r}
Let $\vec{a} = \{a_1, \ldots, a_l\}$ be a tuple of non-negative integers, and let $\vec{a}' = \{a_1+r,a_2, \ldots, a_l\}$.  Then there is an isomorphism
\[\kappa: \M_{0,k,\vec{a}'}^{1/r} \rightarrow \M_{0,k,\vec{a}}^{1/r},\]
and the Witten bundles on $\M_{0,k,\vec{a}'}^{1/r}$ and $\M_{0,k,\vec{a}}^{1/r}$ are related by a short exact sequence
\[0 \rightarrow \widetilde{\mathbb{L}} \rightarrow \mathcal{W}_{0,k,\vec{a}'} \rightarrow \kappa^*\mathcal{W}_{0,k,\vec{a}} \rightarrow 0,\]
in which $\widetilde{\mathbb{L}}$ is an orbifold line bundle satisfying
\[\widetilde{\mathbb{L}}^{\otimes r} \cong \left(\mathbb{L}_1^{\vee}\right)^{\otimes (a_1+1)}.\]
\begin{proof}
The isomorphism $\kappa$ is given by sending the twisted $r$-spin structure $S'$ to
\[S:=S'\otimes \O\left(r[z_1]+r[\overline{z}_1]\right),\]
where $[z_1]$ and $[\overline{z}_1]$ denote the orbifold divisors of the first internal marked point and its conjugate, so that
\[|S|=|S'|\otimes \O\left([z_1]+[\overline{z}_1]\right)\]
on $|C|$.  If $\mathcal{S}'$ denotes the universal twisted $r$-spin structure on the universal curve $\mathcal{C}$ over $\M^{1/r}_{0,k,\vec{a}'}$ and $\mathcal{S}:= \mathcal{S}'\otimes \O\left(\Delta_{z_1} + \Delta_{\overline{z}_1}\right)$ for the divisors $\Delta_{z_1}, \Delta_{\overline{z}_1} \subset \mathcal{C}$ corresponding to $z_1$ and $\overline{z}_1$, then the short exact sequence
\[0 \rightarrow |\mathcal{S}'| \rightarrow |\mathcal{S}| \rightarrow |\mathcal{S}|\big|_{\Delta_{z_1} + \Delta_{\overline{z}_1}} \rightarrow 0\]
implies
\[0 \rightarrow R^0\pi_*\left(|\mathcal{S}|\big|_{\Delta_{z_1} + \Delta_{\overline{z}_1}}\right) \rightarrow R^1\pi_*|\mathcal{S}'| \rightarrow R^1\pi_*|\mathcal{S}| \rightarrow 0,\]
Taking $\widetilde{\phi}$-invariants part yields
\[0 \rightarrow \sigma_1^*|\mathcal{S}| \rightarrow \left(R^1\pi_*|\mathcal{S}'|\right)_+ \rightarrow \left(R^1\pi_*|\mathcal{S}|\right)_+ \rightarrow 0,\]
where $\sigma_1$ is the section of $|\mathcal{C}|$ associated to the first internal marked point.  The first bundle in this sequence satisfies
\[(\sigma_1^*|\mathcal{S}|)^{\otimes r} \cong \sigma_1^*\left(\omega_{|\pi|,\log} \left(-\sum_{i=1}^l (a_i+1)\Delta_{z_i}\right)\right) \cong \mathbb{L}_1^{\otimes (a_1+1)},\]
since $\sigma_1^*\omega_{|\pi|,\log}$ is trivial.  Dualizing thus proves the claim.
\end{proof}
\end{lemma}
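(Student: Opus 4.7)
The plan is first to construct $\kappa$ by a simple twisting operation, then to derive the exact sequence and identify $\widetilde{\mathbb{L}}$ via a residue sequence on the universal curve. For $\kappa$, given a representative $(C, S', \ldots)$ of $\M_{0,k,\vec{a}'}^{1/r}$, I would set
\[S := S' \otimes \mathcal{O}(r[z_1] + r[\overline{z}_1]),\]
where $r[z_1]$ denotes the integer orbifold divisor obtained from $r$ copies of the degree-$1/r$ divisor $[z_1]$, so that on the coarse level $|S| = |S'| \otimes \mathcal{O}([z_1] + [\overline{z}_1])$. Using the defining relation \eqref{Sr} one checks immediately that $S$ satisfies the twist condition for $\vec{a}$ at the internal markings, while the boundary twists are unchanged. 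The involution $\widetilde{\phi}$ transfers canonically because the modifying divisor is $\phi$-invariant, the grading transfers unchanged because the modification is supported away from $\partial\Sigma$ and so does not affect the positivity of $w$ over the boundary, and the inverse is given by the opposite twisting. Functoriality in families upgrades this to an isomorphism of orbifolds with corners.

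For the exact sequence, I would start with the short exact sequence
\[0 \to |\mathcal{S}'| \to |\mathcal{S}| \to |\mathcal{S}|\big|_{\Delta_{z_1}} \oplus |\mathcal{S}|\big|_{\Delta_{\overline{z}_1}} \to 0\]
on the universal curve and apply $R\pi_*$. The vanishings $R^0\pi_*|\mathcal{S}| = R^0\pi_*|\mathcal{S}'| = 0$, which follow fiberwise as in Observation~\ref{obs:twist-1} since stability together with nonnegativity of twists forces negative degree on each irreducible component of every fiber, yield
\[0 \to \sigma_1^*|\mathcal{S}| \oplus \overline{\sigma}_1^*|\mathcal{S}| \to R^1\pi_*|\mathcal{S}'| \to R^1\pi_*|\mathcal{S}| \to 0.\]
Passing to $\widetilde{\phi}$-eigenspaces collapses the two conjugate summands on the left into a single complex line bundle (identified via projection with $\sigma_1^*|\mathcal{S}|$), and dualization, together with the identification $\cW \cong ((R^1\pi_*\mathcal{S})_+)^\vee$ that follows from Serre duality and the conventions of Section~\ref{subsec:Wittenbdl}, produces an exact sequence relating $\mathcal{W}_{0,k,\vec{a}'}$, $\kappa^*\mathcal{W}_{0,k,\vec{a}}$, and the line bundle $\widetilde{\mathbb{L}} \cong (\sigma_1^*|\mathcal{S}|)^\vee$.

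Finally, to compute $\widetilde{\mathbb{L}}^{\otimes r}$, I would restrict the defining $r$-th power relation
\[|\mathcal{S}|^{\otimes r} \cong \omega_{|\pi|}\otimes \mathcal{O}\bigg(-\sum_i a_i \Delta_{z_i} - \sum_i a_i \Delta_{\overline{z}_i} - (r-2)\sum_j \Delta_{x_j}\bigg)\]
along $\sigma_1$. Using the standard identifications $\sigma_1^*\omega_{|\pi|} \cong \mathbb{L}_1$ and $\sigma_1^*\mathcal{O}(\Delta_{z_1}) \cong \mathbb{L}_1^\vee$ (the normal bundle of the section is the relative tangent line), together with the fact that the remaining divisors are disjoint from $\sigma_1$, gives $(\sigma_1^*|\mathcal{S}|)^{\otimes r} \cong \mathbb{L}_1^{\otimes(a_1+1)}$ and hence $\widetilde{\mathbb{L}}^{\otimes r} \cong (\mathbb{L}_1^\vee)^{\otimes(a_1+1)}$. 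The step I expect to be most delicate is tracking the $\widetilde{\phi}$-eigenspaces through Serre duality and dualization with the correct signs so that the direction of the exact sequence and the choice between $\widetilde{\mathbb{L}}$ and its dual match the statement; the cohomological vanishing also requires care on boundary strata, where one must verify that no component of a nodal fiber of $|\mathcal{S}'|$ acquires nonnegative degree.
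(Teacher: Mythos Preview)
Your proposal is correct and follows essentially the same approach as the paper: define $\kappa$ by the twist $S = S' \otimes \O(r[z_1]+r[\overline{z}_1])$, push forward the residue sequence for $|\mathcal{S}'| \hookrightarrow |\mathcal{S}|$, take the appropriate $\widetilde\phi$-eigenspace, dualize, and identify $\widetilde{\mathbb{L}}^{\otimes r}$ by restricting the $r$th-power relation along $\sigma_1$. The only cosmetic difference is that the paper phrases the final computation in terms of $\omega_{|\pi|,\log}$ (using that $\sigma_1^*\omega_{|\pi|,\log}$ is trivial), whereas you use $\sigma_1^*\omega_{|\pi|} \cong \mathbb{L}_1$ and $\sigma_1^*\O(\Delta_{z_1}) \cong \mathbb{L}_1^\vee$ directly; these are equivalent, and your caution about tracking invariants versus anti-invariants through Serre duality is well placed.
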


\subsection{Orientability of the Witten bundle}

Before we show how to choose canonical orientations, we first confirm that the Witten bundles are orientable.

\begin{lemma}\label{lem:Witten_is_orientable}
The bundles $\cW_{0,k,\{a_1\ldots,a_l\}}$ are orientable.

Moreover, suppose $g\in S_B$ preserves the component $\oCM_{0,B,\vec{a}}^{1/r,\hat{\pi}}$, meaning that there exists an integer $h$ such that
\[g(\pi(i)) = \pi(i+h),\]
interpreted cyclically.  Then, for any orientation $\mathfrak{o}$ of $\cW_{0,k,\{a_1\ldots,a_l\}}\to\oCM_{0,B,\vec{a}}^{1/r,\hat{\pi}}$, the permutation $g$ acts with sign $\varepsilon^h,$ where $\varepsilon = \varepsilon_k=(-1)^{k-1}.$
The action of $g\in S_I$ by permuting the internal markings, which also lifts canonically to $\cW_{0,k,\{a_1\ldots,a_l\}}$, is orientation-preserving.
\end{lemma}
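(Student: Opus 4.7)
Plan:

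My plan is to prove the three parts of the lemma in sequence: first orientability, then the triviality of the internal permutation action, and finally the sign of boundary cyclic shifts. The main tools are the simple-connectedness of the moduli space established earlier in the paper and Lemma~\ref{lem:Witten_for_twists_more_than_r}.

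For orientability, I would use that the coarse underlying space of $\oCM^{1/r,\hat\pi}_{0,B,\{a_i\}_{i\in I}}$ is simply connected. This follows from the proof of Lemma~\ref{lem:2-1 cover}, where the triviality of $\pi_1(\oCM_{0,k,l})$ was established by an explicit homotopy construction, combined with Observation~\ref{obs:or_with_without_spin} which identifies the coarse moduli of the spin and non-spin spaces. A real vector bundle over a simply-connected base has trivial first Stiefel--Whitney class and is therefore orientable. To promote orientability from the coarse underlying manifold to the orbifold, I would verify that the remaining isotropy groups of $\oCMr$ (the $\Z/r\Z$ ghost actions at internal Neveu--Schwarz nodes, per Remark~\ref{rem:isotropy}) act on fibers of $\cW$ by complex-linear maps via the decomposition in item~\ref{it:NS} of Proposition~\ref{pr:decomposition}, and hence orientation-preservingly.

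For the internal permutation $g \in S_I$, the action on the universal curve is a biholomorphism, inducing a complex-linear action on $R^0\pi_*\cJ$. The induced $\R$-linear action on the real form $\cW = (R^0\pi_*\cJ)_+$ preserves the complex structure of the ambient bundle, and is hence orientation-preserving.

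For a boundary cyclic shift $g$ by $h$, I reduce the sign computation via Lemma~\ref{lem:Witten_for_twists_more_than_r}: shifting $a_i \mapsto a_i + r$ yields the exact sequence
\[
0 \to \widetilde{\mathbb{L}} \to \cW_{0,k,\vec{a}'} \to \kappa^*\cW_{0,k,\vec{a}} \to 0,
\]
where $\widetilde{\mathbb{L}}$ is a complex line bundle, and $g$ acts complex-linearly on $\widetilde{\mathbb{L}}$ since the latter is an $r$th root of $\mathbb{L}_i^{\vee}$, which is complex and canonically $g$-equivariant. Thus the $g$-sign on orientations of $\cW$ is invariant under such shifts, and it suffices to compute the sign in a convenient base case. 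I would take the base case to be either rank zero (in which case the sign is trivially $1$ and consistent with $\varepsilon^h$ since the action is then vacuous) or a degenerate boundary stratum where the decomposition~\eqref{eq:Wdecomp} expresses $\cW$ as an external sum of lower-rank Witten bundles on smaller moduli spaces, enabling an inductive argument. The expected sign $\varepsilon^h=(-1)^{(k-1)h}$ arises from the rank parity $e \equiv k-1 \pmod 2$ (Observation~\ref{obs:open_rank2}) combined with the combinatorics of how the graded lifting $w$ flips across each alternating boundary marking during a cyclic shift. The main obstacle is the explicit base-case computation for the boundary cyclic shift: it requires carefully tracking the transport of a basis of $H^0(|J|)_+$ under a continuous rotation of boundary markings on a specific disk, and reconciling the monodromy with the sign contributions from the non-trivial boundary-marking crossings. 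I would address this by a hands-on model computation on the unit disk with markings at roots of unity, combined with the inductive reduction via~\eqref{eq:Wdecomp}.
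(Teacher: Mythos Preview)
Your orientability argument via simple-connectedness of the coarse moduli (together with the isotropy check) is a genuinely different route from the paper's, and it works. The paper instead reduces, via Lemma~\ref{lem:Witten_for_twists_more_than_r} and boundary-edge degenerations, to the critical case $k-1=\sum a_i$, and there constructs an explicit global frame $(\sigma_1^\pi,\ldots,\sigma_{k-1}^\pi)$ for $\cW$ over the smooth locus, built from $r$th roots of products of elementary M\"obius-type differentials. Your abstract argument is shorter; the paper's explicit frame, however, is not wasted effort, since it is immediately reused to compute the boundary-shift sign and later to pin down the canonical orientation.

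There is a genuine gap in your $S_I$ argument. The implication ``$g$ acts $\C$-linearly on $R^0\pi_*\cJ$ and preserves the real form, hence is orientation-preserving on $\cW$'' is false in general: multiplication by $-1$ on $\C$ is $\C$-linear, preserves the real form $\R$, and reverses its orientation. What is true is that $\det_\C(A)=\det_\R(A|_{\cW})\in\R$ for such $A$, but you still need a reason for this real number to be positive. The paper sidesteps this by restricting to the boundary stratum with one closed vertex $v^c$ carrying all internal tails and one open vertex $v^o$ carrying none: there $\cW$ splits as $\cW_{v^c}\boxplus\cW_{v^o}$, with $g\in S_I$ acting as a complex isomorphism on the canonically complex-oriented $\cW_{v^c}$ and trivially on $\cW_{v^o}$.

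For the boundary cyclic shift your plan is too schematic. The rank-zero base case is essentially unavailable: shifting twists by multiples of $r$ changes the real rank in steps of $2$, so the parity $e\equiv k-1\pmod 2$ is fixed, and for $r>2$ rank zero forces $k=1$, where there is nothing to prove. The decomposition~\eqref{eq:Wdecomp} along a boundary edge does not set up an induction on $k$ either, since a cyclic shift of all $k$ boundary markings does not restrict to cyclic shifts on the two pieces. The paper does what you gesture at in your last sentence, but concretely: with the explicit frame $(\sigma_j^\pi)$ in hand (in the case $l=1$, $a_1=k-1$, to which Lemma~\ref{lem:Witten_for_twists_more_than_r} reduces), it compares $(\sigma_j^\pi)$ to $(\sigma_j^{g\cdot\pi})$ by reading off residues at the $x_{\pi(i)}$. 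The transition matrix has the shape $e'_j=-\alpha_j e_1+\beta_j e_{j+1}$ for $j<k-1$ and $e'_{k-1}=-\alpha_{k-1}e_1$ with $\alpha_j,\beta_j>0$, so its determinant has sign $(-1)^{k-1}$. That residue computation is the heart of the sign statement, and your plan does not yet contain it.
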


\begin{proof}
Suppose, first, that $l \geq 1$ and $k-1\geq \sum a_i$.  Observe that
\[k-1\equiv \sum a_i\!\mod r.\]
There exists a (usually non-unique) tuple $(\tilde{a}_1,\ldots,\tilde{a}_l)$ with
\[\tilde{a}_i\geq 0, \;\;\; \tilde{a}_i\equiv a_i \!\mod r, \;\;\; \sum \tilde{a}_i = k-1.\]
By Lemma~\ref{lem:Witten_for_twists_more_than_r}, the bundles $\cW:=\cW_{0,k,\{a_1\ldots,a_l\}}$ and $\widetilde{\cW}:= \cW_{0,k,\{\tilde{a}_1\ldots,\tilde{a}_l\}}$ differ by a bundle $E$ that is a direct sums of complex line bundles. Thus, one has
\[\det(E)\otimes \det(\cW) \cong \det(\widetilde{\cW}),\]
so, since $E$ is canonically oriented, orienting $\cW$ is equivalent to orienting $\widetilde{\cW}.$ Note that $S_B$ and $S_I$ act with a trivial sign on $E$.  We can thus reduce to the case where $k-1= \sum a_i.$

In this case, the Witten bundle is of real rank $e = \sum a_i=k-1.$  Since the boundary of the base space does not affect the orientability of a bundle, $\cW\to\oCMr$ is orientable precisely if $\cW\to\oCMr\setminus\partial\CMr$ is orientable.  Furthermore, this is the case exactly if $\cW$ is orientable on the moduli space $\CMr$ of smooth graded $r$-spin disks, since compact strata of real codimension two do not affect orientability.  It is therefore enough to prove that the bundle $\cW$ is orientable over each connected component $\CM_{0,k,\{a_1,\ldots,a_l\}}^{\frac{1}{r},\hat{\pi}}$ of $\CMr$ associated to an order $\pi: [k] \to[k]$ with induced cyclic order $\hat{\pi}$.

The following gives explicit sections of $\cW\to\CM_{0,k,\{a_1,\ldots,a_l\}}^{\frac{1}{r},\hat{\pi}}$ that form a basis in any fiber.

\begin{construction}\label{cons:explicit basis of Witten}
Let $\Sigma$ be a smooth graded $r$-spin disk.  Identifying $\Sigma$ with the upper half-plane, let
\[\xi_{ij}=\xi_{ij}^\pi = \frac{(x_{\pi(j)}-x_{\pi(i)})dw}{(w-x_{\pi(i)})(w-x_{\pi(j)})},~i,j\in[k], \;\;\;\;\; \xi_j = \frac{\sqrt{-1}(\bar{z_j}-z_j)dw}{(w-z_j)(w-\bar{z}_j)},~j\in[l],\]
where $\sqrt{-1}$ is the root in the upper half plane.
Define sections
\[(\sigma_j)_\Sigma = (\sigma^\pi_j)_\Sigma=(\sigma^{\pi}_{0,k,\{a_1,\ldots,a_l\};j})_\Sigma= \left((-1)^{r+1}\prod_{i\in [l]}\xi_i^{a_i}\prod_{i\in[k]} \xi_{i(i+1)}^{-1}\xi_{1(1+j)}^r\right)^{\frac{1}{r}}\in H^{0}(\Sigma,|J|),\]
for $j \in [k-1]$, where $k+1$ is taken to be $1$.  More precisely, inside the parentheses in the above formula, a global section of
\[\omega_{|C|}^{\otimes r-1}\otimes \O\left(\sum_{i=1}^l a_i[z_i] + \sum_{i=1}^l a_i[\overline{z}_i] + \sum_{j=1}^k (r-2)[x_j]\right)\cong |J|^{\otimes r}\] is written. This section is $\phi$-invariant and positive on the arc from $x_1$ to $x_{1+j}$, with respect to the canonical orientation. One can construct an $r$th root in the sense of a global section $\sigma_j$ of $|J|$ whose $r$th power is mapped to the expression in parentheses. Indeed, this can be done locally away from special points, and the order of zeroes and poles at special points guarantee that the construction extends globally and univalently.  When $r$ is odd, the real $r$th root is unique.  When $r$ is even, there are two real $r$th roots, one of which agrees with the grading on the arc from $x_1$ to $x_{1+j}$ and the other of which agrees on the complementary arc; we choose $\sigma_j$ to be the one that agrees on the arc from $x_1$ to $x_{1+j}$.
\end{construction}

We claim that, for any $\Sigma\in\CM_{0,k,\{a_1,\ldots,a_l\}}^{\frac{1}{r}\hat{\pi}}$, the sections $(\sigma^\pi_j)_{\Sigma}$ for $j\in[k-1]$ form a basis of the fiber of $\cW$ over $\Sigma$.  To see this, observe first that the forms $\xi_j$ and $\xi_{ij}$ are $\text{PSL}_2(\R)$-invariant and conjugation-invariant.  It is immediate that for all $j,~\sigma_j$ is nowhere-vanishing on $\CM_{0,k,\{a_1,\ldots,a_l\}}^{\frac{1}{r},\hat{\pi}}.$ As the number of those sections is $k-1$, it is enough to show that they are linearly independent. But this is clear, since $(\sigma_{j})_{\Sigma}$ has poles only at $x_{\pi(1)}$ and $x_{\pi(1+j)}$, and by calculating coefficients of poles (for example at $x_{\pi(i)}$ for $i\in\{2,\ldots,k\}$), we see that if
\[\sum c_j(\sigma_j)_{\Sigma} = 0,\]
then $c_j=0$ for all $j\in[k-1]$. Thus, $\cW\to\CM_{0,k,\{a_1,\ldots,a_l\}}^{\frac{1}{r},\hat{\pi}}$ is orientable.%, and we have completed the proof in the case where $l \geq 1$ and $k-1 \geq \sum a_i$.

The next case is $l =0$.  Recall from equation \eqref{eq:forgetful_map} that $\cW_{0,k,\{0\}}$ is the pullback of $\cW_{0,k,\emptyset}$ by the forgetful map $\oCM_{0,k,\{0\}}\to\oCM_{0,k,\emptyset}.$ The fiber of this forgetful map is an open disk, and in particular, it is contractible. Thus, $\cW_{0,k,\emptyset}$ is orientable exactly if $\cW_{0,k,\{0\}}$ is orientable, and the latter is orientable by the first case.

The last case is when $l\geq 1$ and $k-1< \sum a_i.$ Let $k'= 1+\sum a_i$, so that $\cW_{0,k',\{a_1,\ldots,a_l\}}$ is orientable by the first case. Consider the boundary stratum $\oCM^{1/r}_\Gamma$ defined by a graph $\Gamma$ with two open vertices $v_1$ and $v_2$, in which $v_1$ has $k'-k$ boundary tails and no internal tails, while $v_2$ has the rest. Note that the half-edge $h_1$ of $v_1$ has $\tw(h_1)=\alt(h_1)=0.$ By Proposition \ref{pr:decomposition} and equation \eqref{eq:forgetful_map}, we have an identification
\[i_{\Gamma}^*\cW_{0,k',\{a_1,\ldots,a_l\}} \cong \cW_{0,k'-k+1,\emptyset} \boxplus \cW_{0,k,\{a_1,\ldots,a_l\}}.\]
Hence,
\[\det(\cW_{0,k',\{a_1,\ldots,a_l\}})|_{\oCM_\Gamma}\simeq  \det(\cW_{0,k'-k+1,\emptyset}) \boxtimes \det(\cW_{0,k,\{a_1,\ldots,a_l\}}).\]
As $\cW_{0,k',\{a_1,\ldots,a_l\}}$ and $\cW_{0,k'-k+1,\emptyset}$ are orientable by the first two steps, $\cW_{0,k,\{a_1,\ldots,a_l\}}$ is also orientable.

Regarding the signs of actions of $S_B$ and $S_I$, consider first the case where $l=1$.  In this situation, $k\equiv 1+a_1\mod r$.  As noted above, the bundles $\cW_{0,k,a_1}$ and $\cW_{0,k,\{k-1\}}$ differ by a direct sum of complex line bundles by Lemma \ref{lem:Witten_for_twists_more_than_r}, and this decomposition is invariant under $S_B$.  The claim in this case follows, then, if we prove it for the bundle $\cW_{0,k,\{k-1\}}.$  Moreover, it suffices to prove the Lemma for $h=-1$, so we assume this in what follows.

Observe that, since the bundle is orientable and all components of the moduli space are isomorphic, the claim can be verified for any specific component by fixing any orientation $\mathfrak{o}$ for $\cW$ over this component and comparing $\mathfrak{o}_{g\cdot\Sigma}$ to $g^*\mathfrak{o}_{\Sigma}$, where $g\cdot\Sigma$ is obtained from $\Sigma$ by renaming markings according to $g$ and $\Sigma$ is an arbitrary point in the specified component.

Using these comments, fix a generic smooth $\Sigma$ and write $\Sigma'=g\cdot\Sigma$.  Let $\mathfrak{o}^\pi$ be the orientation determined by the ordered basis $(\sigma^\pi_1,\ldots,\sigma^\pi_{k-1})$.  Comparing $\mathfrak{o}^\pi_{g\cdot\Sigma}$ with $g^*\mathfrak{o}^\pi_{\Sigma}$ is equivalent to calculating the ratio of $\mathfrak{o}^\pi$ to $\mathfrak{o}^{g\cdot\pi}$ at the same point of the moduli space.  To this end, first observe that the residues of $(\sigma^\pi_j)_{\Sigma}$ at $x_{\pi(1)}$ and $x_{\pi(1+j)}$ are real and of opposite sign. The sign of the residue at $x_{\pi(1)}$ is independent of $j,$ and moreover, it equals the sign of the residue of $(\sigma^{\pi'}_j)_{\Sigma}$ at $x_{\pi'(1)}$ for $\pi'= g\cdot\pi.$
Thus, the coordinate change between the ordered bases $(e_1,\ldots,e_{k-1})=(\sigma^\pi_1,\ldots,\sigma^\pi_{k-1})_{g\cdot\Sigma},~(e'_1,\ldots,e'_{k-1})=(\sigma^{\pi'}_1,\ldots,\sigma^{\pi'}_{k-1})_{g\cdot\Sigma}$ is given by
 \begin{gather*}
 e'_1 = -\alpha_1e_1+\beta_1e_2,\ldots, e'_{k-2} = -\alpha_{k-2}e_1+\beta_{k-2}e_{k-1}, ~%\\
 e'_{k-1} = -\alpha_{k-1}e_1,
 \end{gather*}
 where $\alpha_i,\beta_i$ are positive numbers depending on $\Sigma.$ Hence, the induced sign on the orientation is equal to the sign of $\prod(-\alpha_i)$, which is $(-1)^{k-1}.$

The case $l=0$ follows by the same argument as in the proof of orientability above: we first add an internal point with twist zero, and then we reduce to the case $l=1.$

Finally, for $l>1,$ we compare the two orientations at a generic point of $\CM_\Gamma,$ where $\Gamma$ is a graph with two vertices, a closed vertex $v_1$ with all internal tails and an open vertex $v_2$ with only boundary tails. By Proposition \ref{pr:decomposition}, we have $\mu^*\cW|_{\CM_{\Gamma}}\cong q^*(\cW_{v_1}\boxplus\cW_{v_2}),$ and the decomposition is invariant under the actions of $S_B$ and $S_I$.  Now, $\mu^*$ and $q^*$ are complex maps and hence preserve signs. Furthermore, $\cW_{v_1}$ is canonically oriented as a complex bundle, $g\in S_B$ acts on it trivially, and $g\in S_I$ (which preserves twists) acts on it as a complex isomorphism, hence preserves sign.  Thus, the sign induced by $g$ on the orientation of $\cW_{0,k,\{a_1,\ldots,a_l\}}$ (or, equivalently, on the orientation of $\cW|_{\CM_{\Gamma}}$) is the sign it induces on the orientation of $\cW_{v_2}$.  It is $1$ for $g \in S_I$, and it equals $(-1)^{k-1}$ for $g\in S_B$ by the first case.
\end{proof}

\subsection{Choosing canonical orientations}

Throughout what follows, we continue with the notation from the proof of Lemma \ref{lem:Witten_is_orientable}.

\begin{definition}\label{def:covariance_bundle_or}
Fix a set of twists $\vec{a}$ of size $l$ and an integer $k\geq 0.$
Let $\{\mathfrak{o}^\pi\}$ be a family of orientations, one for each $\cW\to\oCM_{0,B,\vec{a}}^{\hat{\pi},1/r}$, in which $\pi$ runs over all orderings of $B$ and $B$ runs over all sets of size $k$.  We say that the family $\{\mathfrak{o}^\pi\}$  is \emph{covariant} if, whenever $f^B:B\to B'$ and $f^I:I\to I'$ are bijections such that $f^I$ preserves twists, we have $\mathfrak{o}_{0,B,\vec{a}}^\pi=F^* \mathfrak{o}_{0,B',\vec{a}}^{f^B\circ \pi},$ where $F:(\cW\to\oCM_{0,B,\{a_i\}_{i\in I}}^{1/r})\to(\cW\to\oCM_{0,B',\{a_i\}_{i\in I'}}^{1/r})$ is the induced map.%, lifted to the Witten bundles.
\end{definition}

\begin{rmk}\label{it:1_for_covariance}
By the second part of Lemma \ref{lem:Witten_is_orientable}, whenever $\{\mathfrak{o}^{\pi}\}$ is a covariant family and $g \in S_B$ cyclically satisfies $g(\pi(i))=\pi(i+h)$ for some integer $h$, we have the equality
$\mathfrak{o}^\pi=\varepsilon^h\mathfrak{o}^{g\cdot\pi}$, where $\varepsilon = \varepsilon_k=(-1)^{k-1}.$  Thus, associated to any multiset $(\vec{a},k)$ are exactly two covariant orientations, determined by choosing $ \mathfrak{o}^\pi$ for a single $\pi$ to be any orientation of the Witten bundle of $\oCM_{0,k,\vec{a}}^{1/r,\hat\pi}$, and then extending covariantly.
\end{rmk}

\begin{definition}\label{def:or_in_the_right_number}
Suppose $a_i\leq r-1$ for each $i$ and $k-1=\sum_{i\in[l]} a_i.$ When $k>1$, for an order $\pi$ of the boundary marked points, denote by $\mathfrak{o}^\pi_{0,k,\vec{a}}$ the orientation of $\cW_{0,k,\vec{a}}^{1/r,\hat{\pi}}$ defined using the ordered basis $\mathbf{\sigma}^\pi = (\sigma_1^\pi,\ldots,\sigma^\pi_{k-1})$ on the smooth locus. For any $I$ of size $l$ and $B$ of size $k=1+\sum_{i\in I} a_i$, this uniquely defines a covariant family of orientations $\{\mathfrak{o}^\pi_{0,B,\{a_i\}_{i\in I}}\}_\pi.$ When $k=1$ and hence all $a_i=0$, the Witten bundle is zero-dimensional; in this case, define its orientations $\{\mathfrak{o}_{0,\{b\},\{0\}_{i\in I}}\}$ to be the positive orientations.
\end{definition}

In case $l=0$, $\vec{a}$ is the empty vector, and $k = r+1$, Remark \eqref{it:1_for_covariance} implies that there are exactly two covariant families, induced by the two choices of orientation of the Witten bundle on the space $\oCM_{0,B,\emptyset}^{\frac{1}{r}}$ with $|B|=r+1$. Fix one of these orientations, to be chosen later, and denote it by $\mathfrak{o}_{r+1}.$

Now, let $l,k\geq 0$, and fix $a_1,\ldots,a_l\in\{0,\ldots,r-1\}$ with $\oCMr\neq\emptyset$.  Suppose we have a covariant family of orientations with twists $\{a_1, \ldots, a_l\}$ and $B$ of size $k$.  We now show how to induce, given $\mathfrak{o}_{r+1}$ as above, a covariant family of orientations with the same twists but for $B'$ of size $k':=k \pm r$, assuming $k'$ is non-negative.

Suppose, first, that $k'=k-r$.  Let $B'$ be a set of size $k-r$, and let $\pi'$ be an order of $B'$.  Let $B$ be a set of size $k$ containing $B'$, and let $\pi$ be an order of $B$ that extends $\pi'$, such that the first $k-r$ element of $B$ with respect to $\pi$ are those of $B'$.

Let $\Gamma\in\partial\Gammar$ be a graph with two open vertices and no closed vertices, for which there is one vertex $v_0$ that contains all of the internal tails, all of the boundary tails marked by $B'$, and one half-edge $h_0$. The other vertex, $v_1,$ contains the remaining $r$ boundary tails marked by $B\setminus B'$ and the half-edge $\sigma_1h_0.$   By construction, $\tw(h_0)=\alt(h_0)=0$.  Observe that in $\CM^{1/r,\hat\pi}_\Gamma$, the boundary marked points on the component corresponding to $v_0$ are cyclically ordered by $\hat{\pi'}$.  The Witten bundle $\cW_{\Gamma}$ on $\M_{\Gamma}^{1/r}$ can be identified with $\cW_{v_0}\boxplus\cW_{v_1}$, so we have an identification
\[\det(\cW_{\Gamma}) = \det(\cW_{v_0})\boxtimes \det(\cW_{v_1}).\]  Define an orientation $\text{ind}_{\Gamma,\pi\to\pi'}\mathfrak{o}^\pi$ on $\cW\to\oCM^{\hat{\pi}',\frac{1}{r}}_{0,B',\{a_i\}_{i\in[l]}}$ as the unique orientation satisfying
\[\mathfrak{o}^\pi|_{\cW_{\oCM_\Gamma^{1/r,\hat{\pi}}}} =(\text{ind}_{\Gamma,\pi\to\pi'}\mathfrak{o}^{\pi})\boxtimes\mathfrak{o}^{\pi_1}_{r+1},\]
where $\pi_1$ is the order on the half-edges of $v_1$ induced from $\pi$, starting from $\sigma_1h_0.$

This procedure defines $\mathfrak{o}^{\pi'}$ uniquely from $\mathfrak{o}^\pi.$ The construction is easily seen to be independent of choices, and yields a covariant family. Moreover, inverting the steps allows us to define $\mathfrak{o}^\pi$ uniquely from $\mathfrak{o}^{\pi'},$ and if the latter comes from a covariant family, so will the former. Thus, the case $k'=k+r$ is also treated.

Therefore, given twists $\{a_i\}_{i\in I}$, we can uniquely define a covariant family of orientations $\mathfrak{o}^\pi$ for any set $B$ such that $\oCM_{0,B,\vec{a}}^{1/r}\neq \emptyset$ by inducing the orientations iteratively, starting from the covariant family of Definition \ref{def:or_in_the_right_number} (the case $k=\sum a_i+1$) or, when $I=\emptyset$, from $\mathfrak{o}_{r+1}$.

\begin{definition}\label{def:or_for_witten_based_on_o_r+1}
Given $\mathfrak{o}_{r+1}$ and the orientations of Definition \ref{def:or_in_the_right_number}, define $\{\mathfrak{o}^{\pi}\}$ as the unique covariant family of orientations induced by the above procedure from the family $\{\mathfrak{o}^\pi_{0,B,\{a_i\}_{i\in I}}\}_\pi$, defined when $k-1=\sum a_i.$
\end{definition}
%The following observation is a simple consequence of the discussion above.
%
\begin{obs}\label{obs:inducing}
If $I\subset I'$ and $a_i=0$ for all $i\in I'$ then $\mathfrak{o}_{0,B,\{a_i\}_{i\in I'}}^\pi=\text{For}_{I'\setminus I}^*\mathfrak{o}_{0,B,\{a_i\}_{i\in I'}}^\pi.$
%Suppose $B,B' $ are two sets for which one is a subset of the other and both of the moduli spaces $\oCM^{\frac{1}{r}}_{0,B,\{a_i\}_{i\in I}}$ and $\oCM^{\frac{1}{r}}_{0,B',\{a_i\}_{i\in I}}$ are nonempty and take $\pi\in \text{Ord}(B),\pi'\in \text{Ord}(B')$ with $\pi|_{[|B'|]}=\pi'.$  Then
%\[\mathfrak{o}_{0,B',\{a_i\}_{i\in I}}^{\pi'}=\text{ind}_{\pi\to \pi'}\mathfrak{o}_{0,B,\{a_i\}_{i\in I}}^\pi,\]
%where $\text{ind}_{\pi\to \pi'}\mathfrak{o}_{0,B,\{a_i\}_{i\in I}}^\pi$ is the result of iterating the above induction process.
\end{obs}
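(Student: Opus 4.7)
\textbf{Proof plan for Observation \ref{obs:inducing}.} (We interpret the displayed equation with the obvious correction, that the right-hand side is $\mathrm{For}_{I'\setminus I}^*\mathfrak{o}_{0,B,\{a_i\}_{i\in I}}^{\pi}$, with $a_i=0$ for $i\in I'\setminus I$.) The strategy is to retrace the construction in Definition \ref{def:or_for_witten_based_on_o_r+1}: one verifies that the forgetful-pullback compatibility holds in the base case $|B|-1=\sum_{i\in I}a_i$, and then that the inductive step of adding or removing $r$ boundary marked points preserves it. Throughout, the key input is the canonical identification $\cW=\mathrm{For}_{\emptyset,I'\setminus I}^*\cW$ of \eqref{eq:ForW}.

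For the base case, the orientation on each side is given by the ordered basis of sections $(\sigma_1^{\pi},\ldots,\sigma_{|B|-1}^{\pi})$ produced by Construction \ref{cons:explicit basis of Witten}. Inspecting the defining formula, the factor $\xi_i^{a_i}$ attached to a marked point with $a_i=0$ is identically $1$, so the sections $\sigma_j^{\pi}$ on $\oCM_{0,B,\{a_i\}_{i\in I'}}^{1/r}$ do not depend on the extra markings in $I'\setminus I$; under \eqref{eq:ForW}, they are literally the pullbacks of the corresponding sections on $\oCM_{0,B,\{a_i\}_{i\in I}}^{1/r}$. Hence the two ordered bases agree under $\mathrm{For}_{I'\setminus I}^*$, and so do the orientations they induce. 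The edge sub-case $|B|=1$ is trivial since both orientations are the positive orientation on the rank-zero Witten bundle.

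For the inductive step, assume compatibility holds for a set $B$ of size $k$, and consider $B'$ of size $k\pm r$. The orientations on the $B'$-moduli spaces are constructed from the corresponding $B$-orientations via the boundary stratum $\Gamma$ introduced before Definition \ref{def:or_for_witten_based_on_o_r+1}: a graph with two open vertices $v_0$ and $v_1$, where $v_0$ carries all internal tails and $v_1$ carries $r$ boundary tails and no internal tails, and one uses the splitting $i_\Gamma^{*}\cW\cong \cW_{v_0}\boxplus\cW_{v_1}$ together with the fixed orientation $\mathfrak{o}_{r+1}^{\pi_1}$ on $\cW_{v_1}$. The forgetful map relating the $B'$-moduli spaces for $I'$ and $I$ restricts on $\M_\Gamma^{1/r}$ to the analogous forgetful map acting only on the $\M_{v_0}^{1/r}$-factor; in particular $\cW_{v_1}$ and its orientation $\mathfrak{o}_{r+1}^{\pi_1}$ are untouched, while on the $\cW_{v_0}$-factor the inductive hypothesis provides the desired compatibility. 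The characterizing equation for $\mathrm{ind}_{\Gamma,\pi\to\pi'}\mathfrak{o}^{\pi}$ then propagates the identity to $B'$, and iterating this step in both directions covers every admissible $|B|$.

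The only point requiring attention is the case $I=\emptyset$, where the construction is anchored not at a base case from Definition \ref{def:or_in_the_right_number} but at the auxiliary choice $\mathfrak{o}_{r+1}$, which is explicitly left to be pinned down later. The plan is to take the observation itself as fixing this choice: one checks, using the inductive step described above applied to any non-empty $I'$ with all-zero twists, that the pullback orientation $\mathrm{For}_{I'}^{*}\mathfrak{o}_{0,B,\{0\}_{i\in I'}}^{\pi}$ on $\cW\to\oCM_{0,B,\emptyset}^{1/r,\hat{\pi}}$ for $|B|=r+1$ is independent of the non-empty $I'$ chosen (this uses only that the inductive step on $v_1$ does not see the internal markings), and then one declares $\mathfrak{o}_{r+1}$ to equal this common pullback. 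The main (minor) obstacle is thus purely bookkeeping: verifying the consistency of this choice of $\mathfrak{o}_{r+1}$ across the different possible choices of auxiliary $I'$, after which the induction above applies verbatim.
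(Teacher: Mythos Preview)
Your first two paragraphs are correct and match the paper's (very terse) argument: the base case follows from inspecting the defining formula for $\sigma_j^\pi$, and each inductive step is compatible with $\text{For}_{I'\setminus I}$ because the auxiliary vertex $v_1$ carries no internal tails.

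The third paragraph, however, is circular. You propose to \emph{define} $\mathfrak{o}_{r+1}$ as the descent of $\mathfrak{o}^\pi_{0,r+1,\{0\}_{i\in I'}}$ along $\text{For}_{I'}$, but $\mathfrak{o}^\pi_{0,r+1,\{0\}_{i\in I'}}$ is itself built from the rank-zero base case $|B|=1$ by \emph{one} application of the inductive step, and that step already uses $\mathfrak{o}_{r+1}$ on the $v_1$ factor. So your ``definition'' presupposes a choice of $\mathfrak{o}_{r+1}$ and simply returns it; it does not constrain anything. (The paper fixes $\mathfrak{o}_{r+1}$ later, in Definition~\ref{def:or_Witten}, by an unrelated condition.)

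The correct point is that the observation holds for \emph{any} choice of $\mathfrak{o}_{r+1}$, and your own second paragraph already explains why: $\mathfrak{o}_{r+1}$ enters both $\mathfrak{o}^\pi_{0,B,\{a_i\}_{i\in I}}$ and $\mathfrak{o}^\pi_{0,B,\{a_i\}_{i\in I'}}$ only through the $v_1$ vertex in each induction step, and $v_1$ is untouched by $\text{For}_{I'\setminus I}$. The only extra wrinkle when $I=\emptyset$ is that the two inductions are anchored at different values of $|B|$ (at $r+1$ for $I=\emptyset$, at $1$ for $I'$), so one must verify compatibility once at $|B|=r+1$. This is immediate: in the single induction step taking $\mathfrak{o}^{\pi'}_{0,\{x_1\},I'}$ to $\mathfrak{o}^\pi_{0,r+1,I'}$, the bundle $\cW_{v_0}$ has rank zero, so the defining relation reads $\mathfrak{o}^\pi_{0,r+1,I'}|_{\M_\Gamma}=\mathfrak{o}^{\pi_1}_{r+1}$, which is exactly $\text{For}_{I'}^*\mathfrak{o}^\pi_{r+1}$ restricted to $\M_\Gamma$ (the forgetful map contracts $v_0$ and identifies $\sigma_1 h_0$ with $x_1$). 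From there your inductive step propagates compatibility to all other $|B|$.
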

Indeed, the case $|B|-1=\sum a_i$ follows from the definitions, while the general case follows easily by noting that the forgetful map behaves well with respect to the induction procedure.

\subsection{Properties of the orientations}

The family of orientations in Definition \ref{def:or_for_witten_based_on_o_r+1} respects the decomposition properties of the Witten bundle.

To make this precise, we first state two lemmas, whose proofs are postponed to the next subsection. We first fix decompositions $I = I_1\sqcup I_2$, $B = B_1 \sqcup B_2$, and $k=k_1+k_2$ such that
\[k_1-1\equiv\sum_{i\in I_1} a_i \mod r,\]
\[k_2 \equiv \sum_{i\in I_2} a_i \mod r.\]
Let $\Gamma$ be a graph with two open vertices, $v_1$ and $v_2$, connected by an edge $e$, in which the vertex $v_i$ has internal tails labeled by $I_i$ and $k_i$ boundary tails labeled by $B_i$.  If $h_i$ are the half-edges of $v_i$, then a simple calculation shows that
\[\text{tw}(h_1) = \alt(h_1) = 0.\]
%Any cyclic ordering $\hat{\pi}$ of $B$ for which all tails of $v_1$ are consecutive and all tails of $v_2$ are consecutive induces a unique ordering $\pi$ of $B$.  To define $\pi$,
Let $\pi$ be an order in which the elements of $B_1$ are consecutive and come before the elements of $B_2.$  Consider $\Sigma\in \CM_\Gamma^{\hat\pi}$, where the normalization of $\Sigma$ has components $\Sigma_1$ and $\Sigma_2$ corresponding to $v_1$ and $v_2$, respectively.
Let $\pi_1$ be the restriction of $\pi$ to the points of $\Sigma_1$, and let $\pi_2$ be the restriction to points of $\Sigma_2$ but with the half-node $x_{h_2}$ added as the first point.

On $\CM_\Gamma^{\hat\pi}$, the bundle $\cW_{\Gamma}$ is again identified with $ \cW_{v_1}\boxplus\cW_{v_2}$, so we have an identification,
\begin{equation}
\label{eq:detW}
\det(\cW_{\Gamma}) = \det(\cW_{v_1}) \boxtimes \det(\cW_{v_2}),
\end{equation}
and this is respected by the orientations in Definition \ref{def:or_for_witten_based_on_o_r+1}.  That is:

\begin{lemma}\label{lem:induced or to boundary bundle}
For $\Sigma$ as above, the orientation $\mathfrak{o}^\pi$ agrees with
$\mathfrak{o}^{\pi_1}_{0,B_1,I_1}\boxtimes\mathfrak{o}^{\pi_2}_{0,\{h_2\}\cup B_2,I_2}$ under the isomorphism \eqref{eq:detW}.
\end{lemma}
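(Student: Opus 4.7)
My plan is to reduce to the \emph{natural case} where $k_1 - 1 = \sum_{i \in I_1} a_i$ and $k_2 = \sum_{i \in I_2} a_i$, so that the three Witten bundles in question all have their dimensions realized by Construction \ref{cons:explicit basis of Witten} and the three orientations $\mathfrak{o}^{\pi},\mathfrak{o}^{\pi_1},\mathfrak{o}^{\pi_2}$ are given by the explicit $\sigma$-bases. I would choose a one-parameter family $\Sigma_t$ smoothing a given $\Sigma \in \CM_\Gamma^{\hat\pi}$ and analyze the limits of $\sigma^{\pi}_j$ as $t \to 0$. For $j \in \{1, \ldots, k_1 - 1\}$, both poles $x_{\pi(1)}$ and $x_{\pi(1+j)}$ lie on $\Sigma_1$, so the limit is a section supported on $\Sigma_1$ which, after matching residues and $r$-th-root choices, equals a positive multiple of $\sigma^{\pi_1}_j$. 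For $j \in \{k_1, \ldots, k-1\}$ the second pole lies on $\Sigma_2$, and a residue computation at the Neveu--Schwarz node (using the fact $\sigma_p^*\cS=0$ from the proof of Proposition \ref{pr:decomposition}) shows the limit vanishes on $\Sigma_1$ and acquires the prescribed pole of $|J|$ at the half-node $h_2$, giving a positive multiple of $\sigma^{\pi_2}_{j - k_1 + 1}$. Because the natural ordering matches the concatenation $\mathbf{\sigma}^{\pi_1} \otimes \mathbf{\sigma}^{\pi_2}$ with no permutation, the natural case follows without a sign.

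For the general case I would appeal to the iterative construction of $\mathfrak{o}^{\pi}$ in Definition \ref{def:or_for_witten_based_on_o_r+1}. If $(k_1, k_2)$ is not natural, I further degenerate one of $v_1, v_2$ along an auxiliary boundary edge, splitting it into a subvertex and a \emph{base vertex} carrying $r+1$ twist-zero boundary tails with the fixed orientation $\mathfrak{o}_{r+1}$. The natural-case result applied at this auxiliary edge, combined with the definition of $\mathfrak{o}^{\pi_i}$ as the orientation induced from a base vertex, reduces $|k_i - k_i^{\mathrm{nat}}|$ by $r$. A double induction on $|k_1 - k_1^{\mathrm{nat}}| + |k_2 - k_2^{\mathrm{nat}}|$ then completes the argument; that the two possible reduction orders agree is a consequence of the covariance of $\mathfrak{o}^{\pi_1}$ and $\mathfrak{o}^{\pi_2}$ together with the natural-case lemma applied to the resulting three-vertex graph.

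The principal obstacle is the residue computation in the natural case for sections whose two poles straddle the node. One must verify both that the restriction of the limit to $\Sigma_1$ vanishes identically---so that the apparent pole at $x_{\pi(1)}$ is absorbed into the pinching---and that the restriction to $\Sigma_2$ acquires precisely the prescribed pole at $h_2$ rather than some combination involving the bulk of $\Sigma_2$. When $r$ is even there are two real $r$-th roots and one must check that the root picked out by the grading on $\Sigma_t$ limits to the root picked out by the grading on $\Sigma_2$, which requires tracking how the grading behaves across the pinching boundary arc; this is the source of the most delicate sign bookkeeping.
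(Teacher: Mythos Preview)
Your proposal is correct and follows essentially the same route as the paper: first handle the natural case $k_1-1=\sum_{I_1}a_i$, $k_2=\sum_{I_2}a_i$ by degenerating the explicit basis $\sigma^\pi_j$ and checking that (after positive rescaling) the limit splits as $(\sigma^{\pi_1}_j,0)$ for $j<k_1$ and $(0,\sigma^{\pi_2}_{j-k_1+1})$ for $j\geq k_1$; then reduce the general case by induction on $|m_1|+|m_2|$ via an auxiliary open vertex with $r$ extra boundary tails, exactly the mechanism behind Definition~\ref{def:or_for_witten_based_on_o_r+1}. One small correction: the auxiliary vertex carries $r$ boundary tails of twist $r-2$ (plus one half-edge, also twist $r-2$), not ``twist-zero'' tails; it is the half-edge on the $v_i$ side that has twist and alternation zero.
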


Similarly, the family of orientations in Definition \ref{def:or_for_witten_based_on_o_r+1} satisfies a decomposition property along internal nodes.  To specify this, write
%\[m(\{a_i\}_{i\in I},k)= \frac{\sum a_i - k+1}{r}\]
%and
\[m^c(\{a_i\}_{i\in I}) = \frac{\sum a_i -(r-2)+ ((r-2-\sum a_i) \; (\text{mod } r)')}{r},\]
where $x \;(\text{mod }r)'$ is the unique element of $\{-1,0,\ldots,r-2\}$ congruent to $x$ modulo $r$.

Now, let $\Gamma\in\partial\Gammar$ be a graph with two vertices, an open vertex $v^o$ and a closed vertex $v^c$.  By Proposition \ref{pr:decomposition} and Remark \ref{rmk:orientation and exact sequence}, the Witten bundle $\cW_{\Gamma}$ on $\M_{\Gamma}^{1/r}$ satisfies
\begin{equation}\label{eq:det_and_exact_seq_Witten}
\det(\cW_{\Gamma})\cong (q \circ \mu^{-1})^* \bigg( \det(\cW_{v^o})\boxtimes \det(\cW_{v^c}) \bigg),
\end{equation}
and we have the following:

\begin{lemma}\label{lem:induced_or_of_Witten-internal}
There exists $\delta=\pm 1$, depending only on the choice of $\mathfrak{o}_{r+1}$, such that for any $\Gamma,v^o$, and $v^c$ as above and any order $\pi$, the orientation $\mathfrak{o}^\pi|_{\CM_\Gamma^{\hat\pi}}$ agrees with $\delta^{m^c(\{a_i\}_{i\in I^C})}\mathfrak{o}^\pi_{v^o}\boxtimes\mathfrak{o}_{v^c}$ under the isomorphism \eqref{eq:det_and_exact_seq_Witten}, where $\mathfrak{o}_{v^c}$ is the canonical complex orientation and $I^C$ are the labels of $v^c$.  Moreover, changing $\mathfrak{o}_{r+1}$ to the opposite orientation changes $\delta$ to $-\delta.$
\end{lemma}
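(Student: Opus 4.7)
By continuity and connectedness of $\CM_\Gamma^{\hat\pi}$, the ratio between $\mathfrak{o}^\pi|_{\CM_\Gamma^{\hat\pi}}$ and $\mathfrak{o}^\pi_{v^o}\boxtimes\mathfrak{o}_{v^c}$ is a constant sign $\epsilon\in\{\pm 1\}$, and covariance of the family $\{\mathfrak{o}^\pi\}$ (Definition \ref{def:covariance_bundle_or}) ensures that $\epsilon$ depends only on the combinatorial data of $\Gamma$. Using Lemma \ref{lem:Witten_for_twists_more_than_r}, shifting any internal twist $a_i$ by $r$ introduces the same canonically oriented complex line bundle factor on both sides of the equation, so $\epsilon$ is unchanged by such shifts, and we may assume $a_i\in\{0,\ldots,r-1\}$. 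The target is then to show $\epsilon=\delta^{m^c}$ for a sign $\delta$ depending only on $\mathfrak{o}_{r+1}$.

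For the base case $m^c=0$, the closed Witten bundle has rank zero and $\mathfrak{o}_{v^c}$ reduces to the canonical trivial orientation, while Observation \ref{obs:cases when pulled back from one component} provides a canonical identification $\mu^*i_\Gamma^*\cW\cong q^*\pi_2^*\cW_{v^o}$. I would verify that the inductively defined $\mathfrak{o}^\pi$ restricts to $\mathfrak{o}^\pi_{v^o}$ under this identification, yielding $\epsilon=1=\delta^0$. Compatibility with the forgetful-pullback step of Definition \ref{def:or_for_witten_based_on_o_r+1} is Observation \ref{obs:inducing}, and compatibility with the inducing-by-$r$ step follows by applying Lemma \ref{lem:induced or to boundary bundle} to a boundary degeneration on $v^o$ and observing that this open-node degeneration commutes with the internal-node degeneration defining $\Gamma$. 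When $k-1=\sum a_i$ so that $\mathfrak{o}^\pi$ is defined directly by the explicit basis $\sigma^\pi_j$, one checks that these sections on the open component extend without sign ambiguity across the internal node, since the closed component carries no nontrivial cohomology in this case.

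For the inductive step, to pass from $m^c$ to $m^c+1$ I would select $i_0\in I^C$ and shift $a_{i_0}$ to $a_{i_0}+r$, invoking Lemma \ref{lem:Witten_for_twists_more_than_r}. Both $\cW$ on the ambient moduli and $\cW_{v^c}$ acquire a factor of the canonically complex-oriented line bundle $\widetilde{\mathbb L}$, so the only remaining sign ambiguity is the transformation of $\mathfrak{o}^\pi$ under the moduli isomorphism $\kappa$. This transformation sign is, by construction, independent of the rest of the combinatorial data and equals the desired $\delta$. The main obstacle is verifying this independence: one arranges the chain of inducing steps defining $\mathfrak{o}^\pi$ so that the twist shift becomes visible at a single application of the inducing procedure, isolating the sign contribution. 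The final assertion, that replacing $\mathfrak{o}_{r+1}$ by its opposite flips $\delta$, is then immediate, since $\mathfrak{o}_{r+1}$ is the only input that controls the sign at the isolated inducing step.
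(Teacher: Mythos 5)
Your inductive step has a fatal coverage gap. You propose to increase $m^c$ by one by replacing some $a_{i_0}$ with $a_{i_0}+r$ and invoking the isomorphism $\kappa$ of Lemma \ref{lem:Witten_for_twists_more_than_r}. But the lemma is about tuples $\{a_i\}_{i\in I^C}$ with every $a_i\in\{0,\ldots,r-1\}$, and such a tuple can be the result of a single $+r$ shift from a non-negative $\vec{a}'$ only if some coordinate of $\vec{a}'$ is negative. Concretely, take $r=3$ and $I^C$ with twists $\{2,2\}$, so $m^c=1$: the only candidates for $\vec{a}'$ with $m^c(\vec{a}')=0$ would be $\{-1,2\}$ or $\{2,-1\}$, neither of which is admissible. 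The same obstruction excludes every tuple with all $a_i<r$ and $m^c\geq 1$, so your induction never leaves the base case. Independently, the central claim of the step---that the sign by which $\mathfrak{o}^\pi$ transforms under $\kappa$ is a single constant $\delta$ independent of all other data---is asserted, not proved; the orientation $\mathfrak{o}^\pi$ is built from the explicit sections $\sigma^\pi_j$ together with the $\pm r$-boundary-tail inducing procedure of Definition \ref{def:or_for_witten_based_on_o_r+1}, and neither is established anywhere to interact with $\kappa$ in a controlled way. Your parenthetical ``one arranges the chain of inducing steps so that the twist shift becomes visible at a single application'' is exactly the hard content, and it is left undone.

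The paper avoids the $\kappa$-shift entirely and stays within the standard twist range. After showing (as you do) that the sign $\epsilon$ depends only on the multiset $\{a_i\}_{i\in I^C}$, it proves a multiplicativity identity $\epsilon_{\{a_i\}_{i\in I^C}}=\epsilon_{\{a_i\}_{i\in I_1}}\,\epsilon_{\{a_i\}_{i\in I_2}\cup\{\sum_{i\in I_1}a_i\,(\mathrm{mod}\,r)\}}$ by degenerating the closed vertex into a chain of two closed vertices, reducing everything to pairs $\{a,b\}$. Your base case $\epsilon_{\{a,b\}}=1$ for $a+b<r$ (via the explicit sections and Observation \ref{obs:cases when pulled back from one component}) is then combined with the cocycle relation $\epsilon_{\{a,\,b+c\,(\mathrm{mod}\,r)\}}\epsilon_{\{b,c\}}=\epsilon_{\{a+b\,(\mathrm{mod}\,r),\,c\}}\epsilon_{\{a,b\}}$, which forces $\epsilon_{\{a,b\}}$ to be a single constant $\delta$ for $r\le a+b\le 2r-2$. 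The constancy of $\delta$, and the formula $\epsilon=\delta^{m^c}$, then follow by induction on $|I^C|$ using multiplicativity---not from any shift-by-$r$ argument. The multiplicativity identity is the piece your proposal is missing, and it is precisely what lets the paper handle tuples like $\{2,2\}$ that your induction cannot reach.
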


\begin{definition}\label{def:or_Witten}
Define $\mathfrak{o}_{r+1}$ to be the unique covariant family of orientations of $\cW\to\oCM_{0,r+1,\emptyset}^{\frac{1}{r}}$ for which the $\delta=1$ in Lemma \ref{lem:induced_or_of_Witten-internal}.  This induces orientations $\mathfrak{o}_{0,B,\vec{a}}^\pi$ on the Witten bundle for all $B$ and $\vec{a}$ by Definition \ref{def:or_for_witten_based_on_o_r+1}.
%Finally, let $\tilde{\mathfrak{o}}^\pi_{0,B,\vec{a}}$ be the orientations on the moduli spaces $\M_{0,B,\vec{a}}^{\frac{1}{r}}$, described explicitly in Notation \ref{nn:or_for_moduli_spin}. We define
%\[o_{0,B,\vec{a}} = \tilde{\mathfrak{o}}^\pi_{0,B,\vec{a}}\oplus {\mathfrak{o}}^\pi_{0,B,\vec{a}},\]
%which is a relative orientation for the Witten bundle for arbitrary $\pi.$
\end{definition}
We can now complete the proof of Theorem \ref{thm:or}.

\begin{proof}[Proof of Theorem \ref{thm:or}]
Let $\tilde{\mathfrak{o}}^\pi_{0,B,\vec{a}}$ be the orientations on the moduli spaces $\M_{0,B,\vec{a}}^{\frac{1}{r}}$, described explicitly in Notation \ref{nn:or_for_moduli_spin}, and let $\mathfrak{o}_{0,B,\vec{a}}^\pi$ be the orientations on the Witten bundle described in Definition~\ref{def:or_Witten}.  By Lemma \ref{lem:induced or to boundary bundle} and Proposition \ref{prop:or_moduli}, the relative orientation
\begin{equation}\label{eq:def_of_or}o_{0,B,\{a_i\}_{i\in I}} = \tilde{\mathfrak{o}}_{0,B,\{a_i\}_{i\in I}}^\pi\otimes\mathfrak{o}_{0,B,\{a_i\}_{i\in I}}^\pi\end{equation}
for the Witten bundle on $\oCM_{0,B,\{a_i\}_{i\in I}}^{\frac{1}{r},\hat\pi}$ is independent of the choice of $\pi$.  The same argument, together with Remark \ref{it:1_for_covariance}, shows the invariance property.

For the second item, first note that, under the notation of Lemma \ref{lem:induced or to boundary bundle} and using item \eqref{it:perm} of Proposition \ref{prop:or_moduli}, we may write
\[\mathfrak{o}^{\pi'_1}_{v_1}=o_{h_1}\boxtimes\mathfrak{o}^{\pi_1}_{v'_1},\]
where $\pi'_1$ is the extension of $\pi_1$ defined by writing $h_1$ as the last element, and the pullback of $\mathfrak{o}^{\pi_1}_{v'_1}$ is with respect to the map that forgets the half-node $n_{h_1}$.  Note also that
\[\text{dim}_\R\oCM_{v_1'}^{1/r}\equiv |B_1|-1\mod 2,\;\;\;\;\;\;\;\text{rank}_\R\cW_{v_2}\equiv |B_2| \mod 2.\]
From here, the second item is a consequence of Lemma \ref{lem:induced or to boundary bundle} and Lemma \ref{lem:moduli-induced_or}, where the sign $(-1)^{(|B_1|-1)|B_2|}$ disappears when commuting $\tilde{\mathfrak{o}}^{\pi_2}_{v_2}$ with $\mathfrak{o}^{\pi_1}_{v'_1}$.

The last item is a direct consequence of Lemmas \ref{lem:induced_or_of_Witten-internal} and \ref{lem:moduli-induced_or}.
\end{proof}
\begin{lemma}\label{lem:almost_uniq}
The properties of Theorem \ref{thm:or} characterize precisely two families of orientations: the family $\{o_{0,B,\{a_i\}_{i\in I}}\}_{B,\{a_i\}_{i\in I}}$ and the family $\{(-1)^{|B|-1}o_{0,B,\{a_i\}_{i\in I}}\}_{B,\{a_i\}_{i\in I}}$.
\end{lemma}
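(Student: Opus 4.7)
Let $\{o'_{0,B,\{a_i\}_{i\in I}}\}$ be another family satisfying the properties of Theorem~\ref{thm:or}. By covariance~(i), there exist signs $\epsilon(k,\vec a)\in\{\pm1\}$ with $o'=\epsilon\cdot o$, depending only on $k=|B|$ and the multiset $\vec a$; set $\eta(k,\vec a):=(-1)^{k-1}\epsilon(k,\vec a)$. I aim to show that $\eta$ depends only on $k$ and equals $f(2)^{k-1}$ for some $f(2)\in\{\pm1\}$, which will give precisely the two claimed families.

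First I translate properties~(ii) and~(iii) into $\eta$-relations. Comparing $i_\Gamma^*o'$ with $i_\Gamma^*o$ via~(ii), and using that $o_N$ and $o_{h_1}$ are intrinsic, one gets $\epsilon(k_1+k_2,\vec a_1\cup\vec a_2)=\epsilon(k_1,\vec a_1)\,\epsilon(k_2+1,\vec a_2)$; the parity identity $(-1)^{k_1+k_2-1}=(-1)^{k_1-1}(-1)^{k_2}$ simplifies this to
\[
\eta(k_1+k_2,\vec a_1\cup\vec a_2)=\eta(k_1,\vec a_1)\,\eta(k_2+1,\vec a_2).
\]
Since $o_N$ and the complex orientation $o_{v^c}$ are intrinsic, property~(iii) yields $\eta(k,\vec a_{\mathrm{total}})=\eta(k,\vec a_{v^o})$.

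Next I pin down $d:=\eta(1,\{0\})$ and reduce $\eta$ to a function of $k$. Applying~(iii) with $I^c=I$ to $(1,\{0,0\})$ and $(1,\{0,0,0\})$ (the closed vertex is stable with $\geq3$ internal tails, and a direct computation shows the Neveu--Schwarz half-node twist on $v^o$ is $t_1=(1-1)\bmod r=0$) yields $\eta(1,\{0,0\})=\eta(1,\{0,0,0\})=d$; combining~(ii) with the decomposition $(1,\{0\})+(0,\{0,0\})$ gives $\eta(1,\{0,0,0\})=d\cdot\eta(1,\{0,0\})=d^2$, so $d=d^2$ forces $d=1$. The same application of~(iii) with $I^c=I$ to arbitrary $(k,\vec a)$ with $l\geq2$ gives $\eta(k,\vec a)=\eta(k,\{t_k\})$ with $t_k=(k-1)\bmod r$ (the Neveu--Schwarz and Ramond cases yield the same value, since Ramond occurs iff $k\equiv0\pmod r$, in which case $t_k=r-1$ agrees). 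For $l=1$ the congruence $a\equiv k-1\pmod r$ with $a\in\{0,\dots,r-1\}$ forces $a=t_k$ automatically; for $l=0$ (which requires $k\equiv1\pmod r$), the decomposition $(1,\{0\})+(k-1,\emptyset)$ combined with $d=1$ yields $\eta(k,\emptyset)=\eta(k,\{0\})$. Hence $\eta(k,\vec a)=:f(k)$ depends only on $k$. The relation above now reads $f(k_1+k_2)=f(k_1)f(k_2+1)$; taking $k_2=1,\vec a_2=\{0\}$ gives $f(k)=f(k-1)f(2)$, so $f(k)=f(2)^{k-1}$ inductively, using $f(1)=d=1$. Writing $c:=f(2)$, one obtains $\epsilon(k,\vec a)=(-c)^{k-1}$: the case $c=-1$ gives $\epsilon\equiv1$ and $o'=o$, while $c=+1$ gives $\epsilon(k,\vec a)=(-1)^{k-1}$ and $o'=(-1)^{|B|-1}o$. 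A direct verification, using the parity identity $(-1)^{|B|-1}=(-1)^{|B_1|-1}(-1)^{|B_2|}$ in~(ii) and the $v^o$-invariance of $|B|$ in~(iii), confirms that both families actually satisfy the theorem, so neither case is vacuous.

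The main obstacle will be the independence $\eta(k,\vec a)=f(k)$ in the low-$l$ cases $l\leq1$, where property~(iii) cannot be applied directly because the closed vertex would be unstable ($l(v^c)<3$). The plan bridges these by first establishing $d=1$ using only $l\geq2$ reductions together with a single application of~(ii), and then using $d=1$ to link $l=0$ with $l=1$. Along the way one must verify the Neveu--Schwarz/Ramond dichotomy uniformly (to see that $t_k=(k-1)\bmod r$ works in both cases) and check that all auxiliary decompositions satisfy the required stability and congruence constraints.
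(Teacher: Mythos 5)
The proposal follows essentially the same route as the paper's proof: covariance reduces the sign ratio to a function $\epsilon(k,\vec a)$, item~(iii) collapses all internal twists onto a closed vertex to show independence of $\vec a$, and item~(ii) gives a multiplicative recursion in $k$ that forces $\epsilon(k,\vec a)\in\{1,(-1)^{k-1}\}$. The substitution $\eta=(-1)^{k-1}\epsilon$ is harmless but unnecessary, since $\epsilon$ already satisfies the same multiplicative identity $\epsilon(k_1+k_2,\vec a_1\cup\vec a_2)=\epsilon(k_1,\vec a_1)\epsilon(k_2+1,\vec a_2)$. Where you add genuine value beyond the paper's terse four lines is in: (a) carefully treating the $l\le 1$ cases, where item~(iii) cannot be applied directly (the closed vertex would be unstable), by bridging $l=0$ to $l=1$ via a boundary decomposition together with the pre-established $d=1$; (b) verifying the Ramond/Neveu--Schwarz dichotomy uniformly through $t_k=(k-1)\bmod r$; and (c) checking that both candidate families actually satisfy the theorem, so the word ``precisely'' is justified.

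One concrete slip: in the final step you write ``taking $k_2=1,\vec a_2=\{0\}$ gives $f(k)=f(k-1)f(2)$.'' The constraint~\eqref{eq:B12} requires $\sum_{i\in I_2}a_i\equiv|B_2|=k_2=1\pmod r$, so $\vec a_2=\{0\}$ is not admissible for $r\ge 2$; you should take, e.g., $\vec a_2=\{1\}$ and $\vec a_1=\{(k-2)\bmod r\}$. This does not threaten the argument, since you have already established that $\eta$ depends only on $k$, so any congruence- and stability-admissible choice of $(\vec a_1,\vec a_2)$ yields $f(k)=f(k-1)f(2)$ for $k\ge 2$; but as written the specific decomposition does not exist. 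A similar care is needed whenever a boundary decomposition is invoked: the second vertex always carries the extra half-edge $h_2$, so the moduli space appearing is $\oCM^{1/r}_{0,\{h_2\}\cup B_2, I_2}$ with $|B_2|+1$ boundary points, and the congruence on $\vec a_2$ must be checked against $|B_2|$, not $|B_2|+1$.
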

\begin{proof}
Suppose that $\{o'_{0,B,\{a_i\}_{i\in I}}\}_{B,\{a_i\}_{i\in I}}$ is a different family of orientations satisfying the requirements of Theorem \ref{thm:or}. Let $\delta_{B,\vec{a_i}}\in\{\pm 1\}$ be the ratio of $o'_{0,B,\{a_i\}_{i\in I}}$ to $o_{0,B,\{a_i\}_{i\in I}}$.
Then item \eqref{it:renaming} shows that $\delta_{B,\vec{a_i}}=\delta_{|B|,\vec{a_i}}$.
Item \eqref{it:or_ind_to_interior} shows that $\delta_{|B|,\vec{a_i}} = \delta_{|B|,\{\sum a_i \mod r\}}$.  Since $\sum a_i=|B|-1 \mod r$ ,we denote $\delta_{|B|,\{\sum a_i \mod r\}}$ by $\delta_{|B|}$.  Finally, item \eqref{it:or-induced to bdry} shows that $\delta_{a+b}=\delta_{a+1}\delta_{b}.$ Thus, $\delta_1=0$ and $\delta_k=\delta_2^{k-1}$, where $\delta_2\in\pm1.$ The claim follows.
\end{proof}

\begin{rmk}\label{rmk:uniqueness}
This ambiguity from Lemma \ref{lem:almost_uniq} is killed by specifying the orientation of Witten's bundle for a single real one-dimensional moduli space: the additional requirement is that if we orient $\oCM_{0,2,\{1\}}^{\hat\pi}$ for $\pi=(1,2)$ by $\tilde{\mathfrak{o}}^\pi$, then the bundle $\cW\to\oCM_{0,2,\{1\}}^{\hat\pi}$ is oriented so that sections that are positive with respect to the grading on the arc from $x_1$ to $x_2$.
\end{rmk}

\subsection{Proof of Lemmas \ref{lem:induced or to boundary bundle} and \ref{lem:induced_or_of_Witten-internal}}\label{subsec:ap_some_more_or}

We now return to the proofs of the two lemmas from the previous subsection.

\begin{proof}[Proof of Lemma \ref{lem:induced or to boundary bundle}]
We first treat the case \[k_1-1=\sum_{i\in I_1}a_i,\qquad k_2=\sum_{i\in I_2}a_2.\]
For convenience, assume that $B=[k]$, that $B_1=[k_1]$, and that $\pi$ is the standard order.  We show that if $j < k_1$, then up to rescaling by a positive function, the $j$th basis element of $\mathbf{\sigma}^\pi$ converges to $\mathbf{\sigma}^{\pi_1}_{0,k_1,\{a_i\}_{i\in I_1};j}$ as one approaches a point in $\CM_\Gamma^{1/r,\hat\pi}$, and that if $j \geq k_1$, it converges to $\mathbf{\sigma}^{\pi_1}_{0,1+k_2,\{a_i\}_{i\in I_2};j-k_1+1}$.  This verification implies the lemma.

We assume that $k_1,k_2>0$, since otherwise, the result is straightforward.  Recall that a vector $u\in(\cW_{0,k,\vec{a}})_{\Sigma}$ for $\Sigma\in\CM_\Gamma^{1/r,\hat\pi}$ can be written as $u=u_1\boxplus u_2$ with $u_i\in(\cW_{v_i})_{\Sigma_i}$.  To calculate a coordinate expression for $u_i$, let $\{\Sigma_t\}_{t \in (0,1/2)}$ be a path in $\CMr$ such that $\lim_{t\to 0}\Sigma_t=\Sigma.$ One can model $(\Sigma_t)_{t>0}$ on the upper half-plane, preserving the complex orientation, such that all markings of $\Sigma_{3-i}$ tend to $0$ as $t\to 0$ but all markings of $\Sigma_i$ tend to finite, nonzero limits.  The resulting marked upper half-plane is a model for $\Sigma_i$ in which $x_{h_i}$ is mapped to the origin. If the vectors $u_t\in\cW_{\Sigma_t}$ converge to $u$, then their expressions in the coordinates induced from the upper half-plane model converge to the coordinate expression for $u_i$.  Moreover, as $t\to 0,$ the ratio between any two markings going to zero is bounded away from zero, since otherwise, $\Sigma\in\oCM_\Gamma^{1/r}\setminus\CM_\Gamma^{1/r}$.  %For more details on this perspective on limits of stable disks, see, for example, \cite[Appendix C]{McDuff}.

Suppose, first, that $j\geq k_1$, so that $1+j$ is a marked point of $\Sigma_2$.  Recall that
\[\sigma^\pi_j= \left((-1)^{r+1}\prod_{i\in [l]}\xi_i^{a_i}\prod_{i\in[k]}\xi_{i(i+1)}^{-1}\xi_{1(1+j)}^r\right)^{\frac{1}{r}}\]
and we have
\[\sigma^{\pi_2}_{1+j-k_1}= \left((-1)^{r+1}\prod_{i\in I_2}\xi_i^{a_i}\prod_{i\in[k_2+1]}(\xi^{\pi_2}_{i(i+1)})^{-1}(\xi^{\pi_2}_{1(2+j-k_1)})^r\right)^{\frac{1}{r}},\]
where $x_{\pi_2(1)}=x_{h_2}$.  Then, by the discussion above,
\[\left(\frac{\prod_{h\in[k_1-1]}(x_{\pi(h+1)}-x_{\pi(h)})}{\prod_{h\in I_1}(\sqrt{-1}(\bar{z}_h-z_h))^{a_h}}\right)^{\frac{1}{r}}(\sigma^\pi_j)_{\Sigma_t}\to\sigma^{\pi_2}_j.\]
The function in the above by which we multiply $(\sigma^\pi_j)_{\Sigma_t}$ is positive. Moreover, for $t$ close enough to zero, the ratios between any two factors $x_{\pi(h+1)}-x_{\pi(h)}$ and $\bar{z}_h-z_h$ are bounded, hence they tend to zero uniformly. In fact, since $\sum_{h\in I_1}a_h=k_1-1$, the expression \[\frac{\prod_{h\in[k_1-1]}(x_{\pi(h+1)}-x_{\pi(h)})}{\prod_{h\in I_1}(\sqrt{-1}(\bar{z}_h-z_h))^{a_h}}\] has a nonzero limit.  Calculating the same expression but in a gauge-fixing for which the points of $\Sigma_2$ approach zero as $t\to 0$, while those of $\Sigma_1$ have distinct finite images, the same argument shows that the limiting section, when projected to $\cW_{v_1}$, has zero of order $\frac{1}{r}.$

Suppose, now, that $j< k_1$, so that $1+j$ is a marked point of $\Sigma_1$.  Recall that
\[\sigma^{\pi_1}_j= \left((-1)^{r+1}\prod_{i\in I_1}\xi_i^{a_i}\prod_{i\in[k_1]}(\xi^{\pi_1}_{i(i+1)})^{-1}(\xi^{\pi_1}_{1(1+j)})^r\right)^{\frac{1}{r}}.\]
Write $x_{h_1}$ for the half-node of $\Sigma_1$.  Again, we work with the upper half-plane model for $\Sigma_1,$ where $x_{h_1}$ is mapped to the origin and the orientation of the image of $\partial\Sigma_1$ agrees with the standard real orientation of $\R$.  Consider again a path of smooth surfaces $\Sigma_t$, where $t\in(0,\frac{1}{2})$, that converges to $\Sigma$. Choose $i_j\in I_j$ and write
\[C_{i_1i_2}=-\frac{(\bar{z}_{i_1}-z_{i_1})(\bar{z}_{i_2}-z_{i_2})}{(z_{i_1}-z_{i_2})(\bar{z}_{i_1}-\bar{z}_{i_2})}.\]
This is a well-defined, positive function on $\CMr$, so it has a positive $r$th root; furthermore, it vanishes at $\CM_\Gamma$.  The same considerations as above reveal that the limit of $\frac{\sigma^\pi_j(\Sigma_t)}{C_{i_1i_2}^{1/r}}$, when projected on $\cW_{v_1}$, is nonzero, but its projection on $\cW_{v_2}$ vanishes to order to order $1-1/r$.
Moreover, the projection of the limit on $\cW_{v_1}$ has the zero profile of $\sigma_j^{\pi_1},$ hence they agree up to multiplication by a real function, by degree reasons and involution invariance.  This function is positive, since both sections are positive on the arc from $x_1$ to $x_{1+j}.$

We now turn to the general case. Write
\[m_1 =\frac{k_1-1-\sum_{i\in I_1} a_i}{r}, \qquad m_2 =\frac{k_2-\sum_{i\in I_2} a_i}{r}  .\]
The proof is by induction on $m=|m_1|+|m_2|$.  The case $m=0$ has been treated above. Note that, perhaps by applying Observation \ref{obs:inducing}, we may assume $I_1$ and $I_2$ are nonempty. Suppose that the claim has been proven for $m-1$, and assume $m_2<0$.

Let $B'_2\supseteq B_2$ be a set of size $k'_2=|B_2|+r,$ and write $B' = B_1\cup B'_2.$ Consider the graph $\Gamma'$ obtained by attaching to $\Gamma$ an open vertex, connected to $v_2$ by an edge whose half-edge $h$ in $v_2$ has $\tw(h)=\alt(h)=0$, and that has $r$ boundary tails labeled by $B'\setminus B$.  Let $\pi'$ be an order on $B'$ extending $\pi$, such that the last elements are those of $B\setminus B'$ and are ordered so that tails belonging to the same open vertex are labeled consecutively. Let $\Gamma_2$ be the component of $v_2$ in $\detach_{e}(\Gamma').$
By the construction of the induced orientation, we have
\begin{equation}\label{eq:1_for_induced_or}
\mathfrak{o}^{\pi'}|_{\oCM_{\Gamma'}^{\hat\pi'}}=\mathfrak{o}^\pi|_{\oCM_\Gamma^{\hat\pi}} \boxtimes \mathfrak{o}'_{r+1},\end{equation}
where $\mathfrak{o}'_{r+1}$ is the orientation for the Witten bundle of the new component with respect to the order induced from $\pi'$ after putting the node as the first element.  On the other hand, by the induction assumption, we have
\begin{equation}\label{eq:2_for_induced_or}\mathfrak{o}^{\pi'}|_{\oCM_{\Gamma'}^{\hat\pi'}} = \mathfrak{o}^{\pi_1}_{0,B_1,I_1}\boxtimes\mathfrak{o}^{\pi'_2}_{0,\{h_2\}\cup B'_2,I_2},\end{equation} where $\pi'_2$ is the restriction of $\pi_2$ to $B'_2.$
By the construction of the induced orientation, this time with respect to $\oCM_{0,\{h_2\}\cup B'_2,\{a_i\}_{i\in I_2}}^{1/r,\pi'_2},$ we have
\begin{equation}\label{eq:3_for_induced_or}\mathfrak{o}^{\pi'_2}|_{\oCM_{\Gamma_2}^{\hat\pi'_2}}=\mathfrak{o}^{\pi_2} \boxtimes\mathfrak{o}'_{r+1}.\end{equation}
Putting these observations together, we see that
\begin{equation}\label{eq:4_for_induced_or}\mathfrak{o}^\pi|_{\CM_\Gamma^{\hat\pi}}=\mathfrak{o}^{\pi_1}_{0,B_1,I_1}\boxtimes\mathfrak{o}^{\pi_2}_{0,\{h_2\}\cup B_2,I_2}.\end{equation} %in this case.

The case $m_2>0$ is treated similarly to the above, so we omit it.  The remaining case is $m_2=0$.  In this case, $m_1\neq 0$, and the proof is similar, so we merely remark on the changes.  First, we work with a graph $\Gamma'$ obtained from $\Gamma$ by attaching the new vertex with $r$ boundary tails, no internal tails and one legal half edge of twist $r-2$ to $v_1$.  We choose the order $\pi'$ so that boundary tails of $v_1$ comes first and those of $v_2$ come last. If we let $e_1$ be the edge of $v_2$ and $e_2$ the edge of the new vertex, then we compare the orientation expressions for $\cW\to\oCM_{\Gamma'}^{\hat{\pi'}}$ obtained in two ways. First, we induce orientation from $\cW\to\oCM_{d_{e_1,e_2}\Gamma'}^{\hat{\pi'}}$ to $\cW\to\oCM_{d_{e_1}\Gamma'}^{\hat{\pi'}}$ via $\text{Ind}$, and then to $\cW\to\oCM_{\Gamma'}^{\hat{\pi'}},$ to obtain (using the above notation)
\[\mathfrak{o}^{\pi'}|_{\oCM_{\Gamma'}^{\hat{\pi'}}}=\epsilon_1(-1)^{rk_2}\mathfrak{o}^{\pi_1}\boxtimes\mathfrak{o}^{\pi_2}\boxtimes\mathfrak{o}'_{r+1}.\]
Here, $\epsilon_1$ is the sign appearing from inducing the orientation from $\cW\to\oCM_{d_{e_1}\Gamma'}^{\hat{\pi'}}$ to $\cW\to\oCM_{\Gamma'}^{\hat{\pi'}}$ and $(-1)^{rk_2}$ comes from the fact that, in $\pi'$, the tails of the new vertex are not last, so we have to perform a shift in $\pi'$ in order to calculate the sign induced using $\text{Ind}$.  Using Lemma \ref{lem:Witten_is_orientable}, we see that the total added sign is $(-1)^{rk_2}.$  Next, we induce orientation from $\cW\to\oCM_{d_{e_1,e_2}\Gamma'}^{\hat{\pi'}}$ to $\cW\to\oCM_{d_{e_2}\Gamma'}^{\hat{\pi'}},$ and then, via $\text{Ind}$, to $\cW\to\oCM_{\Gamma'}^{\hat{\pi'}}$.  The result is
\[\mathfrak{o}^{\pi'}|_{\oCM_{\Gamma'}^{\hat{\pi'}}}=\epsilon_2\mathfrak{o}^{\pi_1}\boxtimes\mathfrak{o}'_{r+1}\boxtimes\mathfrak{o}^{\pi_2}=
\epsilon_2(-1)^{rk_2}\mathfrak{o}^{\pi_1}\boxtimes\mathfrak{o}^{\pi_2}\boxtimes\mathfrak{o}'_{r+1},\]
in which $\epsilon_2$ is the sign that appears from inducing the orientation to $\cW\to\oCM_{d_{e_2}\Gamma'}^{\hat{\pi'}}$ and the sign $(-1)^{rk_2}$ comes from changing the order of $\mathfrak{o}^{\pi_2}$ and $\mathfrak{o}'_{r+1}.$
One of $\epsilon_1$ or $\epsilon_2$ is determined by induction to be $1$, so the other is also $1$. The case $m_2=0$ is thus proven.
\end{proof}

\begin{proof}[Proof of Lemma \ref{lem:induced_or_of_Witten-internal}]
Write
\[\mathfrak{o}^\pi|_{\CM_\Gamma^{\hat\pi}} = \epsilon^\pi_{\{a_i\}_{i\in I^C},\{a_i\}_{i\in I^O},B}\mathfrak{o}^\pi_{v^o}\boxtimes\mathfrak{o}_{v^c},\]
where $I^O$ are the labels in the open part and $I^C$ in the closed part.  By covariance $\epsilon^\pi_{\{a_i\}_{i\in I^C},\{a_i\}_{i\in I^O},B}$ is independent of $\pi$ and can be written as $\epsilon_{\{a_i\}_{i\in I^C},\{a_i\}_{i\in I^O},|B|}.$
We prove the lemma by showing:
\begin{enumerate}[(1)]
\item $\epsilon_{\{a_i\}_{i\in I^C},\{a_i\}_{i\in I^O},B}=\epsilon_{\{a_i\}_{i\in I^C}},$ meaning that it depends only on $\{a_i\}_{i\in I^C}.$
\item If $I^C=I_1\cup I_2$ where neither $I_1$ nor $I_2$ is empty and $|I_1|\geq 2$, then
\[\epsilon_{\{a_i\}_{i\in I^C}}=\epsilon_{\{a_i\}_{i\in I_1}}\epsilon_{\{a_i\}_{i\in I_2}\cup\{\sum_{i\in I_1}a_i \text{ mod } r\}}.\]
Thus, $\epsilon$ is fully determined by its value on pairs of elements.
\item If $a+b<r$, then $\epsilon_{\{a,b\}}=1$.
\item $\epsilon_{\{a,b+c (\text{mod } r)\}}\epsilon_{\{b,c\}} = \epsilon_{\{a+b (\text{mod } r),c\}}\epsilon_{\{a,b\}},$
and when $a<r-1$ but $a+b \geq r$
\[\epsilon_{\{a,b\}}=\epsilon_{\{1,a+b-r\}}\epsilon_{\{a,b\}} = \epsilon_{\{1+a,b\}}\epsilon_{\{1,a\}}=\epsilon_{\{1+a,b\}},\]
so that for $r\leq a+b\leq 2r-2,~\epsilon_{\{a,b\}}=\delta\in\{\pm1\}$ is a constant.
\item
$\epsilon_{\{a_i\}_{i\in I^C}}=\delta^{m^c(\{a_i\}_{i\in I^C})}.$
\end{enumerate}

For the first item, it is enough to show that $\epsilon_{\{a_i\}_{i\in I^C},\{a_i\}_{i\in I^O},B} = \epsilon_{\{a_i\}_{i\in I^C},\{a_i\}_{i\in I^{'O}},B'},$ where $B\subset B'$ and $I^O\subset I^{'O}.$ To show this, consider the graph $\Gamma'$ consisting of two open vertices, $v^o$ and $v'$, and a closed vertex, where the open vertices are connected by an edge $e'$ whose half-edge $h$ in $v^o$ has $\tw(h)=\alt(h)=0$, and the component of $v^o$ in $\detach_{e'}(\Gamma')$ is $\Gamma.$ Denote the internal edge of $\Gamma$ (and of $\Gamma'$) by $e.$ Let $\pi'$ be an order extending $\pi$ to $B'$ such that the first elements are those of $B.$ By abuse of notation, denote also by $\pi'$ the restriction of $\pi$ to $B'\setminus B$.  We calculate the orientation of $\cW|_{\CM_{\Gamma'}}$ in two ways.
First, by the definition of $\epsilon$ applied to the moduli space of the smoothing of $\Gamma'$ along $e'$, we have
\[\mathfrak{o}^{\pi'}|_{\CM^{\hat{\pi'}}_{d_{e'}\Gamma'}} = \epsilon_{\{a_i\}_{i\in I^C},\{a_i\}_{i\in I^{ O'}},B'}\mathfrak{o}^{\pi'}_{0,B',\{a_i\}_{i\in I^{'O}\cup\{\sum_{i\in I^C} a_i(\text{ mod }r)\}}}\boxtimes\mathfrak{o}_{v^c}.\]
Applying Lemma \ref{lem:induced or to boundary bundle} to $\oCM^{\frac{1}{r}}_{0,B',\{a_i\}_{i\in I^{'O}}\cup\{\sum_{i\in I^C} a_i \text{ mod } r\}}$ gives
\[\mathfrak{o}^{\pi'}|_{\CM^{\hat{\pi'}}_{\Gamma'}}=\epsilon_{\{a_i\}_{i\in I^C},\{a_i\}_{i\in I^{ O'}},B'}
\mathfrak{o}^\pi_{v^o}\boxtimes\mathfrak{o}^{\pi'}_{v'}\boxtimes
\mathfrak{o}_{v^c}.
\]
On the other hand, by Lemma \ref{lem:induced or to boundary bundle} applied to $\oCM^{\frac{1}{r}}_{0,B',\{a_i\}_{i\in I^C\cup I^{'O}}}$, we have
\[\mathfrak{o}^{\pi'}|_{\CM^{\hat{\pi'}}_{d_{e}\Gamma'}}=\mathfrak{o}^\pi_{{0,B,\{a_i\}_{i\in I^C\cup I^{O}}}}\boxtimes\mathfrak{o}^{\pi'}_{v'}.\]
The claim now follows since, by the definition of $\epsilon$, we have
\[\mathfrak{o}^{\pi'}|_{\CM^{\hat{\pi'}}_{\Gamma'}}=\epsilon_{\{a_i\}_{i\in I^C},\{a_i\}_{i\in I^{ O}},B}
\mathfrak{o}^\pi_{v^o}\boxtimes\mathfrak{o}_{v^c}\boxtimes\mathfrak{o}^{\pi'}_{v'}.
\]

For the second item, first note that if $\Lambda$ is a connected, closed $r$-spin graph consisting of two vertices $v_1$ and $v_2$ and an edge between them, then
\[\mathfrak{o}^\pi|_{\CM_\Lambda} = \mathfrak{o}_{v_1}\boxtimes\mathfrak{o}_{v_2},\] where all the orientations in the equation are the canonical complex orientations.  Consider a graph $\Gamma'$ with two internal closed vertices $v_1$ and $v_2$ and an open vertex $v^o$, where $v^o$ is connected to $v_2$ by $e_2$ and $v_2$ is connected to $v_1$ by $e_1$.  We again calculate the orientation of $\cW\to\CM_{\Gamma'}$ in two ways.  First, by the definition of $\epsilon$,
\[\mathfrak{o}^\pi|_{\CM^{\hat{\pi}}_{d_{e_1}\Gamma}} = \epsilon_{\{a_i\}_{i\in I^C}}\mathfrak{o}^\pi_{v^o}\boxtimes\mathfrak{o}_{v^c},\] where $v^c$ is the closed vertex of $d_{e_1}\Gamma$ and $I^C$ are the labels of its tails.  Then
\[\mathfrak{o}^\pi|_{\CM^{\hat{\pi}}_{\Gamma}} = \epsilon_{\{a_i\}_{i\in I^C}}\mathfrak{o}^\pi_{v^o}\boxtimes\mathfrak{o}_{v_1}\boxtimes\mathfrak{o}_{v_2}.\]
On the other hand, by the definition of $\epsilon,$
\[\mathfrak{o}^\pi|_{\CM^{\hat{\pi}}_{d_{e_2}\Gamma}} = \epsilon_{\{a_i\}_{i\in I_1}}\boxtimes\mathfrak{o}^\pi_{v^{'o}}\boxtimes\mathfrak{o}_{v_1},\] where $v^{'o}$ is the open vertex of $d_{e_2}\Gamma$ and $I_i$ are the labels of the tails on $v_i$.  Again by the definition of $\epsilon$,
\[\mathfrak{o}^\pi|_{\CM^{\hat{\pi}}_{\Gamma}} = \epsilon_{\{a_i\}_{i\in I_1}}\epsilon_{\{a_i\}_{i\in I_2}\cup\{\sum_{i\in I_1}a_i(\text{ mod }r)\}}\mathfrak{o}^\pi_{v^o}\boxtimes\mathfrak{o}_{v_1}\boxtimes\mathfrak{o}_{v_2},\] as claimed.

For the third item, we use Observation \ref{obs:cases when pulled back from one component}.  Let $I=[l]$ and $B=[k]$ and assume that $k-1=\sum_{i\in [l]}a_i$.  Suppose $I'\subseteq [l]$ is such that $\sum_{i\in I'} a_i\leq r-1$, and let $\Gamma$ be the graph with a closed vertex $v^c$ containing exactly the tails labeled $I'$ and an internal edge to the open vertex $v^o$ with internal tails labeled by $[l]\setminus I'$ and boundary tails labeled $[k]$.  We claim, in this situation, that the two orientations $\mathfrak{o}^\pi|_{\oCM_{\Gamma}^{\hat\pi}}$ and $\mathfrak{o}^\pi_{0,k,\{a_i\}_{i\in[l]\setminus I'}\cup\{\sum a_i\}_{i\in I'}}$ on $\cW\to\oCM_{\Gamma}^\pi$ agree (where for convenience we omit the pullback maps from the notation).

To prove this claim, write $a = \sum_{i\in I'} a_i$, and let $\sigma_j = \sigma^\pi_{0,k,\{a_i\}_{i\in[l]};j}$ and $\sigma'_j=\sigma^\pi_{0,k,\{a_i\}_{i\in[l]\setminus I'}\cup\{a\};j}$. Then it is a direct computation to verify that, when $\Sigma'\to\Sigma \in\CM_\Gamma^\pi$,
the section $(\sigma_j)_{\Sigma'}$ converges to a section $(\hat{\sigma}_j)_{\Sigma}$ of $\cW_{\Sigma}$ that is the pullback of $\sigma'_j.$ Indeed, denoting by $z_{h^o}$ the half-node in the disk component of $\Sigma,$ the projection to $\cW_{v^o}$ of the limit of $(\sigma_j)_{\Sigma'}$ is
\[\left(
\left(\frac{\sqrt{-1}(\bar{z}_{h^o}-z_{h^o})dw}{(w-z_{h^o})(w-\bar{z}_{h^o})}\right)^{a}
\prod_{h\in [l]\setminus I'}\xi_h^{a_h/r}\prod_{h\in[k]}\xi_{h(h+1)}^{-1}\xi_{1(1+j)}^r\right)^{\frac{1}{r}},\]
as claimed. Thanks to the first item, this claim implies the third item.

The fourth item is a direct consequence of the second item, by partitioning the set $\{a,b,c\}$ in two different ways.  Its second part uses the third item, applied twice to $\{1,a,b\}$.

The last item follows by induction on $|I^C|\geq 2$.  The third and fourth items serve as the base case. Suppose, then, that we have shown the claim for $|I^C|=n$, and let $|I^C|=n+1.$ Write $I^C = I_1\cup I_2$ with $I_2=\{a,b\}$, and write $c= a+b \mod r$.  Then
\begin{equation*}\epsilon_{\{a_i\}_{i\in I^C}} = \epsilon_{\{a_i\}_{i\in I_1}\cup\{c\}}\epsilon_{a,b}=(-1)^{p},\end{equation*}
where the power $p$ is given by
\begin{align*}
p = &\frac{c+\sum a_i-(r-2)+(r-2-c-\sum_{i\in I_1} a_i\;(\text{mod }r)')}{r}\\
& \hspace{1cm} + \frac{a + b -(r-2) +(r-2-a-b)(\text{mod } r)'}{r}.
\end{align*}
%in which, as above, $x \;(\text{mod } r)' \in \{-1,0,\ldots, r-2\}$ is defined to be congruent to $x$ modulo $r$.
The induction now follows from the definition of $m^c$ and the two equations
\[r-2-c-\sum_{i\in I_1} a_i \;(\text{mod }r)'=r-2-\sum_{i\in I^C} a_i \;(\text{mod }r)',\]
\[c + (r-2-a-b)(\text{mod }r)' = r-2.\]

The ``moreover" statement of the lemma is straightforward from the definitions. Indeed, when $k=1+\sum a_i$, the orientation is defined without $\mathfrak{o}_{r+1}$, so it does not change when $\mathfrak{o}_{r+1}$ changes. For general $k'$, the construction of orientations in Definition \ref{def:or_for_witten_based_on_o_r+1} uses the map $\text{ind}_{k\to k'}$, and it is immediate from the definition of $\text{ind}_{k\to k'}$ that changing $\mathfrak{o}_{r+1}$ changes the orientation by a factor of $(-1)^{\frac{k'-\sum_{i\in I} a_i -1}{r}}$.  The claim now follows from noting that
\[\frac{\sum_{i\in I} a_i - k'+1}{r}=
m^c\left(\{a_i\}_{i\in I^C}\right)+\frac{\left\{\sum_{i\in I^C}a_i \; (\text{mod } r)\right \}+\sum_{i\in I^O} a_i - k'+1}{r}
.\]
\end{proof}

\bibliographystyle{abbrv}
\bibliography{OpenBiblio}

\end{document}